\def\BibTeX{{\rm B\kern-.05em{\sc i\kern-.025em b}\kern-.08em
    T\kern-.1667em\lower.7ex\hbox{E}\kern-.125emX}}
\newtheorem{lem}{Lemma}
\crefname{lem}{Lemma}{Lemmas}
\newtheorem{thm}{Theorem}
\crefname{thm}{Theorem}{Theorems}
\crefname{cor}{Corollary}{Corollaries}
\newtheorem{prop}{Proposition}
\crefname{prop}{Proposition}{Propositions}
\theoremstyle{definition}
\newtheorem{rmk}{Remark}
\crefname{rmk}{Remark}{Remarks}
\newtheorem{defn}{Definition}
\crefname{defn}{Definition}{Definitions}
\newtheorem{assum}{Assumption}
\crefname{assum}{Assumption}{Assumptions}
\crefname{appendix}{Appendix}{Appendices}
\crefname{section}{Section}{Sections}
\crefname{table}{Table}{Tables}
\newtheorem{ex}{Example}
\crefname{ex}{Example}{Examples}
\newcommand\xqed[1]{
	\leavevmode\unskip\penalty9999 \hbox{}\nobreak\hfill\quad\hbox{#1}}
\newcommand\eoe{\xqed{$\triangle$}}
\def\usetikz{0} 
	\pgfplotsset{compat=1.14}
	\tikzset{
	    png export/.style={
			external/system call/.add={}
	        && convert -density 600 -transparent white "\image.pdf" "\image.png", 
	        /pgf/images/external info,
	        /pgf/images/include external/.code={%
	            \includegraphics
	                [width=\pgfexternalwidth,height=\pgfexternalheight]
	                {##1.png}%
	        },
		} 
	}
\newcommand{\setreal}{\mathbb{R}}
\newcommand{\setnat}{\mathbb{N}}
\newcommand{\transp}{\mathsf{T}}
\newcommand{\taumin}{\underline{\tau}}
\newcommand{\taumax}{\overline{\tau}}
\DeclareMathOperator*{\diag}{diag}
\DeclareMathOperator*{\card}{card}
\newcommand{\estnot}[1]{{\hat{#1}}}
\newcommand{\yest}{\estnot{v}}
\newcommand{\west}{\estnot{w}}
\newcommand{\thetaest}{\estnot{\theta}}
\newcommand{\vest}{\estnot{v}}
\newcommand{\mest}{\estnot{m}}
\newcommand{\hest}{\estnot{h}}
\newcommand{\xest}{\estnot{x}}
\newcommand{\maxcondest}{\estnot{\mu}}
\newcommand{\parvest}{\estnot{\parv}}
\newcommand{\parwest}{\estnot{\parw}}
\newcommand{\virt}[1]{{\tilde{#1}}}
\newcommand{\xvirt}{\virt{x}}
\newcommand{\vvirt}{\virt{v}}
\newcommand{\wvirt}{\virt{w}}
\newcommand{\parvvirt}{\virt{\parv}}
\newcommand{\parwvirt}{\virt{\parw}}
\newcommand{\fvvirt}{\virt{f}}
\newcommand{\fwvirt}{\virt{g}}
\newcommand{\Pvirt}{M}
\newcommand{\ratevirt}{\lambda}
\newcommand{\parv}{\theta}
\newcommand{\Parv}{\Theta}
\newcommand{\parw}{\eta}
\newcommand{\Parw}{H}
\newcommand{\bv}{a}
\newcommand{\bw}{b}
\newcommand{\Pv}{M_{v}}
\newcommand{\ratev}{\lambda_v}
\newcommand{\Pint}{M_{w}}
\newcommand{\rateint}{\lambda_w}
\newcommand{\vhalf}{\rho}
\newcommand{\slope}{\kappa}
\newcommand{\mean}{\zeta}
\newcommand{\std}{\chi}
\newcommand{\col}{\mathrm{col}}
\newcommand{\gain}{\gamma}
\newcommand{\vmax}{\overline{v}}
\newcommand{\vmin}{\underline{v}}
\newcommand{\umax}{\overline{u}}
\newcommand{\Pmin}{\underline{p}}
\newcommand{\Pmax}{\overline{p}}
\newcommand{\Mmin}{\underline{m}}
\newcommand{\Mmax}{\overline{m}}
\newcommand{\sat}{\varsigma}
\newcommand{\maxcond}{\mu}
\newcommand{\maxcondint}{\maxcond}
\newcommand{\nernst}{\nu}
\newcommand{\nernstint}{\nernst}
\newcommand{\nv}{{n_v}}
\newcommand{\nw}{{n_w}}
\newcommand{\ny}{{n_v}}
\newcommand{\nparv}{{n_{\theta_v}}}
\newcommand{\nparw}{{n_{\parw}}}
\newcommand{\Iapp}{u}
\newcommand{\Iint}{I}
\newcommand{\Isyn}{I}
\newcommand{\INa}{I_{\text{Na}}}
\newcommand{\IK}{I_{\text{K}}}
\newcommand{\IL}{I_{\text{L}}}
\newcommand{\Na}{{\textrm{Na}}}
\newcommand{\K}{{\textrm{K}}}
\newcommand{\Ca}{{\textrm{Ca}}}
\newcommand{\GABA}{{\textrm{G}}}
\newcommand{\Leak}{{\textrm{L}}}
\newcommand{\ion}{{\rm{ion}}}
\newcommand{\pre}{p}
\newcommand{\syn}{{\rm{syn}}}
\newcommand{\app}{{}}
\begin{document}

\title{Adaptive observers for biophysical neuronal circuits}

\author{Thiago B. Burghi and Rodolphe Sepulchre
\thanks{Submitted to IEEE Transactions
on Automatic Control.
The research leading to these results has received
funding from the European Research Council under the Advanced ERC
Grant Agreement Switchlet n.670645.}
\thanks{
Thiago B. Burghi and Rodolphe Sepulchre are with the Department of
Engineering, Control Group, University of Cambridge, CB2 1PZ, UK
(e-mails: tbb29@cam.ac.uk, r.sepulchre@eng.cam.ac.uk)}}

\maketitle


\begin{abstract}

This paper presents adaptive observers for online state and parameter estimation of a class of nonlinear systems motivated by biophysical models of neuronal circuits. We first present a linear-in-the-parameters design that solves a classical recursive least squares problem. Then, building on this simple design, we present an augmented adaptive observer for models with a nonlinearly parameterized internal dynamics, the parameters of which we interpret as structured uncertainty. We present a convergence and robustness analysis based on contraction theory, and illustrate the potential of the approach in neurophysiological applications by means of numerical simulations.
\end{abstract}
\begin{IEEEkeywords}
Adaptive observers, Nonlinear systems, Conductance-based models, Contraction theory, Neuroscience.

\end{IEEEkeywords}

\section{Introduction}

With the development and refinement of neural recording technology, controlling the nervous system at the cellular scale may soon become possible. Techniques such as voltage imaging \cite{knopfel_optical_2019} promise to deliver simultaneous subthreshold recordings of large biological neural networks, opening up new possibilities for the design of brain-machine interfaces \cite{nicolas-alonso_brain_2012}. But while large-scale technologies are still maturing, closed-loop control of small living neuronal circuits has been a reality since the development of the \textit{dynamic clamp} \cite{sharp_dynamic_1993} electrophysiology technique. Even though the control of such small circuits is not yet done in a systematic fashion, it has enabled important scientific discoveries related to the electrochemical process of neuromodulation  \cite{marder_neuromodulation_2014}.

The systematic control of small neural circuits is an open problem \cite{drion_neuronal_2015} that will only become more challenging as the scale of the circuits is increased. The main bottleneck is the ever changing nature of living neurons \cite{sorrell_brainmachine_2021}, which implies that any model-based approach to neuronal control must consider online estimation methods. Any such estimation method should deal with the spiking nature of electrophysiological signals, and the consequent nonlinearity of state-space neuronal models \cite{izhikevich_dynamical_2007}. In particular, \textit{conductance-based models}, introduced in the seminal work \cite{hodgkin_quantitative_1952}, have a large number of uncertain parameters and unmeasured states, and dealing with this issue has been an important modelling challenge \cite{hille_ionic_1984}.

The question of estimating conductance-based neuronal models from input-output data has mostly been approached with \textit{offline} algorithms and \textit{output-error}   \cite{ljung_system_1999} model structures, see e.g. 
\cite{druckmann_novel_2007,meliza_estimating_2014,nogaret_automatic_2016}.
However, since the neuronal dynamics lack the fading memory property that is essential for performing output-error estimation
\cite{ljung_convergence_1978,ljung_system_1999},
such methods lead to difficult optimization problems with nonsmooth cost functions
\cite{abarbanel_estimation_2008,ribeiro_smoothness_2020}; as a consequence, the use of such methods in adaptive schemes is precluded. To deal with these difficulties, some authors have exploited the assumption that the only parameters to be estimated are a neuron's \textit{maximal conductances} (including synaptic weights), while other parameters related to ion channel properties can be assumed known. In this case, the neuronal model structure becomes linear-in-the-parameters \cite{huys_efficient_2006,narayanan_biophysically_2019,burghi_feedback_2021}. An important question related to such approaches is the effect of ion channel model uncertainty.

In this paper, we address the problem of \textit{online} estimation of single-neuron and neural network conductance-based models. Our modelling framework acknowledges the linear parametrization of maximal conductances, which are key players in the neuromodulation of neuronal behaviours, from the single-cell to the network scale \cite{marder_neuromodulation_2014,drion_neuronal_2015,drion_cellular_2019}. At the same time, we highlight the important issue of uncertainty in ion channel models, which define the internal dynamics of a neuron and its synapses. Our first contribution is the design and analysis of a globally convergent adaptive observer based on the classical recursive least squares (RLS) method, which assumes a linear-in-the-parameter neuronal output dynamics and a known nonlinear internal dynamics. Building on that design, we then propose an augmented adaptive observer capable of estimating parametric (structured) uncertainty in a nonlinearly parameterized neuronal internal dynamics. 

The observers in this paper are aligned with the literature on nonlinear adaptive observers \cite{gauthier_simple_1992,marino_global_1992,farza_adaptive_2009}.   
Our approach is however closer to linear observer design \cite{zhang_adaptive_2001} than to nonlinear observer design, since, instead of relying on particular state space observer normal forms \cite{krstic_nonlinear_1995,besancon_remarks_2000}, we rather rely on contraction theory principles \cite{lohmiller_contraction_1998}.
Contraction analysis provides a framework reminiscent of the linear theory of adaptive control, as well as explicit convergence rates and robustness guarantees grounded in the concept of a \textit{virtual system} \cite{jouffroy_tutorial_2010,bonnabel_contraction_2015}. Contraction analysis has been a driving methodology in recent adaptive control research  \cite{lopez_adaptive_2021}, and the present work demonstrates its value for the design of adaptive systems in neuroscience.

The paper is organized as follows. The model structure assumptions and their application to conductance-based models are presented in \cref{sec:modeling}. In \cref{sec:no_uncertainty}, the online estimation problem for a simplified linear-in-the-parameters model structure is studied, and a globally convergent adaptive observer is presented. In \cref{sec:main_result}, parametric nonlinear uncertainty in the internal dynamics is introduced, and we present an augmented adaptive observer to solve the estimation problem; we also discuss the effects of measurement noise and unstructured uncertainty. In \cref{sec:numeric}, we illustrate the performance of the adaptive observers and discuss the potential of the approach in neurophysiology.

\subsection{Notation}

For a finite-dimensional vector $x$, we write $n_x := \dim(x)$.
For two column vectors $x$ and $y$, we write 
$\col(x,y) := (x^\transp,y^\transp)^\transp$.
For a matrix $A \in \setreal^{n\times m}$, $\|A\|$
denotes the spectral norm (the largest singular value
of $A$). 
For a vector $x \in \setreal^{n_x}$ and a symmetric
matrix $P \in \setreal^{{n_x}\times {n_x}}$, we write 
$\|x\|_P^2 := x^\transp P x$, and $\|x\| := \|x\|_I$
with $I$ the identity matrix.
For a vector-valued
function $f:\setreal^{n_1} \times \setreal^{n_2} \to 
\setreal^m$, we write $\partial_x f(x,y) \in 
\setreal^{m \times n_1}$ for the Jacobian of $f(x,y)$
with respect to $x$. 
We write $A \succeq B$ ($A \succ B$) if 
$A-B$ is a positive-semidefinite 
(positive-definite) matrix. This paper often uses the formalism of contraction analysis \cite{lohmiller_contraction_1998}, which is briefly recalled in \cref{sec:contraction_analysis}.

\section{System class}
\label{sec:modeling}

This section introduces and motivates the model assumptions of the paper. \cref{sec:model_structure} defines the basic model structure and states our main assumptions. \cref{sec:cb_models} then shows how the model structure and the assumptions are motivated by our main application: the conductance-based model of a neuron. Finally, \cref{sec:cb_networks} shows that the assumptions extend from single neurons to models of neuronal networks.

\subsection{Problem statement}
\label{sec:model_structure} 

This paper considers nonlinear state-space systems of the form
\begin{subequations}
\label{eq:true_system} 
	\begin{align}
		\label{eq:dv_true} 
		\dot{v} &= 
		\Phi(v,w,u) \parv + \bv(v,w,u) \\
		\label{eq:dw_true} 
		\dot{w} &= 
		A(v,\parw) w + \bw(v,\parw)
	\end{align}
where $v(t)\in\setreal^\ny$ is a measured output, $w(t) \in \setreal^\nw$ are unmeasured internal states, and 
$\parv \in \setreal^\nparv$ and $\parw \in \setreal^\nparw$ are parameter vectors. We call \eqref{eq:dv_true} the \textit{output dynamics}, and \eqref{eq:dw_true} the \textit{internal dynamics}. We assume that $\Phi$, $A$, $\bv$ and $\bw$ are continuously differentiable functions of the appropriate dimensions. 

The model structure \eqref{eq:true_system} is motivated by neuroscience applications discussed in \cref{sec:cb_models}. In those applications, the vector $\parv$ is unknown, while $\parw$ is not unknown but uncertain. Thus we work in the context of structured model uncertainty \cite{sastry_adaptive_2011}. Our aim is to design an adaptive observer to estimate $\theta$ and, if necessary, $\parw$. For that purpose, we regard the parameters as part of the state of the system. We will initially consider the constant model 
\begin{equation}
	\label{eq:dtheta_true}
	  	\dot{\parv} = 	0, \quad
	  	\dot{\parw} = 0,	
\end{equation}		
\end{subequations}
so that $\parv(t) = \parv(0)$ and $\parw(t) = \parw(0)$ for all $ t \ge 0$; later, we will study the case where such parameters are time-varying.

We now consider the main assumptions on the properties of \eqref{eq:true_system}. These assumptions will also be justified by the  applications in \cref{sec:cb_models}.
\begin{assum}
	\label{assum:true_invariant_set} 
	There exists a compact set $U$ such that $u(t) \in U$ for all $t \ge 0$.	
	Furthermore, there exists a compact convex set $V \times W \times \{\parv(0)\}\times\{\parw(0)\}$ which is positively invariant with respect to \eqref{eq:true_system}, uniformly in $u$ on $U$.
\end{assum}

\begin{assum}
\label{assum:int_dyn_contraction}
There exist a symmetric positive definite matrix $\Pint \succ 0$
	and a contraction rate $\rateint > 0$ such that 
	\begin{equation}
		\label{eq:Mint}
		A(v,\parw)^\transp \Pint 
		+ \Pint A(v,\parw)
		\preceq -\rateint \Pint
	\end{equation}
	for all $\{v,\parw\} \in \setreal^{\ny} \times \setreal^\nparw$. It is assumed that $	\|M_w\| = 1$ 	without loss of generality.
\end{assum}

\begin{rmk}
\label{rmk:saturation_w} 
When \cref{assum:true_invariant_set} holds, then without loss of generality we can assume that for all $v \in V$ and $u \in U$, the functions $\Phi(v,w,u)$ and $\bv(v,w,u)$ are globally Lipschitz and bounded in $w \in \setreal^{\nw}$. This is because we can replace $w$ by $\sat_w(w)$ in the arguments of those functions, where $\sat_w:\setreal^{\nw} \to W$ is a smooth saturation function such that $\sat_w(w) = w$ for all $w \in W$. Doing so does not change the dynamics of \eqref{eq:true_system} within the positively invariant set of \cref{assum:true_invariant_set}.
\end{rmk}

The reader will note that the system \eqref{eq:true_system} is not in the classical \textit{output-feedback canonical form} \cite{krstic_nonlinear_1995}, nor in any of the equivalent adaptive observer forms summarized by \cite{besancon_remarks_2000}. The system also does not fit the model structures addressed in the more recent adaptive observer literature, e.g., \cite{farza_adaptive_2009,tyukin_adaptive_2013}. 


\subsection{Conductance-based single-neuron model}
\label{sec:cb_models} 

Since the seminal work of Hodgkin and Huxley \cite{hodgkin_quantitative_1952}, the nonlinear electrical circuits known as conductance-based models have become the foundation of biophysical modelling in neurophysiology  \cite{izhikevich_dynamical_2007}.
We now show that any single-neuron conductance-based  model can be written in the form \eqref{eq:true_system} in such a way that \cref{assum:true_invariant_set,assum:int_dyn_contraction} are satisfied.


A circuit representation of the model is shown in Figure~\ref{fig:conductance_based}: a capacitor of capacitance $c>0$ in parallel with 
a \textit{leak current} $I_{\rm{L}}$ and a number of \textit{intrinsic ionic currents} $\Iint_\ion$. The \textit{input current} $u(t)\in\setreal$ represents the external current, injected with an intracellular electrode.
The capacitor voltage $v(t)\in\setreal$ modelling the neuronal membrane potential evolves according to Kirchhoff's law, 
\begin{equation}
	\label{eq:single_neuron_cb}
		c \, \dot{v} = 
		- I_{\rm{L}} 
		- \sum_{\ion \in \mathcal{I}} 
		\Iint_\ion
		+
		u,
\end{equation}
where $\mathcal{I}=\{\ion_1,\ion_2,\dotsc,\ion_{\card(\mathcal{I})}\}$ is the (finite) index set of intrinsic ionic currents. Each current in the circuit is ohmic in nature, but with a conductance that can be nonlinear and voltage-dependent. The leak current has a constant conductance and is given by
\begin{equation}
	\label{eq:leak_current} 
	I_{\rm{L}} = \maxcond_{\rm{L}} (v-\nernst_{\rm{L}}),
\end{equation}
with $\maxcond_\Leak>0$, and the intrinsic ionic currents are modelled by
\begin{subequations}
	\label{eq:current_cb}
	\begin{align}
		\label{eq:ion_currents} 
		I_\ion &= \maxcondint_\ion \,
		m_\ion^{p_\ion} \, h_\ion^{q_\ion} \, 
		(v - \nernstint_\ion) \\[.3em]
		\label{eq:activation} 
		\tau_{m_\ion}(v) 
		\dot{m}_\ion &=  
			-m_\ion+
			\sigma_{m_\ion}(v) 			
			\\[.3em]
		\label{eq:inactivation} 
		\tau_{h_\ion}(v) \dot{h}_\ion &= 
			-h_\ion +
			\sigma_{h_\ion}(v) 			
	\end{align}
\end{subequations}
The constants $\maxcondint_{\ion}>0$ and $\nu_{\ion}\in\setreal$ are called (intrinsic) \textit{maximal conductances} and \textit{reversal potentials}, respectively. The exponents $p_\ion$ and $q_\ion$ in \eqref{eq:ion_currents} are fixed natural numbers (including zero).
The static \textit{activation functions} 
$\sigma_{m_\ion}(v)$
and 
$\sigma_{h_\ion}(v)$,
and
\textit{time-constant functions} $\tau_{m_\ion}(v)$ 
and $\tau_{h_\ion}(v)$, model the nonlinear gating 
of the ionic conductance.
The activation functions are given by 
sigmoid functions
of the form
\begin{equation}
	\label{eq:sigmoid}
	\sigma(v) = \frac{1}{1+\exp\left(-(v-\vhalf)/
	\slope\right)},
\end{equation}
where the constants $\vhalf_{m_\ion}\in \setreal$ and $\vhalf_{h_\ion} \in \setreal$ determine the \textit{half-activation} of those functions, while the constants $\slope_{m_\ion} > 0$ and $\slope_{h_\ion} < 0$ determine their slopes. Because $\sigma_{m_\ion}:\setreal\to(0,1)$ and $\sigma_{h_\ion}:\setreal\to(0,1)$ are monotonically increasing and decreasing, respectively, the states  $m_\ion$ and $h_\ion$ are called activation and inactivation \textit{gating variables}, respectively. The time-constant functions are given by bell-shaped functions of the form\footnote{Some models are defined with different types of sigmoids and bell-shaped functions. The results in this paper can be trivially extended to those
cases.}
\begin{equation}
	\label{eq:gaussian_function}
	\tau(v) = \taumin + 
	(\taumax-\taumin) \exp(-(v-\mean)^2/\std^2) 
\end{equation}
for all $v \in \setreal$ and some $\taumin,\taumax > 0$ and $\mean,\std \in \setreal$. 

\begin{figure}[t]
	\centering
	\if\usetikz1
	\ctikzset{label/align = smart}
	\begin{tikzpicture}[scale=0.70,
		american resistors,american voltages]
		\newcommand{\cbtop}{4.2}

		\draw (0,0) to[C,l_=$c$,v^<=$v\;\;$] 
		(0,\cbtop);
		
		\draw[draw=blue,dashed] (-1.1,-1) rectangle (5.5,\cbtop+1) node[above left]{{\color{blue}single neuron}};
		\draw (0,\cbtop) to (2.75,\cbtop);
		\draw (0,0) to (2.75,0);
		\node (dots1) at (3.25,\cbtop)[] {$\cdots$}; 
		\node (dots2) at (3.25,0)[] {$\cdots$};
		\draw (3.75,\cbtop) to (5.25,\cbtop);
		\draw (3.75,0) to (5.25,0);
		\node (dots3) at (5.75,\cbtop)[] {$\cdots$}; 
		\node (dots4) at (5.75,0)[] {$\cdots$};
		\draw (6.25,\cbtop) to (7,\cbtop);
		\draw (6.25,0) to (7,0);
		
		\draw (2,\cbtop) to[R,l=$\maxcond_L$] 
		(2,\cbtop/3)to[battery1,i>=$I_L$] (2,0);

		\draw (4.5,\cbtop) to[vR] (4.5,\cbtop/3)
		to[battery1,i>=$\Iint_\ion$] (4.5,0);
			
		\draw (7,\cbtop) to[vR] (7,\cbtop/3)
		to[battery1,	i>=$\Isyn_{\syn,\pre}$] (7,0);
	
		\draw (2,\cbtop+1) 
			to[short,o-,i=$\Iapp_\app$] (2,\cbtop);
		\draw (2,0) to[short,-o] (2,-1);

		\draw (9,0) to[C,v^<=$v_\pre\;\;$] 
		(9,\cbtop);
		\draw (9,\cbtop) to (10,\cbtop);
		\draw (9,0) to (10,0);
		\node (dots3) at (10.5,\cbtop)[] {$\cdots$}; 
		\node (dots4) at (10.5,0)[] {$\cdots$};
	\end{tikzpicture}
	\else
		\includegraphics[scale=1]{./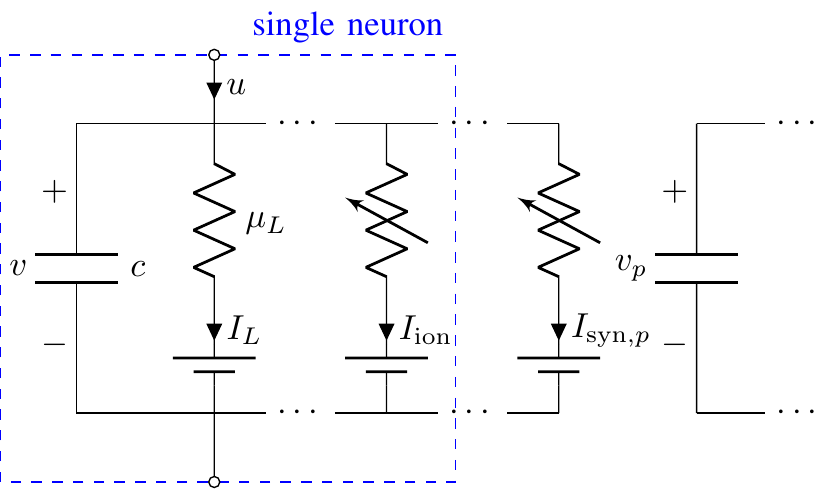}
	\fi
	\caption{Circuit representation of a neuron with voltage $v$ that is coupled	though a synapse to a presynaptic neuron with voltage $v_\pre$.}
	\label{fig:conductance_based}
\end{figure}

\begin{ex}
	\label{ex:HH_description} 
	The Hodgkin-Huxley (HH) model \cite{hodgkin_quantitative_1952} includes two intrinsic ionic currents: a transient sodium current $\INa$ and a potassium current $\IK$, so that $\mathcal{I} = \{\rm{Na},\rm{K}\}$. The voltage dynamics of the	HH model are given by 
	\[
	\begin{split}
		c \, \dot{v} = 	
				 	-\underbrace{\maxcond_{\rm{Na}} m_{\rm{Na}}^3 h_{\rm{Na}} 
				 	(v-\nernst_{\rm{Na}})}_{\INa} 
				 	&-\underbrace{\maxcond_{\rm{K}} m_{\rm{K}}^4 
				 	(v-\nernst_{\rm{K}})}_{\IK} \\
					&-\underbrace{\maxcond_{\rm{L}}(v-\nernst_{\rm{L}})}_{\IL}
					\,+\, \Iapp_\app,
	\end{split}
	\]	
	the dynamics of $m_\Na$ and $m_\K$ are given by \eqref{eq:activation}, and the dynamics of $h_\Na$ are given by \eqref{eq:inactivation}.
\eoe
\end{ex}

Two basic properties of a single neuron conductance-based model justify the assumptions of \cref{sec:model_structure}. The first property is the existence of a positively invariant set.
\begin{lem}
	\label{lem:invariant_set}
	Consider the neuronal model \eqref{eq:single_neuron_cb}-\eqref{eq:gaussian_function}, and assume $\mid u\mid  \le \umax$ for all $t \ge 0$.	
	 Let 
	\begin{equation}
	\label{eq:v_bounds} 
		\begin{split}
		\vmax &:= 
		\max\left\{\max_{\ion\in\mathcal{I}} \nernst_\ion,\;	
		\umax \,\maxcond_{\rm{L}}^{-1}
		+ \nernst_{\rm{L}}
		\right\} \\
		\vmin &:= 
		\min\left\{\min_{\ion\in\mathcal{I}}\nernstint_\ion, \;
		- \umax \,\maxcond_{\rm{L}}^{-1}		
		+ \nernst_{\rm{L}}
		\right\}
		\end{split}
	\end{equation}
	Whenever $v(0) \in [\vmin,\vmax]$, $m_\ion(0) \in [0,1]$ and $h_\ion(0) \in [0,1]$, it follows that
	\begin{equation*}
	v(t) \in [\vmin,\vmax],\quad
	m_\ion(t) \in [0,1], \text{ and }
	h_\ion(t) \in [0,1]
	\end{equation*}	
	for all $\ion \in \mathcal{I}$ and
	all $t \ge 0$.
\end{lem}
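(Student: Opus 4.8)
The plan is to prove that the compact convex box
\[
\mathcal{K} := [\vmin,\vmax]\times[0,1]^{2\card(\mathcal{I})},
\]
with coordinates $(v,(m_\ion)_{\ion\in\mathcal{I}},(h_\ion)_{\ion\in\mathcal{I}})$, is positively invariant for \eqref{eq:single_neuron_cb}--\eqref{eq:gaussian_function} whenever $|u|\le\umax$. Since every function appearing on the right-hand sides is continuously differentiable and $\tau_{m_\ion},\tau_{h_\ion}\ge\taumin>0$, the vector field is locally Lipschitz, so from any initial condition in $\mathcal{K}$ there is a unique solution on a maximal interval $[0,T)$; once invariance of $\mathcal{K}$ is established, boundedness forces $T=\infty$, giving the statement for all $t\ge0$. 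To get invariance I would verify the standard sub-tangentiality (Nagumo-type) condition on $\partial\mathcal{K}$: at every boundary point the vector field must lie in the tangent cone of $\mathcal{K}$. Because $\mathcal{K}$ is a product of intervals, that tangent cone factors coordinatewise, so it suffices to examine, separately for each coordinate, the sign of its time derivative on the face where it attains an endpoint.

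First I would treat the gating variables, whose faces do not involve $v$. On $\{m_\ion=0\}$, \eqref{eq:activation} gives $\dot m_\ion=\sigma_{m_\ion}(v)/\tau_{m_\ion}(v)>0$, using $\sigma_{m_\ion}(v)\in(0,1)$ and $\tau_{m_\ion}(v)>0$; on $\{m_\ion=1\}$ it gives $\dot m_\ion=(\sigma_{m_\ion}(v)-1)/\tau_{m_\ion}(v)<0$. The identical computation with \eqref{eq:inactivation} handles $h_\ion\in\{0,1\}$. Hence every $[0,1]$ factor of $\mathcal{K}$ is invariant, irrespective of the value of $v$.

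Next I would treat the voltage. On $\{v=\vmax\}$, definition \eqref{eq:v_bounds} gives $\vmax\ge\nernst_\ion$ for all $\ion\in\mathcal{I}$, so, using $\maxcond_\ion>0$, $p_\ion,q_\ion\ge0$, and $m_\ion,h_\ion\in[0,1]$ (the gating invariance just proved), each intrinsic ionic term $-\maxcond_\ion m_\ion^{p_\ion}h_\ion^{q_\ion}(\vmax-\nernst_\ion)$ is $\le0$; moreover $\vmax\ge\umax\maxcond_{\rm{L}}^{-1}+\nernst_{\rm{L}}$ together with $u\le\umax$ gives $-\maxcond_{\rm{L}}(\vmax-\nernst_{\rm{L}})+u\le-\umax+u\le0$. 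Summing the terms of \eqref{eq:single_neuron_cb}, $c\dot v\le0$ on that face, and the same estimate gives $c\dot v<0$ whenever $v>\vmax$. A mirror-image argument on $\{v=\vmin\}$, using $\vmin\le\nernst_\ion$ (so the ionic terms are $\ge0$) and $\vmin\le-\umax\maxcond_{\rm{L}}^{-1}+\nernst_{\rm{L}}$ with $u\ge-\umax$, yields $c\dot v\ge0$ there and $c\dot v>0$ for $v<\vmin$. This verifies sub-tangentiality on all of $\partial\mathcal{K}$, so $\mathcal{K}$ is positively invariant. If one prefers to avoid quoting Nagumo's theorem, the same conclusion follows from a first-exit-time argument: if one of the scalar functions $v-\vmax$, $\vmin-v$, $-m_\ion$, $m_\ion-1$, $-h_\ion$, $h_\ion-1$ ever became positive, continuity would produce a last instant at which it vanishes, and the strict sign of its derivative for values just outside the box, from the estimates above, would contradict that.

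The individual steps are routine sign checks; the only point that needs care is the edges and corners of $\mathcal{K}$, where several constraints are simultaneously active. This is precisely why one works with the rectangular (product) structure: there the tangent cone is the product of the one-dimensional tangent cones, so the coordinatewise checks already carried out are automatically sufficient and the corners need no separate treatment. The one ordering requirement is that the gating estimates precede the voltage estimate, since they are used to bound the ionic currents in \eqref{eq:single_neuron_cb}; there is no circularity, because invariance of the $[0,1]$ factors uses no information about $v$.
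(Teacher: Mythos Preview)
Your proof is correct and follows essentially the same approach as the paper's: first establish that each gating interval $[0,1]$ is invariant by checking the sign of $\dot m_\ion$, $\dot h_\ion$ at the endpoints (using that the sigmoid takes values in $(0,1)$), and then use nonnegativity of the conductance factors $\maxcond_\ion m_\ion^{p_\ion}h_\ion^{q_\ion}$ together with the definition of $\vmin,\vmax$ to check the sign of $\dot v$ at the voltage endpoints. Your version is more carefully packaged (explicit invocation of Nagumo's sub-tangentiality condition, discussion of corners via the product structure, and the first-exit-time alternative), whereas the paper simply states the boundary sign checks informally; the substance is the same.
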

\begin{proof}
	See \cref{proof:invariant_set}.
\end{proof}

The second basic property of a conductance-based model is the contraction of its internal dynamics.
\begin{lem} 
	\label{lem:int_dyn_contraction}
For all $\ion \in \mathcal{I}$, the dynamics \eqref{eq:activation} are globally exponentially contracting, uniformly in $v$ on $\setreal$ and in $\{\vhalf_{m_\ion}, \slope_{m_\ion}, \mean_{m_\ion}, \std_{m_\ion}\}$ on 
$\setreal^4$. Exponential contraction holds for any (scalar) constant contraction metric and a contraction rate given by $2 \taumax_{m_{\ion}}^{-1}$. The same holds analogously for the dynamics \eqref{eq:inactivation}.
\end{lem}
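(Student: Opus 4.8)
The plan is to verify the differential condition for exponential contraction directly on the scalar dynamics \eqref{eq:activation}, exploiting the fact that the only state-dependent quantity entering the relevant Jacobian is the time-constant function $\tau_{m_\ion}(v)$, which admits a uniform upper bound.

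First, I would write \eqref{eq:activation} as the scalar ODE $\dot m_\ion = f_{m_\ion}(m_\ion,v)$ with $f_{m_\ion}(m_\ion,v) := \tau_{m_\ion}(v)^{-1}\bigl(-m_\ion + \sigma_{m_\ion}(v)\bigr)$. Since $\tau_{m_\ion}(v) \ge \taumin_{m_\ion} > 0$ for every $v$, the right-hand side is well defined and continuously differentiable, and its Jacobian with respect to the state is $\partial_{m_\ion} f_{m_\ion}(m_\ion,v) = -\tau_{m_\ion}(v)^{-1}$. The key observations are that this quantity is independent of $m_\ion$ itself, that it does not involve the activation-curve parameters $\{\vhalf_{m_\ion},\slope_{m_\ion}\}$ at all (these enter only through $\sigma_{m_\ion}$, which has disappeared upon differentiation), and that its dependence on $\{\mean_{m_\ion},\std_{m_\ion}\}$ is only through $\tau_{m_\ion}(v)$.

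Next I would recall (see \cref{sec:contraction_analysis}, and consistently with the convention of \eqref{eq:Mint}) that for a scalar system with a constant metric $p>0$, exponential contraction with rate $\lambda>0$ is equivalent to $2p\,\partial_{m_\ion} f_{m_\ion}(m_\ion,v) \le -\lambda p$ for all $(m_\ion,v)$, i.e. to $\tau_{m_\ion}(v)^{-1} \ge \lambda/2$. From the bell-shaped form \eqref{eq:gaussian_function} together with $\taumax_{m_\ion} \ge \taumin_{m_\ion} > 0$, one has $0 < \exp(-(v-\mean_{m_\ion})^2/\std_{m_\ion}^2) \le 1$, hence
\[
	\taumin_{m_\ion} < \tau_{m_\ion}(v) \le \taumax_{m_\ion} \qquad \text{for all } v\in\setreal,
\]
and this bound holds for every $\{\mean_{m_\ion},\std_{m_\ion}\}\in\setreal^2$. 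Therefore $\partial_{m_\ion} f_{m_\ion}(m_\ion,v) = -\tau_{m_\ion}(v)^{-1} \le -\taumax_{m_\ion}^{-1}$, which is exactly the contraction condition above with rate $\lambda = 2\,\taumax_{m_\ion}^{-1}$ for any constant metric $p>0$; since the upper bound $\taumax_{m_\ion}^{-1}$ does not depend on $v$ or on any of the four parameters, contraction is uniform in $v$ on $\setreal$ and in $\{\vhalf_{m_\ion},\slope_{m_\ion},\mean_{m_\ion},\std_{m_\ion}\}$ on $\setreal^4$, as claimed. The argument for the inactivation dynamics \eqref{eq:inactivation} is identical, with $h_\ion$, $\tau_{h_\ion}$, $\sigma_{h_\ion}$ in place of $m_\ion$, $\tau_{m_\ion}$, $\sigma_{m_\ion}$.

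There is essentially no obstacle here: the only point requiring care is the bound $\tau_{m_\ion}(v)\le\taumax_{m_\ion}$, which relies on the (implicit) convention $\taumax_{m_\ion}\ge\taumin_{m_\ion}$ making \eqref{eq:gaussian_function} genuinely bell-shaped with peak value $\taumax_{m_\ion}$; and one should note that the boundedness and Lipschitzness of $\sigma_{m_\ion}$ play no role in this Jacobian estimate — they matter only for well-posedness and for the invariant-set argument of \cref{lem:invariant_set}.
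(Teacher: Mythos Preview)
Your proof is correct and follows essentially the same approach as the paper: compute the scalar Jacobian $-\tau_{m_\ion}(v)^{-1}$, invoke the uniform bound $\tau_{m_\ion}(v)\le\taumax_{m_\ion}$ from the bell-shaped form \eqref{eq:gaussian_function}, and read off the contraction inequality for any constant metric $p>0$. Your version is simply more detailed in spelling out which parameters enter where and in noting the implicit convention $\taumax_{m_\ion}\ge\taumin_{m_\ion}$.
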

\begin{proof}
	The Jacobian of the vector field of \eqref{eq:activation} is $-\tau_{m_{\ion}}^{-1}(v)$. But from \eqref{eq:gaussian_function}, we see that for any $p>0$, the inequality
	\[
		-\tau_{m_{\ion}}^{-1}(v)\,p
		-p\,\tau_{m_{\ion}}^{-1}(v)
		< -2\,\taumax_{m_{\ion}}^{-1}\,p 
	\]
	holds for any real $v$,
	$\vhalf_{m_\ion}$, $\slope_{m_\ion}$, $\mean_{m_\ion}$, and $\std_{m_\ion}$.
\end{proof}

Finally, we formalize the connection between the single neuron model above and the model structure \eqref{eq:true_system}.
\begin{prop}
\label{prop:parametrization} 
Consider the neuronal model \eqref{eq:single_neuron_cb}-\eqref{eq:gaussian_function}. Let
\[
	w := 
	\begin{pmatrix}	m_{\ion_1},h_{\ion_1},m_{\ion_2},h_{\ion_2},\dotsc
	\end{pmatrix}^\transp,
\]
and let the parameter vector $\parw$ be composed of any number of elements from the set 
\[\cup_{i=1}^\nw \{\vhalf_{w_i},\slope_{w_i},\mean_{w_i},\std_{w_i}\}.\]
Let $\parv$ be defined according to one of the following parametrizations:
\begin{align}
\nonumber
\parv &:= \begin{pmatrix}
			\maxcond_{\ion_1},\,\maxcond_{\ion_2},\,\dotsc
			\end{pmatrix}^\transp, \text{ or } \\ 
\nonumber			
\parv &:= c^{-1}\begin{pmatrix}
			1,\,\maxcond_{\ion_1},\,\maxcond_{\ion_2},\,\dotsc
			\end{pmatrix}^\transp, \text{ or } \\ 
\nonumber			
\parv &:= c^{-1}\begin{pmatrix}
			1,\,\maxcond_{\ion_1},\,\maxcond_{\ion_2},\,\dotsc,\,
			\maxcond_{\ion_1}\nernst_{\ion_{1}},\,\maxcond_{\ion_2}\nernst_{\ion_{2}},\,\dotsc
			\end{pmatrix}^\transp.
\end{align}
Then the neuronal model is of the form \eqref{eq:true_system}, and it satisfies \cref{assum:true_invariant_set,assum:int_dyn_contraction}.
\end{prop}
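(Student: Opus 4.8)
The plan is to verify the claim constructively, by exhibiting for each of the three parametrizations explicit functions $\Phi$, $\bv$, $A$, $\bw$ that put the conductance-based model into the form \eqref{eq:true_system}, and then checking the two assumptions hold for those functions. First I would rewrite Kirchhoff's law \eqref{eq:single_neuron_cb} together with the current models \eqref{eq:leak_current}--\eqref{eq:current_cb} by expanding each ionic current $I_\ion = \maxcond_\ion m_\ion^{p_\ion} h_\ion^{q_\ion}(v-\nernst_\ion)$; after dividing by $c$, the right-hand side of $\dot v$ is affine in the quantities $\{1,\,\maxcond_{\ion_1},\,\maxcond_{\ion_2},\dotsc\}$ (or $\{c^{-1},\,c^{-1}\maxcond_{\ion_i},\dotsc\}$, or additionally $\{c^{-1}\maxcond_{\ion_i}\nernst_{\ion_i}\}$, depending on which parametrization is used), with coefficients that are smooth functions of $(v,w)$ only. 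Collecting these coefficients into the row vector $\Phi(v,w,u)$ and putting the remaining $u$-dependent piece into $\bv(v,w,u)$ gives \eqref{eq:dv_true}; since $p_\ion,q_\ion$ are fixed natural numbers and $\sigma,\tau$ are smooth, $\Phi$ and $\bv$ are continuously differentiable. For the third parametrization note that $\maxcond_\ion(v-\nernst_\ion) = \maxcond_\ion v - \maxcond_\ion\nernst_\ion$, which is why the extra entries $\maxcond_{\ion_i}\nernst_{\ion_i}$ appear linearly.

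Next I would identify the internal dynamics. Stacking \eqref{eq:activation}--\eqref{eq:inactivation} for all $\ion$ into $w=\col(m_{\ion_1},h_{\ion_1},\dotsc)$ gives $\dot w = A(v,\parw)w + \bw(v,\parw)$ with $A(v,\parw) = -\diag\bigl(\tau_{m_{\ion_1}}(v)^{-1},\tau_{h_{\ion_1}}(v)^{-1},\dotsc\bigr)$ and $\bw(v,\parw) = \diag(\dotsc)\,\col(\sigma_{m_{\ion_1}}(v),\sigma_{h_{\ion_1}}(v),\dotsc)$, which is exactly \eqref{eq:dw_true}; the dependence on $\parw$ enters through whichever of the $\vhalf,\slope,\mean,\std$ parameters are collected into $\parw$, and both $A$ and $\bw$ are smooth because $\tau(v) \ge \taumin > 0$ is bounded away from zero and everything in sight is a composition of smooth elementary functions. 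The trivial dynamics $\dot\parv = 0$, $\dot\parw = 0$ are \eqref{eq:dtheta_true}.

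Then I would check the assumptions. For \cref{assum:true_invariant_set}: the input bound follows by taking $U = [-\umax,\umax]$ (any bound on the injected current); the positively invariant compact convex set is precisely the box $[\vmin,\vmax]\times[0,1]^{\nw}$ furnished by \cref{lem:invariant_set}, crossed with the singletons $\{\parv(0)\}\times\{\parw(0)\}$ (invariant since the parameters are constant), and this set is convex and compact. For \cref{assum:int_dyn_contraction}: because $A(v,\parw)$ is diagonal with strictly negative entries $-\tau_{\cdot}(v)^{-1}$, \cref{lem:int_dyn_contraction} gives, for each coordinate, $-\tau_{m_\ion}(v)^{-1} < -\taumax_{m_\ion}^{-1}$ uniformly in $v$ and in the gating parameters; taking $\Pint$ to be a diagonal positive-definite matrix (normalized so $\|\Pint\| = 1$) and $\rateint := 2\min_{\ion}\{\taumax_{m_\ion}^{-1},\taumax_{h_\ion}^{-1}\} > 0$ yields $A(v,\parw)^\transp\Pint + \Pint A(v,\parw) \preceq -\rateint\Pint$ for all $(v,\parw)$, as a diagonal matrix inequality. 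I would remark that the required uniformity in $\parw$ is immediate here because the contraction bound in \cref{lem:int_dyn_contraction} does not involve the gating parameters at all.

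I do not expect a genuine obstacle: the proof is essentially bookkeeping, assembling the pieces $\Phi$, $\bv$, $A$, $\bw$ and invoking the two preceding lemmas. The only point requiring a little care is the third parametrization, where one must be explicit that expanding $\maxcond_\ion(v-\nernst_\ion)$ keeps the model affine in the enlarged parameter vector (so the regressor $\Phi$ gains the columns $m_\ion^{p_\ion}h_\ion^{q_\ion}$ and $-m_\ion^{p_\ion}h_\ion^{q_\ion}$ multiplying $\maxcond_\ion v$ and $\maxcond_\ion\nernst_\ion$ respectively, once rescaled by $c^{-1}$), and, relatedly, that the leak conductance $\maxcond_\Leak$ can be folded into the constant "$1$" entry when its value is treated as known. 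A secondary cosmetic point is the normalization $\|\Pint\| = 1$, which costs nothing since the inequality \eqref{eq:Mint} is homogeneous in $\Pint$.
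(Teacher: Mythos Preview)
Your proposal is correct and follows essentially the same approach as the paper: verify by inspection that each parametrization puts the model into the form \eqref{eq:true_system}, then invoke \cref{lem:invariant_set} for \cref{assum:true_invariant_set} and \cref{lem:int_dyn_contraction} for \cref{assum:int_dyn_contraction}. The paper's own proof is a two-sentence appeal to exactly these lemmas; you have simply spelled out the bookkeeping (the explicit $\Phi$, $\bv$, $A$, $\bw$, the choice $\Pint$ diagonal, and $\rateint = 2\min_\ion\{\taumax_{m_\ion}^{-1},\taumax_{h_\ion}^{-1}\}$) that the paper leaves implicit.
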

\begin{proof}
	 Given the above parametrization, it can be verified by inspection that \eqref{eq:single_neuron_cb}-\eqref{eq:gaussian_function} can be written as \eqref{eq:true_system}. It then follows immediately from \cref{lem:invariant_set,lem:int_dyn_contraction} that the neuronal model satisfies \cref{assum:true_invariant_set,assum:int_dyn_contraction}.
\end{proof}

As \cref{prop:parametrization} points out, a conductance-based model can be parametrized in a number of ways. The choice of parametrization depends on implicit assumptions about which model constants are known, and which need to be estimated. Estimation of  the maximal conductances $\maxcond_\ion$ is of particular importance in neurophysiological applications, as they can be regarded as the key parameters for {\it adaptive} control of a neuronal network
\cite{drion_neuronal_2015,drion_cellular_2019}. 
Maximal conductances 
vary greatly under the biochemical action of neuromodulators \cite{marder_neuromodulation_2014}.
In contrast, many other constants in a neuron model may be assumed to be known, but with some level of uncertainty.
\begin{ex}
	\label{ex:HH_parametrization} 
	Assume that the capacitance, maximal conductances and  half-activations of the HH model of \cref{ex:HH_description} need to be estimated, while other model constants are known. Then we may define $w:=(m_\Na,h_\Na,m_\K)^\transp$ and
\[
\begin{split}
	\parv &:= 
	c^{-1}
	\begin{pmatrix}
	1,\,
	\mu_\Na,\,\mu_\K,\,
	\mu_\Leak
	\end{pmatrix}^\transp \\
	\parw &:= 
	\begin{pmatrix}
	\vhalf_{m_\Na},\,\vhalf_{h_\Na},\,
	\vhalf_{m_\K}
	\end{pmatrix}^\transp
\end{split}
\]
so that the dynamics of the HH model are given by \eqref{eq:true_system}, with
\[
	\begin{split}
	\Phi(v,w,u) &= 
	-\begin{pmatrix}
	-u, w_1^3w_2(v-\nernst_\Na),
	w_3^4(v-\nernst_\K),(v-\nernst_\Leak)
\end{pmatrix} \\[.3em]
	A(v) &= -\diag\left(
			\tau_{m_\Na}^{-1}(v),
			\tau_{h_\Na}^{-1}(v),
			\tau_{m_\K}^{-1}(v)
			\right)\\[.3em]
	\bw(v,\parw) &= -A(v)
	\col\left(
	\sigma_{m_\Na}(v),
		\sigma_{h_\Na}(v),		
			\sigma_{m_\K}(v)
	\right)
	\end{split}
\]
and $\bv = 0$.
\eoe
\end{ex}

\subsection{Conductance-based neural network model}
\label{sec:cb_networks} 

The two basic properties of single neuron models discussed in the previous section extend to conductance-based network models. 
A conductance-based neural network is given by the interconnection of $\nv \in \setnat$ single neurons via \textit{synapses}. For $ i \in \mathcal{N} := \{1,\dotsc,\nv\}$, the voltage dynamics of the $i^{\rm{th}}$ neuron in the network is described by
\begin{equation}
	\label{eq:neuron_in_network}
		c \, \dot{v}_i = 
		- I_{\rm{L},i} 
		- \sum_{\ion \in \mathcal{I}} 
		\Iint_{\ion,i}
		-\sum_{\syn \in \mathcal{S}} 
		\sum_{\pre \in \mathcal{P}} 
		\Isyn_{\syn,\pre,i}
		+
		u_i
\end{equation}
where each $I_{\rm{L},i}$ is given by \eqref{eq:leak_current} and each $\Iint_{\ion,i}$ is given by \eqref{eq:current_cb}, as before (in this case a subscript $i$ is attached to all variables). The additional currents $\Isyn_{\syn,\pre,i}$ above are \textit{synaptic currents} interconnecting the $i^{\text{th}}$ (postsynaptic) neuron with the $p^{\text{th}}$ (presynaptic) neuron, so that $\mathcal{P} \subseteq \mathcal{N}$. Since there might exist multiple synapses (based on different neurotransmitters) connecting two neurons, we denote each synaptic type by $\syn$, and the index set of synaptic types by $\mathcal{S}$.

Synaptic currents arise from electrochemical connections between neurons
\cite[Chapter 7]{ermentrout_mathematical_2010}. We consider
the model used in \cite{ermentrout_mathematical_2010,dethier_positive_2015},
which can be written as
\begin{subequations}
	\label{eq:syn_dynamics}
	\begin{align}
		\label{eq:syn_current} 
		\Isyn_{\syn,\pre} &= \maxcond_{\syn,\pre} s_{\syn,\pre} \,
		(v - \nernst_\syn)
				\\[.5em]
		\label{eq:syn_activation} 
		\tau_\syn(v_\pre)
		\dot{s}_{\syn,\pre} &=  
		-s_{\syn,\pre}+
		a_\syn \tau_\syn(v_\pre) 
		\sigma_\syn(v_\pre)
	\end{align}
\end{subequations}
with a synaptic time-constant function $\tau_\syn$ given by
\begin{equation}
	\label{eq:synaptic_time-constant} 
	\tau_\syn(v_\pre) = \frac{1}{
	a_\syn 
	\sigma_\syn(v_\pre)	
	+b_\syn}
\end{equation}
and a synaptic activation function 
$\sigma_{\text{syn}}$ of the form
\eqref{eq:sigmoid}, with $\vhalf_{\text{syn}} \in \setreal$ and 
$\slope_{\text{syn}} > 0$. Here, $s_{\syn,\pre}$ is the synaptic gating
variable, $v_\pre$ is the membrane voltage of the presynaptic neuron, and 
$a_\syn>0$ and $b_\syn>0$ are constant parameters. 
The constants $\maxcond_{\syn,\pre} > 0$ and 
$\nernst_\syn\in\setreal$ are (synaptic) maximal conductances and reversal
potentials, respectively. Notice that $0<(a_\syn+b_\syn)^{-1} \le \tau_\syn(v_\pre) \le b_\syn^{-1}$ for all $v_\pre \in \setreal$.

\begin{prop}
\label{prop:network_parametrization} 
Consider a conductance-based neural network model with voltage output vector  
$v = (v_1,\dotsc,v_\nv)^\transp$
and internal state vector 
$w := \col(w^{(1)},\dotsc,w^{(\nv)}) $
where 
\[ 
	w^{(i)} := \begin{pmatrix}
	m_{\ion_1,i},\,
	\dotsc,\,s_{\syn_1,1,i},\,
	s_{\syn_2,1,i}\,\dotsc,\,
	s_{\syn_1,2,i},\,\dotsc
	\end{pmatrix}
\] collects the intrinsic and synaptic gating variables of the $i^{\rm{th}}$ neuron. Let each neuron in the network be parametrized as in \cref{prop:parametrization}, allowing for the inclusion of $\maxcond_\syn$ and $\nernst_\syn$ in $\parv$, and for the inclusion of $\vhalf_\syn$, $\slope_\syn$, $\mean_\syn$ and $\std_\syn$ in $\parw$.
Then the network model is of the form \eqref{eq:true_system}, and it satisfies \cref{assum:true_invariant_set,assum:int_dyn_contraction}.
\end{prop}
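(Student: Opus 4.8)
The plan is to replay the proof of \cref{prop:parametrization} at the network level, the single-neuron case being recovered for $\nv = 1$ with no synapses; the only genuinely new bookkeeping concerns the synaptic gating variables $s_{\syn,\pre,i}$ and the synaptic currents they generate. First I would verify the structural claim by inspection. Stacking the voltage equations \eqref{eq:neuron_in_network} yields the output dynamics \eqref{eq:dv_true}, because for each $i$ the right-hand side is affine in the maximal conductances $\maxcond_\Leak$, $\maxcond_{\ion,i}$, $\maxcond_{\syn,\pre,i}$ (and in the products $\maxcond_{\syn,\pre,i}\nernst_\syn$ when synaptic reversal potentials are estimated), with coefficients assembled from products of gating variables and affine functions of $v$ --- exactly as in \cref{prop:parametrization}. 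Stacking the intrinsic gating equations \eqref{eq:activation}--\eqref{eq:inactivation} with the synaptic gating equations \eqref{eq:syn_activation} yields the internal dynamics \eqref{eq:dw_true} with a \emph{diagonal} $A(v,\parw)$, since every scalar equation reads $\tau(\cdot)\,\dot w_j = -w_j + (\text{function of }v,\parw)$, whence $A(v,\parw) = -\diag(\dotsc)$ collects the inverse time constants and $\bw(v,\parw)$ the forcing terms; continuous differentiability of $\Phi$, $A$, $\bv$, $\bw$ is immediate from \eqref{eq:sigmoid}, \eqref{eq:gaussian_function}, \eqref{eq:synaptic_time-constant}.

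Next I would establish \cref{assum:true_invariant_set} by generalizing \cref{lem:invariant_set}. The extra a priori bound is $s_{\syn,\pre,i}(t) \in [0,1]$: from \eqref{eq:syn_activation}--\eqref{eq:synaptic_time-constant} one has $a_\syn\,\tau_\syn(v_\pre)\,\sigma_\syn(v_\pre) = a_\syn\sigma_\syn(v_\pre)/\bigl(a_\syn\sigma_\syn(v_\pre) + b_\syn\bigr) \in (0,1)$, so $\dot s_{\syn,\pre,i} \le 0$ at $s_{\syn,\pre,i} = 1$ and $\dot s_{\syn,\pre,i} \ge 0$ at $s_{\syn,\pre,i} = 0$; invariance of $[0,1]$ for every $m_{\ion,i}$ and $h_{\ion,i}$ is exactly as in \cref{lem:invariant_set}. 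For the voltages, fix a neuron $i$ with $|u_i| \le \umax_i$ and set
\begin{equation*}
\vmax_i := \max\Bigl\{\max_{\ion}\nernst_\ion,\ \max_{\syn}\nernst_\syn,\ \umax_i\,\maxcond_\Leak^{-1} + \nernst_\Leak\Bigr\},
\end{equation*}
and analogously $\vmin_i$. Because $s_{\syn,\pre,i} \ge 0$ and $\maxcond_{\syn,\pre,i} > 0$, the synaptic current $\maxcond_{\syn,\pre,i}\,s_{\syn,\pre,i}(v_i - \nernst_\syn)$ has the sign of $v_i - \nernst_\syn$; together with the sign-definiteness (once $m,h \in [0,1]$) of the leak and intrinsic currents, this makes $c\,\dot v_i \le 0$ at $v_i = \vmax_i$ and $c\,\dot v_i \ge 0$ at $v_i = \vmin_i$, irrespective of the other neurons and of all gating variables, so the boundary argument of \cref{lem:invariant_set} applies verbatim, neuron by neuron. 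The box $\prod_i [\vmin_i,\vmax_i] \times [0,1]^{\nw} \times \{\parv(0)\} \times \{\parw(0)\}$ is then compact, convex and positively invariant uniformly in $u$ on $U$, so \cref{assum:true_invariant_set} holds.

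I would then establish \cref{assum:int_dyn_contraction} by generalizing \cref{lem:int_dyn_contraction}. The diagonal entries of $A(v,\parw)$ are $-\tau_{m_\ion,i}^{-1}(v_i)$, $-\tau_{h_\ion,i}^{-1}(v_i)$ and $-\tau_{\syn,\pre,i}^{-1}(v_\pre)$; by \eqref{eq:gaussian_function} one has $\tau_{m_\ion,i}^{-1}(v_i) \ge \taumax_{m_\ion,i}^{-1}$ and $\tau_{h_\ion,i}^{-1}(v_i) \ge \taumax_{h_\ion,i}^{-1}$, while by \eqref{eq:synaptic_time-constant} one has $\tau_{\syn,\pre,i}^{-1}(v_\pre) = a_\syn\sigma_\syn(v_\pre) + b_\syn \ge b_\syn$. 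Hence taking $\Pint = I$ (so that $\|\Pint\| = 1$) makes \eqref{eq:Mint} hold for all $(v,\parw)$ with $\rateint := 2\min\bigl\{\min_{\ion,i}\taumax_{m_\ion,i}^{-1},\ \min_{\ion,i}\taumax_{h_\ion,i}^{-1},\ \min_{\syn} b_\syn\bigr\} > 0$, which is \cref{assum:int_dyn_contraction}.

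The step I expect to require the most care --- and the only one not immediate from \cref{lem:invariant_set,lem:int_dyn_contraction} --- is the invariance argument under synaptic coupling: one must first secure $s_{\syn,\pre,i}(t) \in [0,1]$, and then observe that because $s_{\syn,\pre,i} \ge 0$ and $\maxcond_{\syn,\pre,i} > 0$ every synaptic current is a \emph{passive}, sign-definite term in $v_i - \nernst_\syn$, so the neuron-by-neuron boundary argument of \cref{lem:invariant_set} is not spoiled by the interconnection once the synaptic reversal potentials are folded into the voltage bounds. Everything else is a direct restatement of \cref{lem:invariant_set,lem:int_dyn_contraction}.
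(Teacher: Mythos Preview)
Your proposal is correct and is precisely the argument the paper has in mind: the paper in fact omits the proof of this proposition entirely, stating only that it ``is a trivial extension of \cref{prop:parametrization}'' and giving a worked example instead. Your write-up supplies exactly the expected details---the invariance of $[0,1]$ for $s_{\syn,\pre,i}$ via $a_\syn\tau_\syn\sigma_\syn\in(0,1)$, the inclusion of $\nernst_\syn$ in the voltage bounds so the boundary argument of \cref{lem:invariant_set} goes through neuron by neuron, and the uniform lower bound $\tau_\syn^{-1}\ge b_\syn$ for the synaptic diagonal entries of $A$---so there is nothing to correct.
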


Since  \cref{prop:network_parametrization} is a trivial extension of \cref{prop:parametrization}, we omit its proof and present a concrete example instead:
\begin{ex}
	\label{ex:HCO_description} 
	A Half-Center Oscillator (HCO) is a circuit composed of two neurons mutually coupled by inhibitory synapses. This elementary network is the simplest example of a Central Pattern Generator, a type of neural network that plays an important role in the generation of autonomous rhythms for motor control \cite{marder_central_2001}.
	A simple HCO model is obtained 	by interconnecting two HH neurons with a GABA-type\footnote{Gamma-aminobutyric acid (GABA) is a neurotransmitter associated with inhibitory synapses.} synaptic current $I_\GABA$, and adding to each of the neurons an intrinsic calcium current $I_\Ca$ \cite{dethier_positive_2015}. This results in $\mathcal{I} = \{\Na,\K,\Ca\}$, 
	$\mathcal{S} = \{\GABA\}$, and voltage dynamics given by
	\begin{equation*}
	\begin{split}
	c_i \dot{v}_i = 
	&-\maxcond_{\Na,i} m_{\Na,i}^3 h_{\Na,i}(v_i-\nernst_{\Na})
	-\maxcond_{\K,i} m_{\K,i}^4 (v_i-\nernst_{\K}) \\
	&-\maxcond_{\Ca,i}m_{\Ca,i}^3 h_{\Ca,i}(v_i-\nernst_{\Ca})
	-\maxcond_{\GABA,p,i}s_{\GABA,p,i}(v_i-\nernst_{\GABA}) \\
	&-\maxcond_{\Leak,i}(v_i-\nernst_{\Leak}) + \Iapp_i
	\end{split}
	\end{equation*}
	for $i,p\in \mathcal{N}=\{1,2\}$ and $p\neq i$. The gating
	variables of each neuron, which evolve according to 
	\eqref{eq:activation}-\eqref{eq:activation} and
	\eqref{eq:syn_activation}, are collected in 
	$w^{(i)} = (m_{\Na,i},h_{\Na,i},m_{\K,i},m_{\Ca,i},h_{\Ca,i},s_{\GABA,i,p})^ \transp$.
Now let 
\begin{equation}
	\label{eq:maxcond_parametrization_1} 
	\maxcond^{(i)} = (\maxcond_{\Na,i},\maxcond_{\K,i},\maxcond_{\Ca,i},
	\maxcond_{\GABA,p,i},\maxcond_{\Leak,i})^\transp
\end{equation}
for $i,p\in \mathcal{N}=\{1,2\}$ and $p\neq i$. Then we can parameterize
the HCO according to
\begin{equation}
	\label{eq:maxcond_parametrization_2} 
	\parv = \col(\maxcond^{(1)},\maxcond^{(2)})
\end{equation}
with $\maxcond^{(1)}$ and $\maxcond^{(2)}$ given by \eqref{eq:maxcond_parametrization_1}.
Letting $v=(v_1,v_2)^\transp$ and $w = \col(w^{(1)},w^{(2)})$, the voltage dynamics of the model can then be written as \eqref{eq:dv_true}, where
\[		
	\begin{split}
		\Phi(v,w) &=
		\begin{bmatrix}
			\varphi(v_1,w^{(1)}) & 0 \\
			0 & \varphi(v_2,w^{(2)})
		\end{bmatrix} \\
		\bv(t) &= (\Iapp_1(t)/c_1, \Iapp_2(t)/c_2)^\transp
	\end{split}
\]
with
\[
		\varphi(v_i,w^{(i)}) = 
		-\frac{1}{c_i}
		\begin{pmatrix}
		m_{\Na,i}^3 h_{\Na,i}(v_i-\nernst_{\Na})
		\\[.5em]
		 m_{\K,i}^4 (v_i-\nernst_{\K})
		\\[.5em]
 		m_{\Ca,i}^3 h_{\Ca,i}(v_i-\nernst_{\Ca})
		\\[.5em]
		s_{\GABA,p,i}(v_i-\nernst_\GABA)
		\\[.5em]
		v_i-\nernst_\Leak
		\end{pmatrix}^\transp
\]
for $i=1,2$ and $p\neq i$.
\eoe
\end{ex}

\section{Estimation of the output dynamics}
\label{sec:no_uncertainty} 

In this section, we simplify the problem statement of \cref{sec:model_structure} by considering the case in which there is no uncertain parameter $\eta$, that is, the internal dynamics are assumed to be perfectly known. Hence we consider the simplified model 
\begin{subequations}
\label{eq:true_system_simple} 
	\begin{align}
		\label{eq:dv_true_simple} 
		\dot{v} &= 
		\Phi(v,w,u) \parv 
			+ \bv(v,w,u) \\
		\label{eq:dw_true_simple} 
		\dot{w} &= 
		A(v) w + \bw(v) \\
		\label{eq:dtheta_true_simple}
	  	\dot{\parv} &= 0
	\end{align}
\end{subequations}
satisfying \cref{assum:true_invariant_set,assum:int_dyn_contraction}. This simplified model already deserves attention. 
In \cref{sec:challenges}, we illustrate how a naive output-error estimation scheme leads to issues. Then, 
Using the basic properties of
\cref{sec:modeling}, we propose a  least squares method in \cref{sec:batch_forgetting}, and a recursive least squares (RLS)--based adaptive observer in \cref{sec:simple_observer}.
 
\subsection{Challenges in neuronal model estimation}
\label{sec:challenges}

The neuronal behaviours displayed by conductance-based models range from the simple spiking oscillations of single neurons to the large-scale rhythmic computations performed by cortical networks \cite{yuste_cortex_2005}. This is because despite the fact that conductance-based models have contracting internal dynamics, their overall dynamics are in general non-contracting. In the terminology of linear systems, they have stable zeros, but possibly unstable poles. In biophysical terms, this happens due to intrinsic ionic currents with a negative differential conductance, a source of positive feedback and instability 
\cite{sepulchre_control_2019-1}.

\begin{ex}
	\label{ex:positive_feedback} 
	Consider the HH model of	\cref{ex:HH_description}, with
	reversal potentials such that 
	$\nernst_\K< \nernst_{\rm{L}} < \nernst_\Na$. For any input 
	current such that $|\Iapp_\app(t)| <	\maxcond_{\rm{L}} (\nernst_\Na -
	 \nernst_{\rm{L}})$ for all $t\ge 0$, the voltage bounds from \cref{lem:invariant_set} show that the membrane voltage
	satisfies $v < v_\Na$ for all $t \ge 0$. This implies that 
	\[
		-\partial_{m_\Na} I_\Na \, \partial_{v} \dot{m}_\Na > 0
	\]
	at any admissible equilibrium of the system,
	and thus the Sodium current $\INa$ introduces positive feedback
	to the membrane voltage of the HH model.
	\eoe
\end{ex}

The non-contracting nature of neuronal dynamics is the main reason why traditional parameter estimation methods based on \textit{output-error} (or \textit{simulation-error}) criteria \cite{ljung_system_1999,manchester_identification_2011-1} cannot be effectively applied to conductance-based models. This is illustrated by means of a numerical example:
\begin{ex} 
	\label{ex:output_error} 
	Consider the typical biophysical parameters of the HH model shown in
	\cref{tab:HH_nominal_bio_parameters} below. Owing to the large value of 
	$\maxcond_\Na$, the positive feedback introduced by the Sodium current
	$I_\Na$ dominates the model dynamics in some regions of the state-space
	\cite{izhikevich_dynamical_2007}. Using these parameters, Figure
	\ref{fig:HH_spike} illustrates the excitability of the model.
	
	\begin{table}[b]
		\caption{Parameters of the HH model
		\cite{hodgkin_measurement_1952,izhikevich_dynamical_2007}.}
		\normalsize
		\centering
		\begin{tabular}{|p{.8cm}|p{.8cm}|p{.8cm}|p{.8cm}|
			p{.8cm}|p{1.0cm}|p{.6cm}|}
		\hline
		$\maxcond_\Na$ &	$\maxcond_\K$ & $\maxcond_{\rm{L}}$ &
		$\nernst_\Na$ & $\nernst_\K$ & $\nernst_{\rm{L}}$ &
		$c$ 
		\\
		\hline
		$120$ & $36$ & $0.3$ & $55$ & $-77$ & $-54.4$ & $1$ 
		\\
		\hline
		\end{tabular}
		\label{tab:HH_nominal_bio_parameters} 
	\end{table}

\begin{figure}[t]
	\centering
	\if\usetikz1
	\begin{tikzpicture}
		\begin{groupplot}[
			group style={group size=1 by 2,
			},
			height=3.0cm,width=8.0cm, 
			filter discard warning=false,
			axis y line = left,
			xlabel={$t \; \mathrm{[ms]}$},
			axis x line = bottom,	
			tick label style={font=\scriptsize},	
			label style={font=\scriptsize},
			legend style={font=\footnotesize},
			]
			\nextgroupplot[ymin=-0.25,ylabel=$\mathrm{[\upmu A / cm^2]}$]
			\addplot[color=blue,semithick]
				table[x index=0,y index=1] 
				{./data/HH_spike.txt};
				\addlegendentry{$\Iapp(t)$}; 
			\nextgroupplot[ymin=-90,ylabel=$\mathrm{[mV]}$]
			\addplot[color=blue,semithick]
				table[x index=0,y index=2] 
				{./data/HH_spike.txt};
				\addlegendentry{$v(t)$}; 
		\end{groupplot}
	\end{tikzpicture}
	\else
		\includegraphics[scale=1]{./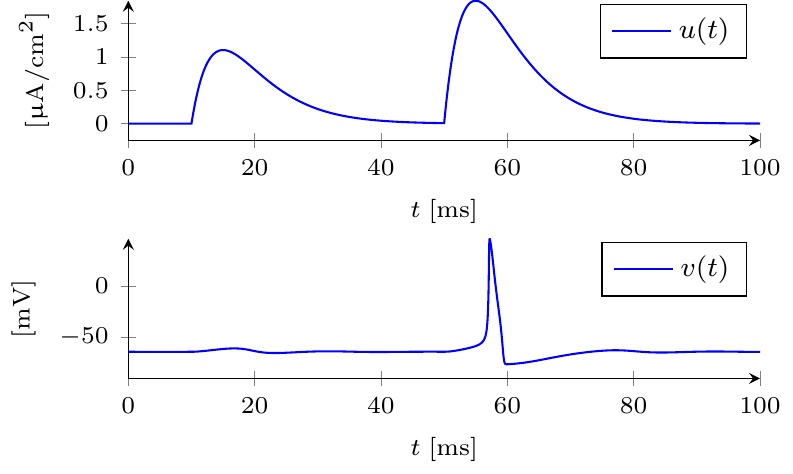}
	\fi
	\caption{Excitability in the HH model. A small current pulse causes no spike, 
	while a larger current pulse causes a spike. 
	HH parameter values are described in \cref{tab:HH_nominal_bio_parameters}
	and \cref{sec:HH_parameters}.}
	\label{fig:HH_spike}
\end{figure}
	
	Now, consider the parametrization given by 
	$\parv = \maxcond_\Na$, and suppose $\maxcond_\Na$ must
	be estimated from the continuous-time measurements 
	$\Iapp(t)$ and $v(t)$ shown in Figure
	\ref{fig:HH_spike}. In a naive application of the
	prediction-error method \cite{ljung_system_1999},
	a \textit{predictor model} would be given by
	\begin{equation}
	\label{eq:estimator_batch} 
	\begin{split}
		c \dot{\vest} &= -\maxcondest_\Na 
		\mest_\Na^3 \hest_\Na (\vest - \nernst_\Na) + 
		\bv(\vest,\mest_\K,\Iapp)
		\\
		\dot{\west} &= 
		A(\vest) \west + \bw(\vest)
	\end{split}
	\end{equation}
	with $\west = (\mest_\Na,\hest_\Na,\mest_\K)^\transp$ and
	\[
		\bv(\vest,\mest_\K,\Iapp) = 
		-\maxcond_\K \mest_\K^4(\vest-\nernst_\K) 
		- \maxcond_\Leak(\vest-\nernst_\Leak) + \Iapp 
	\]
	The estimate $\maxcondest_\Na$ is then obtained by minimizing
	the \textit{output-error} 
	cost function
	\begin{equation}
		\label{eq:HH_cost} 
		V(\maxcondest_\Na,\west(0),T) 
		= \frac{1}{T} \int_0^T 
		(v(t) - \vest(t))^2 dt
	\end{equation}
	in $\maxcondest_\Na$ and $\west(0)$. The issue with
	this approach is that it may not be trivial to find a
	global minimum for $\maxcondest_\Na$ using numerical
	methods, even when the problem is simplified by fixing 
	$(\vest(0),\west(0)) = (v(0),w(0))$. 
	The reason can be visualized in Figure \ref{fig:HH_cost}, where we have
	plotted the cost function $V(\maxcondest_\Na,w(0),T)$ obtained with the
	input-output traces from Figure \ref{fig:HH_spike} ($T=100$).
	It can be seen that the cost function is nearly discontinuous between 
	$\maxcondest_\Na = 107$ and $\maxcondest_\Na = 108$.
	The behaviour underlying this near discontinuity is unveiled in Figure
	\ref{fig:HH_missedspike}, where two solutions of \eqref{eq:estimator_batch}
	are plotted corresponding to the estimates $\maxcondest_\Na = 107$ and 
	$\maxcondest_\Na = 108$. It is the spike which appears when increasing 
	$\maxcondest_\Na$ that causes a sudden change in the cost function. 
	The cost function contains one near discontinuity, since there is a single
	spike being fired. For a dataset with multiple spikes, the cost function
	would rapidly become intractable.
	
\begin{figure}[t]
	\centering
	\if\usetikz1
	\begin{tikzpicture}
		\begin{groupplot}[
			group style={group size=1 by 2,
			},
			height=4.0cm,width=8.0cm, 
			filter discard warning=false,
			axis y line = left,
			xlabel={$\maxcondest_\Na$}, 
			axis x line = bottom,	
			tick label style={font=\scriptsize},	
			label style={font=\scriptsize},
			legend style={font=\footnotesize},
			]
			\nextgroupplot[ymax=230,ymin=0]
			\addplot[color=blue,semithick]
				table[x index=0,y index=1] 
				{./data/HH_cost.txt};
				\addlegendentry{$V(\maxcondest_\Na,T)$}; 
			\draw [dashed] (108,0) -- (108,230);
			\draw [dashed] (107,0) -- (107,110);
			\node at (109.5,50)[] {\scriptsize $108$}; 
			\node at (105.5,50)[] {\scriptsize $107$};			
			\nextgroupplot[
				ymin = -50,
				ymax = 50,
				]
			\addplot[color=blue,semithick]
				table[x index=0,y index=1] 
				{./data/HH_dcost.txt};
				\addlegendentry{$\partial_{\maxcondest_\Na}V(\maxcondest_\Na,T)$}; 
		\end{groupplot}
	\end{tikzpicture}
	\else
		\includegraphics[scale=1]{./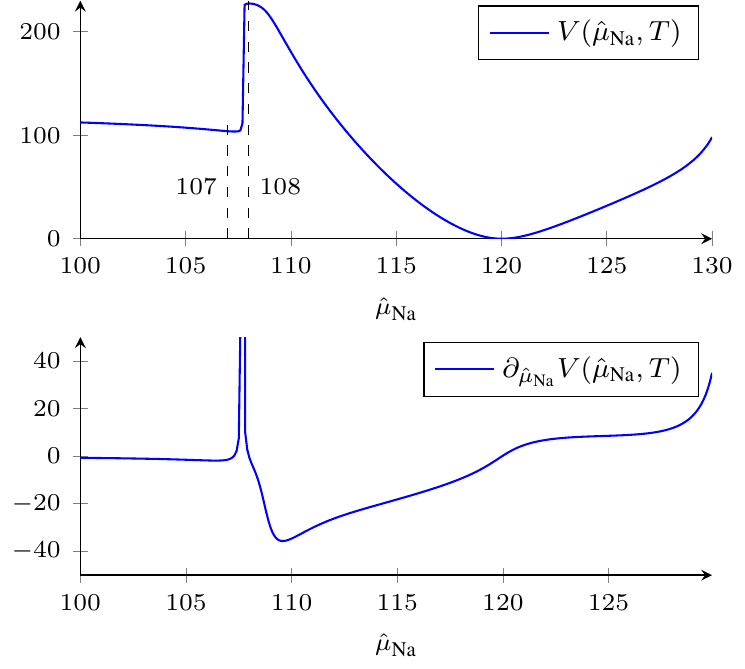}
	\fi
	\caption{Cost function $V(\maxcondest_\Na,w(0),T)$ given by 
	\eqref{eq:HH_cost} and its gradient. }
	\label{fig:HH_cost}
\end{figure}	
\eoe
\end{ex}

\cref{ex:output_error} illustrates the more general
problem of lack of tractability in estimating the
parameters of a non-contracting system with an 
output-error criterion
\cite{ribeiro_smoothness_2020,abarbanel_estimation_2008}.
In fact, the lack of contraction is
the root cause of what has been called the 
``exploding gradient'' problem in deep learning theory
\cite{pascanu_difficulty_2013}. The exploding gradient
is clearly visible in Figure \ref{fig:HH_cost}.

\begin{figure}[t]
	\centering
	\if\usetikz1
	\begin{tikzpicture}
		\begin{groupplot}[
			group style={group size=1 by 2,
			},
			height=3.0cm,width=8.0cm, 
			filter discard warning=false,
			axis y line = left,
			xlabel={$t \; \mathrm{[ms]}$},
			axis x line = bottom,	
			tick label style={font=\scriptsize},	
			label style={font=\scriptsize},
			legend style={font=\footnotesize},
			]
			\nextgroupplot[ymin=-90,ylabel=$\mathrm{[mV]}$,legend pos = north west]
			\addplot[color=blue,semithick]
				table[x index=0,y index=2] 
				{./data/HH_spike.txt};
				\addlegendentry{$v(t)$}; 
			\addplot[color=red,semithick]
				table[x index=0,y index=3] 
				{./data/HH_spike.txt};
				\addlegendentry{$\vest(t)\, , \, \maxcondest_\Na = 107$}; 
			\nextgroupplot[ymin=-90,ylabel=$\mathrm{[mV]}$,legend pos = north west]
			\addplot[color=blue,semithick]
				table[x index=0,y index=2] 
				{./data/HH_spike.txt};
				\addlegendentry{$v(t)$}; 
			\addplot[color=red,semithick]
				table[x index=0,y index=4] 
				{./data/HH_spike.txt};
				\addlegendentry{$\vest(t)\, , \, \maxcondest_\Na = 108$}; 
		\end{groupplot}
	\end{tikzpicture}
	\else
		\includegraphics[scale=1]{./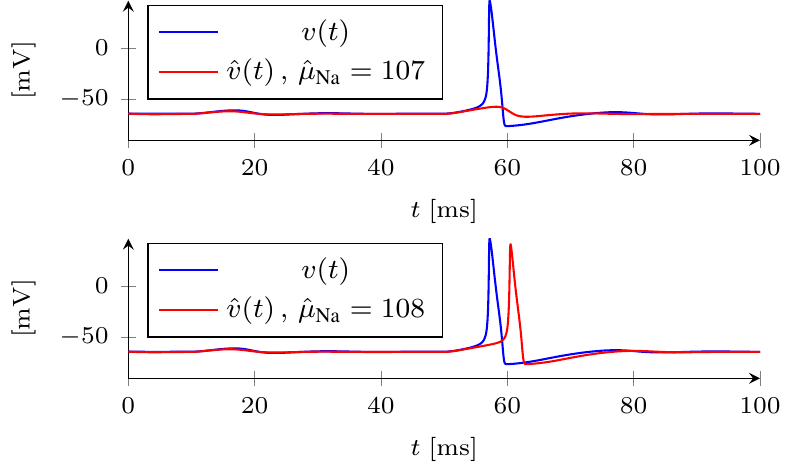}
	\fi
	\caption{Solutions of the HH predictor \eqref{eq:estimator_batch}
	for	two values of $\maxcondest_\Na$ (red), compared to the solution of the
	true model displayed in Figure \ref{fig:HH_spike} (blue).}
	\label{fig:HH_missedspike}
\end{figure}

\subsection{Least squares and reduced-order observer}
\label{sec:batch_forgetting} 

A simple least squares solution to the problem of estimating the parameters of \eqref{eq:true_system_simple} exploits the following observation:
\begin{rmk}
\label{rmk:reduced_observer} 
\cref{assum:int_dyn_contraction} implies that the system 
\begin{equation}
	\label{eq:reduced_observer} 
	\dot{\west} = A(v) \west + \bw(v)
\end{equation}
is a globally exponentially convergent reduced-order identity observer for the dynamics \eqref{eq:dw_true_simple}. More precisely, as $t~\to~\infty$ we have $\west(t) \to w(t)$ for any piecewise continuous $v(t)$ and any initial conditions $\west(0),w(0) \in \setreal^\nw$.
\end{rmk}

We employ the reduced-order observer \eqref{eq:reduced_observer} to
obtain estimates $\west$ of the internal states 
$w$. Contraction of the internal states
suggests postulating the predictor model
\begin{subequations}
	\label{eq:eq_error_estimator} 
\begin{align}
	\label{eq:predictor_voltage} 
	\dot{\vest} &= \Phi(v,\west,u)\hat{\parv} 
	+ \bv(v,\west,u) \\
	\label{eq:predictor_int} 
	\dot{\west} &= 
	A(v)\west + \bw(v)
\end{align}
\end{subequations}
which, 
for $\west(0)=w(0)$, 
reduces to a continuous-time \textit{equation-error} model structure
\cite{ljung_system_1999,astrom_adaptive_2008}. Classical system identification theory 
\cite[Section 2]{astrom_adaptive_2008} thus suggests performing parameter estimation by solving the regularized problem
\begin{equation}
	\label{eq:batch_problem}
	\hat{\parv}(T) = \min_{\hat{\parv}} V(\hat{\parv},T) + \hat{\parv}^\transp R_0(T) \, \hat{\parv}
\end{equation}
with the weighted cost function
\begin{equation}
	\label{eq:cost_forgetting} 
	V(\hat{\parv},T) = \frac{1}{T} \int_0^T 
	e^{-\alpha(T-\tau)}\|H\dot{v}(\tau)-H\dot{\vest}(\tau)\|^2 d\tau
\end{equation}
where $\alpha>0$ is a \textit{forgetting factor}, introduced to
discount the initial error between $w(0)$ and $\west(0)$, $R_0$ is a symmetric positive semidefinite matrix, and $H$ is the operator of a strictly proper LTI filter introduced to avoid differentiating $v(t)$. Choosing the simple filter
\begin{equation}
	\label{eq:filter} 
	H(s) = \frac{\gain}{s+\gain}
\end{equation}
leads to
\begin{equation}
	\label{eq:filtered} 
	\begin{split}
	H\dot{\vest}(t) &= \Psi(t) \hat{\parv} + H \hat{\bv}(t)
	\\
	\dot{\Psi}(t) &= -\gamma\Psi(t) + \gain \Phi(v(t),\west(t),u(t))\\
	\hat{\bv}(t) &= \bv(v(t),\west(t),u(t))
	\end{split}
\end{equation}
which shows that \eqref{eq:batch_problem}-\eqref{eq:cost_forgetting} is now quadratic in $\hat{\parv}$ (notice $H\dot{v}$ is obtained by filtering the data $v$ with $sH(s)$). It follows that the batch problem \eqref{eq:batch_problem} admits a well-known solution based on the normal equation \cite[p. 55]{astrom_adaptive_2008}. That $\hat{\parv}(T) \to \parv$ as $T \to \infty$ will be shown to be a consequence of the convergence properties of the adaptive observer introduced in \cref{sec:simple_observer}.

\subsection{RLS-based adaptive observer}
\label{sec:simple_observer} 

Consider the system \eqref{eq:true_system_simple}. An adaptive observer for this system is given by
\begin{equation}
	\label{eq:adaptive_observer}
	\begin{split}
	  	\dot{\yest} &= 
		\Phi(v,\west,u)\hat{\parv} 
		+ \bv(v,\west,u)
	  	+ (\gain I +\Psi P \Psi^\transp) 
	  	(v-\yest)
	  	\\
	  	\dot{\west} &= 
		A(v)\west + \bw(v)
	  	\\
	  	\dot{\hat{\parv}} &= \gain P \, 
	  	\Psi^\transp \, (v-\yest)
	\end{split}
\end{equation}
where $\gain>0$ is a constant gain, and the matrices $P$ and $\Psi$  evolve according to
\begin{subequations}
	\label{eq:adaptive_matrices} 
	\begin{align}
	  	\label{eq:dPsi}
	  	\dot{\Psi} &= 
		- \gain \Psi + 
		\gain \Phi(v,\west,u), & \Psi(0) = 0
	  	\\
	  	\label{eq:dP} 
	  	\dot{P} &= \alpha P -  
	  	P \, \Psi^\transp \Psi P, & P(0) \succ 0
	\end{align}
\end{subequations}
where $\alpha>0$ is a constant forgetting factor.
The assumption that $\Psi(0) = 0$ is made without loss of generality.

\begin{rmk}
The adaptive observer \eqref{eq:adaptive_observer}-\eqref{eq:adaptive_matrices} relates to a number of designs in the literature. 
For instance, when we remove the internal dynamics ($\nw = 0$) and set $\alpha=\gamma$, then \eqref{eq:adaptive_observer}-\eqref{eq:adaptive_matrices} is similar to the high-gain design proposed in \cite{farza_adaptive_2009}.
Also, if $w(t)$ is assumed to be known, then by replacing $\hat{w}$ by $w$ in \eqref{eq:adaptive_observer} we recover a nonlinear variant of the classical linear design of \cite{zhang_adaptive_2001}. Finally, setting $P=I$ and $\Psi = \Phi$, and removing the adaptive gain and its dynamics \eqref{eq:adaptive_matrices}, it reduces to the design proposed in \cite{besancon_remarks_2000}, which can be thought of as being based on the Least-mean squares algorithm rather than RLS.
\end{rmk}
We will show that the convergence of the adaptive observer above does not require a high gain; a discussion on the benefits of tuning $\alpha$ and $\gamma$ will also be presented in \cref{sec:robustness}. Furthermore, we can prove the design \eqref{eq:adaptive_observer}-\eqref{eq:adaptive_matrices} is directly connected to the least squares problem discussed earlier:
\begin{prop}
	\label{prop:rls} 
	The adaptive observer \eqref{eq:adaptive_observer}-\eqref{eq:adaptive_matrices} implements the recursive least squares (RLS) solution of the least squares problem  \eqref{eq:batch_problem}-\eqref{eq:filtered}, with $R_0(T) = e^{-\alpha T} P^{-1}(0)/T$.
\end{prop}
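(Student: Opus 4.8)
The plan is to write the minimizer of the batch problem \eqref{eq:batch_problem}--\eqref{eq:filtered} in closed form, to identify its weighting matrix with the matrix $P$ of \eqref{eq:dP}, to differentiate the closed-form minimizer so as to obtain an ODE for it, and finally to show that this ODE coincides with the $\hat{\parv}$-equation of the observer \eqref{eq:adaptive_observer}.

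First I would note that, by \eqref{eq:filtered}, the integrand of the cost \eqref{eq:cost_forgetting} equals $\|z(\tau)-\Psi(\tau)\hat{\parv}\|^2$ with $z:=H\dot v-H\hat{\bv}$, and that $z$ and $\Psi$ do not depend on $\hat{\parv}$ (indeed $\west$, and hence $\hat{\bv}$ and $\Psi$, are driven only by $v$ and $u$). Hence \eqref{eq:batch_problem}--\eqref{eq:cost_forgetting} is a regularized weighted least-squares problem in $\hat{\parv}$; writing its normal equation, multiplying through by $T$ (which removes the harmless $1/T$ factors), and using $R_0(T)=e^{-\alpha T}P^{-1}(0)/T$, one obtains $S(T)\,\hat{\parv}(T)=b(T)$ with
\begin{align*}
  S(T) &:= e^{-\alpha T}P^{-1}(0)+\int_0^T e^{-\alpha(T-\tau)}\Psi^{\transp}(\tau)\Psi(\tau)\,d\tau,\\
  b(T) &:= \int_0^T e^{-\alpha(T-\tau)}\Psi^{\transp}(\tau)z(\tau)\,d\tau.
\end{align*}
Next I would verify that $S=P^{-1}$: from \eqref{eq:dP}, $\tfrac{d}{dt}P^{-1}=-P^{-1}\dot P P^{-1}=-\alpha P^{-1}+\Psi^{\transp}\Psi$, which is the linear ODE solved by $S$ with the same initial value $P^{-1}(0)$. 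Moreover $S(t)\succeq e^{-\alpha t}P^{-1}(0)\succ 0$, so $P(t)$ stays positive definite and the minimizer $\hat{\parv}(T)=P(T)b(T)$ is well-defined for every $T>0$. Differentiating this product with $\dot b=-\alpha b+\Psi^{\transp}z$ and $\dot P=\alpha P-P\Psi^{\transp}\Psi P$, the $\alpha$-terms cancel, leaving $\dot{\hat{\parv}}=P\Psi^{\transp}(z-\Psi\hat{\parv})$.

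The remaining step --- the only nontrivial one --- is to show that along trajectories of \eqref{eq:adaptive_observer}--\eqref{eq:adaptive_matrices} the observer innovation coincides with the (now time-varying) filtered equation error, i.e. $\gamma(v-\yest)=z-\Psi\hat{\parv}=:\xi$. I would set $\epsilon:=\xi-\gamma(v-\yest)$ and compute $\dot\epsilon$ using $\tfrac{d}{dt}(H\dot v)=-\gamma H\dot v+\gamma\dot v$, $\tfrac{d}{dt}(H\hat{\bv})=-\gamma H\hat{\bv}+\gamma\,\bv(v,\west,u)$, $\dot\Psi=-\gamma\Psi+\gamma\Phi(v,\west,u)$, the estimate equation $\dot{\hat{\parv}}=\gamma P\Psi^{\transp}(v-\yest)$, and the $\yest$-equation of \eqref{eq:adaptive_observer}. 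The terms containing $\dot v$ cancel, and --- crucially --- the contribution $\gamma\Psi P\Psi^{\transp}(v-\yest)$ arising from $\Psi\dot{\hat{\parv}}$ cancels the one arising from the output-injection gain $\Psi P\Psi^{\transp}$ in $\dot{\yest}$, leaving $\dot\epsilon=-\gamma\epsilon$. With the natural initializations $\Psi(0)=0$, $\yest(0)=v(0)$, and zero initial state of the filter $H$ --- consistent with the batch minimizer, which equals $0$ at $T=0$ --- one gets $\epsilon(0)=0$, hence $\epsilon\equiv 0$. Substituting $\gamma(v-\yest)=\xi=z-\Psi\hat{\parv}$ into the observer's estimate equation yields exactly $\dot{\hat{\parv}}=P\Psi^{\transp}(z-\Psi\hat{\parv})$; since the observer's parameter estimate and the batch minimizer then solve the same ODE from the same initial condition, they agree for all $t\ge 0$, with $P(T)=S(T)^{-1}$ the RLS covariance.

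I expect the main obstacle to be precisely the innovation identity $\gamma(v-\yest)=\xi$: it is where the particular structure of the output-injection gain $\gamma I+\Psi P\Psi^{\transp}$ is used, where the initialization $\yest(0)=v(0)$ enters, and where the ``Riccati'' term $\Psi P\Psi^{\transp}$ in $\dot{\yest}$ earns its place. The rest is the standard continuous-time RLS computation; the only points to keep in mind are that $P$ (equivalently $S$) must remain positive definite on $[0,T]$ --- guaranteed by the explicit formula for $S$ --- and that the $1/T$ weighting in $R_0$ and in $V$ is irrelevant to the minimizer.
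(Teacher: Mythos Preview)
Your proposal is correct and follows essentially the same route as the paper: form the normal equation, identify its information matrix with $P^{-1}$ via \eqref{eq:dP}, differentiate to obtain $\dot{\hat{\parv}}=P\Psi^{\transp}(H\dot v-H\hat{\bv}-\Psi\hat{\parv})$, and then establish the innovation identity $\gamma(v-\yest)=H\dot v-H\hat{\bv}-\Psi\hat{\parv}$. The only cosmetic difference is in that last step: the paper computes $\tfrac{d}{dt}(\Psi\hat{\parv})$, solves the resulting first-order ODE to get $\Psi\hat{\parv}=-\gamma Hv-H\hat{\bv}+\gamma\yest$, and then uses $\gamma(v-Hv)=H\dot v$, whereas you set $\epsilon=\xi-\gamma(v-\yest)$ and show $\dot\epsilon=-\gamma\epsilon$ with $\epsilon(0)=0$; these are the same linear computation (and you are more explicit than the paper about the initializations $\Psi(0)=0$, $\yest(0)=v(0)$, and zero filter state that make the identity exact).
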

\begin{proof}
	See \cref{proof:rls}.
\end{proof}

To show exponential convergence of the adaptive observer, we require a standard persistent excitation condition (see, for instance, \cite{besancon_remarks_2000,tyukin_adaptation_2007,farza_adaptive_2009}):
\begin{defn}
	\label{def:PE} 
	A time-varying matrix $M(t)$ is said to be persistently exciting (PE) if there exist $T>0$ and $\delta > 0$ such that for all $t \ge 0$, we have 
	\begin{equation*}
		\int_{t}^{t+T}
		M(\tau) M(\tau)^\transp d\tau \succeq \delta I	
	\end{equation*}
\end{defn}

\begin{assum}
	\label{assum:persistent_excitation}
	The signals $v(t)$ and $u(t)$ are such that for any trajectory of \eqref{eq:adaptive_observer}, the matrix $\Psi(t)^\transp$ is persistently exciting.
\end{assum}

It is well-known \cite{zhang_adaptive_2001} that \cref{assum:persistent_excitation} ensures uniform positive-definiteness of $P(t)$. In our context, we have:
\begin{lem}
	\label{lem:P}  
	Under \cref{assum:true_invariant_set,assum:int_dyn_contraction,assum:persistent_excitation}, the solution $\Psi(t)$ of \eqref{eq:dPsi} is bounded for all $t \ge 0$, and $P(t)$ is bounded and uniformly positive definite for all $t \ge 0$. In particular,
	\begin{equation}
	\label{eq:P_bounds}
		0 \preceq \Pmin I \prec P(t) \preceq \Pmax I
	\end{equation}
	for all $t \ge T$, with 
	\begin{equation}
		\label{eq:kappa1} 
		\begin{split}
			\Pmin &= 
\big(\|P^{-1}(0)\|+\alpha^{-1}\,\overline{\phi}^2\big)^{-1}
			 \\
			\Pmax &= 
			\delta^{-1} e^{2 \alpha T}
		\end{split}
	\end{equation}
	with $\overline{\phi} = \sup_{v\in V,w\in W,u\in U} \|\Phi(v,w,u)\|$.
\end{lem}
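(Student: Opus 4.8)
The plan is to first bound $\Psi(t)$, then linearize the Riccati-type equation \eqref{eq:dP} through the substitution $S:=P^{-1}$, and finally sandwich $S(t)$ between constant multiples of the identity to obtain the two bounds on $P(t)$. For the bound on $\Psi$: by \cref{assum:true_invariant_set} the driving signals satisfy $v(t)\in V$ and $u(t)\in U$ for all $t\ge 0$, and by \cref{rmk:saturation_w} we may take $\Phi(v,w,u)$ bounded in $w\in\setreal^\nw$, so that $\|\Phi(v(t),\west(t),u(t))\|\le\overline{\phi}$ along any trajectory of \eqref{eq:adaptive_observer}. Since $\Psi(0)=0$ and $\dot\Psi=-\gain\Psi+\gain\Phi$, variation of constants gives $\Psi(t)=\gain\int_0^t e^{-\gain(t-\tau)}\Phi(v(\tau),\west(\tau),u(\tau))\,d\tau$, whence $\|\Psi(t)\|\le\overline{\phi}(1-e^{-\gain t})\le\overline{\phi}$ for all $t\ge 0$; note that no separate bound on $\west$ is needed here, thanks to \cref{rmk:saturation_w}.

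Next I would recast \eqref{eq:dP}. Consider the linear time-varying initial value problem $\dot S=-\alpha S+\Psi^\transp\Psi$ with $S(0)=P^{-1}(0)\succ 0$; its right-hand side is continuous and affine in $S$, so its solution exists on all of $[0,\infty)$ and, by variation of constants, $S(t)=e^{-\alpha t}P^{-1}(0)+\int_0^t e^{-\alpha(t-\tau)}\Psi^\transp(\tau)\Psi(\tau)\,d\tau\succeq e^{-\alpha t}P^{-1}(0)\succ 0$. A direct differentiation shows that $S^{-1}$ is a globally defined, positive definite solution of \eqref{eq:dP} with initial value $P(0)$, so by uniqueness of solutions to the polynomial equation \eqref{eq:dP} we have $P(t)=S(t)^{-1}$ for all $t\ge 0$. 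This step simultaneously rules out a finite escape time of the Riccati flow and yields $P(t)\succ 0$ for all $t$.

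The two estimates now follow by bounding $S(t)$ from above and from below. From $\|\Psi(\tau)\|\le\overline{\phi}$ and $e^{-\alpha t}P^{-1}(0)\preceq\|P^{-1}(0)\|\,I$ one gets $S(t)\preceq\big(\|P^{-1}(0)\|+\alpha^{-1}\overline{\phi}^2\big)I$ for all $t\ge 0$; inverting gives $P(t)\succeq\Pmin I$ with $\Pmin$ as in \eqref{eq:kappa1}, the inequality being strict for $t>0$ because $e^{-\alpha t}P^{-1}(0)\prec\|P^{-1}(0)\|\,I$ there. For the upper bound on $P(t)$ I would fix $t\ge T$, discard the nonnegative term $e^{-\alpha t}P^{-1}(0)$, restrict the remaining integral to a persistent-excitation window of length $T$ inside $[0,t]$ (e.g.\ $[t-T,t]$, on which $e^{-\alpha(t-\tau)}\ge e^{-\alpha T}$), and apply \cref{assum:persistent_excitation} with $M=\Psi^\transp$, so that $\int\Psi^\transp\Psi\,d\tau\succeq\delta I$ over that window; this produces a lower bound on $S(t)$ of the form $c\,\delta\,I$ with $c>0$ depending only on $\alpha$ and $T$, hence $P(t)\preceq\Pmax I$ with $\Pmax$ as in \eqref{eq:kappa1}. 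Boundedness of $P$ on the compact interval $[0,T]$ is automatic from continuity, so $P$ is bounded on $[0,\infty)$.

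The only non-routine step is the substitution $S=P^{-1}$, and in particular the observation that it yields global existence and positive-definiteness of $P(t)$ in one stroke: without it, the quadratic term $-P\Psi^\transp\Psi P$ in \eqref{eq:dP} leaves open the possibility of finite-time blow-up, so that the ``for all $t$'' assertions would not be justified. Everything else reduces to variation-of-constants estimates and one careful choice of sub-window for the persistent-excitation inequality that produces $\Pmax$.
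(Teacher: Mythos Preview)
Your proposal is correct and follows essentially the same route as the paper: the paper also inverts the Riccati flow via $R:=P^{-1}$, obtains $\dot R=-\alpha R+\Psi^\transp\Psi$ with the explicit solution \eqref{eq:R_sol}, bounds $\|\Psi(t)\|\le\overline{\phi}$ to get the upper bound on $R$ (hence $\Pmin$), and invokes the persistent-excitation window together with \cite[Lemma~1]{zhang_adaptive_2001} to get the lower bound on $R$ (hence $\Pmax$). Your write-up is in fact a bit more careful than the paper's on two points: you explicitly justify global existence and positivity of $P(t)$ via the affine $S$-equation before inverting, and your single-window estimate on $[t-T,t]$ actually yields $S(t)\succeq e^{-\alpha T}\delta I$, i.e.\ the sharper constant $\Pmax=\delta^{-1}e^{\alpha T}$, which of course still implies the stated bound $\delta^{-1}e^{2\alpha T}$.
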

\begin{proof}
	See \cref{proof:P}.
\end{proof}

We can now state a global convergence result\footnote{\cref{thm:convergence_linear} can also be proven with classical (non-differential) Lyapunov arguments, in the fashion of \cite{zhang_adaptive_2001,farza_adaptive_2009}. Our proof relies instead on contraction (differential) analysis, which is useful to the results of the next section.} for the simple adaptive observer \eqref{eq:adaptive_observer}-\eqref{eq:adaptive_matrices}. 

\begin{thm}
	\label{thm:convergence_linear} 
	Consider the systems \eqref{eq:true_system_simple} and	\eqref{eq:adaptive_observer}-\eqref{eq:adaptive_matrices}, and let	    \cref{assum:true_invariant_set,assum:int_dyn_contraction,assum:persistent_excitation} hold. Let $\gamma > 0$ and $\alpha > 0$. Then, globally, we have	
	\[
		\col(\vest(t),\west(t),\hat{\parv}(t)) \to \col(v(t),w(t),\parv)
	\]
	exponentially fast as $t \to \infty$, with a convergence rate given by arbitrary $\ratevirt < \min\{\alpha,\rateint,\gamma\}$.
\end{thm}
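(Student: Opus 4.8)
The plan is to combine \cref{lem:P} with the contraction of two virtual systems arranged hierarchically: first the internal-state dynamics, then the $(v,\parv)$ dynamics driven by the already-converging internal-state estimate. For the internal state, by \cref{rmk:reduced_observer} the copy $\dot{\hat w}=A(v)\hat w+\bw(v)$ is a virtual system whose trajectories contain both $w(\cdot)$ and $\hat w(\cdot)$ for the shared exogenous signal $v(\cdot)$, and \cref{assum:int_dyn_contraction} makes it contracting in the metric $\Pint$; hence $\hat w(t)\to w(t)$ exponentially at rate $\rateint$. Since \cref{rmk:saturation_w} allows us to take $\Phi$ and $\bv$ globally Lipschitz in $w$ (uniformly on $V\times U$), the signal
\[
  d(t):=\big[\Phi(v,w,u)-\Phi(v,\hat w,u)\big]\parv+\big[\bv(v,w,u)-\bv(v,\hat w,u)\big]
\]
satisfies $\|d(t)\|\le\bar\ell\,\|\hat w(t)-w(t)\|$ for some constant $\bar\ell\ge 0$, and therefore vanishes exponentially at the same rate.

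Next I would treat $\hat w(\cdot),\Psi(\cdot),P(\cdot)$ as exogenous — this is the key modelling choice, because it makes the $\Psi$ of \eqref{eq:dPsi} the filtered version of exactly the regressor $\Phi(v,\hat w,u)$ that multiplies the parameter — and consider the virtual system with state $\xi=\col(\xi_v,\xi_\parv)$,
\begin{align*}
  \dot\xi_v&=\Phi(v,\hat w,u)\,\xi_\parv+\bv(v,\hat w,u)+(\gamma I+\Psi P\Psi^\transp)(v-\xi_v),\\
  \dot\xi_\parv&=\gamma P\Psi^\transp(v-\xi_v).
\end{align*}
By construction $\col(\hat v,\hat\parv)$ is a trajectory of this system (it is \eqref{eq:adaptive_observer}), and $\col(v,\parv)$ is the trajectory obtained by adding the forcing $\col(d(t),0)$ to the right-hand side. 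Passing to the coordinates $\delta\zeta:=\delta\xi_v-\gamma^{-1}\Psi\,\delta\xi_\parv$ and using $\dot\Psi=-\gamma\Psi+\gamma\Phi(v,\hat w,u)$ from \eqref{eq:dPsi} together with $\tfrac{d}{dt}P^{-1}=-\alpha P^{-1}+\Psi^\transp\Psi$ from \eqref{eq:dP}, the variational dynamics become triangular,
\[
  \dot{\delta\zeta}=-\gamma\,\delta\zeta,\qquad
  \dot{\delta\xi_\parv}=-P\Psi^\transp\Psi\,\delta\xi_\parv-\gamma P\Psi^\transp\,\delta\zeta .
\]
With the time-varying metric associated with $\|\delta\zeta\|^2+\|\delta\xi_\parv\|_{P^{-1}}^2$ — uniformly equivalent to the Euclidean one because of $\Pmin I\preceq P(t)\preceq\Pmax I$ and the boundedness of $\Psi(t)$ from \cref{lem:P} — a one-line estimate gives $\tfrac{d}{dt}\|\delta\xi_\parv\|_{P^{-1}}^2\le\gamma^2\|\delta\zeta\|^2-\alpha\|\delta\xi_\parv\|_{P^{-1}}^2$, so that a suitably weighted sum of the two terms decreases at any rate $\ratevirt<\min\{\alpha,\gamma\}$: the virtual system is contracting.

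To conclude, the standard robustness property of contracting systems bounds $\|\col(\hat v(t),\hat\parv(t))-\col(v(t),\parv)\|$ by an exponentially decaying term (rate $<\min\{\alpha,\gamma\}$) plus the convolution of the contraction kernel with $\|d(t)\|$; since $\|d(t)\|$ decays at rate $\rateint$, the combined bound decays at any rate $\ratevirt<\min\{\alpha,\rateint,\gamma\}$, the strict inequality absorbing the polynomial-in-$t$ factor that arises when two of the three rates coincide. Together with $\hat w(t)\to w(t)$ this yields the claimed global exponential convergence of $\col(\hat v,\hat w,\hat\parv)$ to $\col(v,w,\parv)$.

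The main obstacle is the choice of virtual system and metric. Keeping $\hat w$ (not the internal-state error) exogenous is what keeps $\Psi$ consistent with the regressor multiplying $\xi_\parv$, and the contraction metric is then revealed only by the Kreisselmeier/Zhang-type change of variables $\delta\zeta=\delta\xi_v-\gamma^{-1}\Psi\,\delta\xi_\parv$, which triangularizes the variational flow. Everything after that — the $P^{-1}$-weighted computation and the cascade bookkeeping of rates — is routine, but it rests entirely on \cref{lem:P}: without persistence of excitation the $\parv$-block $P^{-1}$ of the metric degenerates and the argument collapses.
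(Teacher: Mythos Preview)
Your proof is correct; the cascade decomposition, the change of variables $\delta\zeta=\delta\xi_v-\gamma^{-1}\Psi\,\delta\xi_\parv$, and the $P^{-1}$-weighted estimate all go through as you describe. The paper takes a slightly different route: rather than treating $\hat w$ as an exogenous signal and handling the $w$-error afterwards through the vanishing perturbation $d(t)$, it builds a \emph{single} virtual system in the full state $\xvirt=\col(\vvirt,\wvirt,\parvvirt)$ whose trajectories simultaneously contain $(v,w,\parv)$ and $(\hat v,\hat w,\hat\parv)$. The contraction metric is then the block matrix $M(t)=T(t)^\transp\bar M(t)T(t)$ with $T$ encoding your same transformation and $\bar M=\blkdiag(\varepsilon I,\,M_w,\,\varepsilon(\gamma P)^{-1})$; the cross-term $\partial_{\wvirt}\tilde f$ coupling the $v$- and $w$-blocks is absorbed by choosing $\varepsilon$ small enough, yielding a single global contraction inequality at any rate $\lambda<\min\{\alpha,\lambda_w,\gamma\}$. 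Your cascade argument is arguably more transparent and avoids the bookkeeping of the $3\times 3$ block inequality, at the price of a separate robustness step to recombine the rates. The paper's unified approach, on the other hand, produces a concrete contraction metric $M(t)$ for the full state, which it reuses verbatim in the proof of \cref{thm:convergence_internal} (where the cascade decoupling no longer holds cleanly) and in the robustness discussion of \cref{sec:robustness}; that reuse is the main dividend of the monolithic route.
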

\begin{proof}
	See \cref{proof:convergence_linear}.
\end{proof}

One should notice that the persistent excitation  \cref{assum:persistent_excitation} is classical yet difficult to check in practice, since it depends on system trajectories of a nonlinear system. However, the excitable (spiking) behavior of the neuronal circuits considered in this paper is an excellent source of excitation that can be reliably tapped through the application of superthreshold applied currents.

\section{Estimation under uncertainty}
\label{sec:main_result} 

The design in the previous section assumes no uncertainty in the model, which is unrealistic in a biophysical context. This section addresses different forms of uncertainty. \cref{sec:advanced_observer} deals with \textit{structured uncertainty} in the internal dynamics, modelled by the uncertain parameter $\parw$ in \eqref{eq:dw_true}. Building on the design \eqref{eq:adaptive_observer}-\eqref{eq:adaptive_matrices}, we present a locally convergent adaptive observer capable of estimating $\parw$ in addition to the unknown parameters $\parv$ in \eqref{eq:dv_true}. \cref{sec:noise} then discusses the problem of measurement errors and how the adaptive observer can be modified to mitigate that problem. Finally, \cref{sec:robustness} discusses the robustness of the adaptive observers with respect to unstructured uncertainty.

\subsection{Estimating uncertain internal dynamics parameters}
\label{sec:advanced_observer}

To deal with structured uncertainty in the internal dynamics of \eqref{eq:true_system}, we augment the simple adaptive observer \eqref{eq:adaptive_observer} to estimate the parameter vector $\parw$ as well. To design the observer the following is assumed: 

\begin{assum}
	\label{assum:known_sets} 
	Assume that compact sets $\Parv\in\setreal^{\nparv}$ and $\Parw \in \setreal^{\nparw}$ are known such that $\parv\in\Parv$ and $\parw\in\Parw$.
\end{assum}

\begin{rmk}
\label{rmk:saturation_parameters} 
Analogously to \cref{rmk:saturation_w}, under \cref{assum:true_invariant_set,assum:known_sets}, we can assume without loss of generality that for all $v \in V$, the functions $A(v,\parw)$ and $\bw(v,\parw)$ are globally Lipschitz and bounded in $\parw\in\setreal^{\nparw}$.
\end{rmk}

The augmented adaptive observer is given by
\begin{equation}	
	\label{eq:adaptive_observer_int}
	\begin{split}
	  	\dot{\yest} &= 
		\Phi(v,\hat{w},u)\hat{\parv} + \bv(v,\hat{w},u)
	  	+ (\gain I +\Psi_v P \Psi_v^\transp) 
	  	(v-\yest)
	  	\\
	  	\dot{\west} &= 
	  	A(v,\hat{\parw})\hat{w}+\bw(v,\hat{\parw})
	  	+ \Psi_w P \Psi_v^\transp (v-\hat{v})
	  	\\
	  	\col(\dot{\parvest},\dot{\parwest}) 
	  	&= \gamma P \Psi_v^\transp(v-\hat{v})
	\end{split}
\end{equation}
where $\gain>0$ is a constant gain, and the matrices $P$ and
\[  \Psi := \col(\Psi_v,\Psi_w)
\] evolve according to
\begin{subequations}
	\label{eq:adaptive_matrices_int}
	\begin{align}
	  	\label{eq:dPsi_internal}
		\dot{\Psi}
	  	 &= 
	  	 A_\Psi(t)
	  	\Psi + 
		\gain  B_\Psi(t)
		\\
		  	\label{eq:dP_internal} 
	  	\dot{P} &= \alpha P 
	  	+ \beta I  - P \, \Psi_v^\transp \Psi_v P
	\end{align}
\end{subequations}
Here, $\alpha>0$ and $\beta\ge0$ are constant hyperparameters, and the matrix functions in \eqref{eq:dPsi} are given by 
\begin{subequations}
\begin{equation*}
	\label{eq:A_Psi} 
	A_\Psi(t) = \begin{bmatrix}
			- \gain I & 
			\partial_{\hat{w}} [\Phi(v,\hat{w},u)\sat_\parv(\hat{\parv})+\bv(v,\hat{w},u)]\\
			0_{\nw\times \nv} & A(v,\hat{\parw})
	  	 \end{bmatrix}
\end{equation*}
and
\begin{equation*}
	B_\Psi(t) = 		  	
		  	\begin{bmatrix}
	  		\Phi(v,\hat{w},u) & 0_{\nv\times \nparw} \\
	  		0_{\nw\times \nparv} &  \partial_{\hat{\parw}}
		[A(v,\hat{\parw})\sat_w(\hat{w})+\bw(v,\hat{\parw})]
	  	\end{bmatrix}
\end{equation*}
\end{subequations}
with $\varsigma_\parv$ and $\varsigma_\parw$ smooth saturation functions (see \cref{rmk:saturation_w}).
Here, we assume without loss of generality that $\Psi_v(0)=0$, $\Psi_w(0)=0$, and $P(0) \succ 0$. 

As before we need a persistent excitation condition:
\begin{assum}
\label{assum:persistent_excitation_int} 
The signals $v(t)$ and $u(t)$ are such that for any trajectory of \eqref{eq:adaptive_observer_int}, the matrix $\Psi_v(t)^\transp$ is persistently exciting.
\end{assum}

The following result now parallels \cref{lem:P}:
\begin{lem}
\label{lem:P_int} 
Under 	\cref{assum:true_invariant_set,assum:int_dyn_contraction,assum:known_sets}
the solution $\Psi(t)$ of \eqref{eq:dPsi_internal} is bounded for all $t \ge 0$. In addition, under \cref{assum:persistent_excitation_int}, the solution $P(t)$ of \eqref{eq:dP_internal} is bounded and uniformly positive definite for all $t \ge 0$. In particular,
\begin{equation}
		\label{eq:P_bounds_beta}
		0 \prec \Pmin I \preceq P(t) \preceq \Pmax I
\end{equation}
for all $t \ge T$, with
\begin{equation}
	\label{eq:pbounds_beta}
	\begin{split}
	\Pmin  &= \left(\|P^{-1}(0)\|+ \alpha^{-1} \, \bar{c}^2 \right)^{-1} \\
	\Pmax &= \delta^{-1} e^{2\alpha T} 
(1 + \beta \, \delta^{-1} \alpha^{-3} e^{2\alpha T} \bar{c}^4)
	\end{split}
\end{equation}
with $\bar{c}$ a constant independent of $\alpha$, $\beta$, and $\gamma$.
\end{lem}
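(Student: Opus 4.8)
The plan is to mirror the three-part structure of the proof of \cref{lem:P}: (i) boundedness of $\Psi(t)$, from the block-upper-triangular structure of $A_\Psi(t)$ together with the contraction property of \cref{assum:int_dyn_contraction}; (ii) the lower bound $P(t)\succeq\Pmin I$, a linear Gr\"onwall estimate that needs no persistent excitation; and (iii) the upper bound $P(t)\preceq\Pmax I$, which will use \cref{assum:persistent_excitation_int} and must account for the injection term $\beta I$ in \eqref{eq:dP_internal}.

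For part (i), I would first note that by \cref{assum:true_invariant_set,rmk:saturation_w,rmk:saturation_parameters} every block of $A_\Psi(t)$ and $B_\Psi(t)$ is uniformly bounded along trajectories (the saturations keeping things bounded despite $\west,\parvest,\parwest$ being observer states), while the lower-right block $A(v,\parwest)$ is, by \cref{assum:int_dyn_contraction}, contracting in the metric $\Pint$ with rate $\rateint$; write $\overline{\phi}$ for a uniform bound on $\|\Phi(v,\west,u)\|$. Because $A_\Psi(t)$ is block-upper-triangular, the lower block obeys $\dot\Psi_w = A(v,\parwest)\Psi_w + \gamma G_w(t)$, with $G_w$ the lower block-row of $B_\Psi(t)$ and uniformly bounded, so contraction yields $\|\Psi_w(t)\|\le c_w\gamma$ for all $t\ge0$ (using $\Psi_w(0)=0$ and $\|\Pint\|=1$) with $c_w$ independent of $\gamma,\alpha,\beta$; substituting into the upper block $\dot\Psi_v = -\gamma\Psi_v + \gamma\Phi(v,\west,u) + G_v(t)\Psi_w$, with $G_v$ the off-diagonal block of $A_\Psi(t)$ and uniformly bounded, a scalar comparison gives $\|\Psi_v(t)\|\le\overline{\phi}+c_w\sup_t\|G_v(t)\|=:\bar{c}$ for all $t\ge0$ — the factors of $\gamma$ cancel, so $\bar{c}$ is the $\gamma,\alpha,\beta$-independent constant appearing in \eqref{eq:pbounds_beta} — and hence $\Psi(t)$ is bounded. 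For part (ii), I would put $S:=P^{-1}$, which is well defined for all $t\ge0$ since $\dot P\preceq\alpha P+\beta I$ bounds $\lambda_{\max}(P)$ on compact intervals while $\dot P\succeq-P\Psi_v^\transp\Psi_v P\succeq-\bar{c}^{\,2}P^2$ keeps $\lambda_{\min}(P)$ from vanishing; then $\dot S = -\alpha S-\beta S^2+\Psi_v^\transp\Psi_v\preceq-\alpha S+\bar{c}^{\,2}I$, and integrating gives $S(t)\preceq(\|P^{-1}(0)\|+\alpha^{-1}\bar{c}^{\,2})I$ for all $t\ge0$, that is, $P(t)\succeq\Pmin I$ with $\Pmin$ as in \eqref{eq:pbounds_beta}.

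For part (iii) I would again work with $S$. Part (ii) gives $0\preceq S\preceq\Pmin^{-1}I$, so the nuisance term satisfies $S^2\preceq\Pmin^{-2}I$, and feeding this into $\dot S = -\alpha S-\beta S^2+\Psi_v^\transp\Psi_v$ turns it into the linear inequality $\dot S\succeq-\alpha S+\Psi_v^\transp\Psi_v-\beta\,\Pmin^{-2}I$. Integrating this over one excitation window $[t-T,t]$ for $t\ge T$, discarding $S(t-T)\succeq0$, and invoking $\int_{t-T}^{t}\Psi_v(\tau)^\transp\Psi_v(\tau)\,d\tau\succeq\delta I$ from \cref{assum:persistent_excitation_int}, one obtains $S(t)\succeq(\underline{s}_0-\beta\,\Pmin^{-2}\alpha^{-1})I$ for all $t\ge T$, where $\underline{s}_0=\delta e^{-2\alpha T}$ is exactly the lower bound produced in the proof of \cref{lem:P} for the $\beta=0$ case; inverting, and using $\Pmin^{-1}=\|P^{-1}(0)\|+\alpha^{-1}\bar{c}^{\,2}$ (whence the $\alpha^{-3}\bar{c}^{\,4}$ scaling), would yield the stated $\Pmax$ and thus \eqref{eq:P_bounds_beta}. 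The hard part is this last step: unlike in \cref{lem:P}, the equation for $S$ is a genuine (quadratic) Riccati equation once $\beta>0$, so $-\beta S^2$ must first be dominated by the a priori pointwise bound of part (ii) before the persistent-excitation argument applies, and the remaining effort is the careful constant bookkeeping that casts the resulting estimate in the precise form \eqref{eq:pbounds_beta}.
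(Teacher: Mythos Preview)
Your treatment of parts (i) and (ii) is essentially the paper's: the paper bounds $\Psi_w$ column-by-column via \cref{assum:int_dyn_contraction} and then cascades to $\Psi_v$ exactly as you describe, obtaining the same hyperparameter-independent $\bar c$; and for the lower bound it invokes the Riccati comparison theorem with the auxiliary equation $\dot{\underline{P}}=\alpha\underline{P}-\underline{P}\Psi_v^\transp\Psi_v\underline{P}$, which is your $S$-inequality after discarding $-\beta S^2\preceq 0$.

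Part (iii), however, has a genuine gap. Absorbing $-\beta S^2$ by the pointwise bound $S\preceq\Pmin^{-1}I$ gives $S(t)\succeq(\delta e^{-2\alpha T}-\beta\Pmin^{-2}\alpha^{-1})I$, but this lower bound is only positive when $\beta$ is small relative to $\delta$, $\alpha$, $T$ and $\|P^{-1}(0)\|$, a restriction that is nowhere assumed in the lemma; and even when positive, the resulting upper bound on $P$ has the subtractive form $(\delta e^{-2\alpha T}-\beta\Pmin^{-2}\alpha^{-1})^{-1}$, which depends on $\|P^{-1}(0)\|$ through $\Pmin$ and cannot be massaged into the stated $\Pmax$ (which is independent of $P(0)$) by bookkeeping alone. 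The paper uses a genuinely different device: a Riccati comparison result of Delyon \cite[Lemma~2]{delyon_note_2001}, which says that if $\dot R=-\alpha R+\Psi_v^\transp\Psi_v$, $R(0)=0$, and $\dot D=-\alpha D+\beta R^2$, $D(0)=0$, then $P(t)\preceq R^{-1}(t)+R^{-1}(t)D(t)R^{-1}(t)$ once $R$ is invertible. Persistent excitation gives $R(t)\succeq\delta e^{-2\alpha T}I$ for $t\ge T$, while $\|R(t)\|\le\alpha^{-1}\bar c^{\,2}$ yields $\|D(t)\|\le\beta\alpha^{-3}\bar c^{\,4}$; combining these produces exactly the additive form of $\Pmax$ in \eqref{eq:pbounds_beta}, valid for every $\beta\ge 0$.
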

\begin{proof}
	See \cref{proof:P_int}.
\end{proof}

To state our main result, we gather the variables of \eqref{eq:true_system} in
\begin{equation}
	\label{eq:x}
	x(t) := \col(v(t),w(t),\parv(t),\parw(t))
\end{equation}
and the variables of \eqref{eq:adaptive_observer_int} in 
\begin{equation}
	\label{eq:xest}
	\xest(t) := \col(\vest(t),\west(t),\parvest(t),\parwest(t))
\end{equation}
We shall prove the result with contraction analysis, using the contraction metric given by 
\begin{equation}
	\label{eq:M}
	M(t) = T(t)^\transp\bar{M}(t) T(t)
\end{equation}
with
\begin{equation}
	\label{eq:T_int} 
	T =
	\left[
	\begin{array}{c|c}
	I & - \displaystyle\frac{\Psi}{\gamma} \\ \hline
	0 & I
	\end{array}
	\right],\;
	\bar{M} = 
	\left[
	\begin{array}{cc|c}
	\varepsilon I & 0 & 0 \\
	0 & \Pint & 0 \\\hline
	0 & 0 & \varepsilon (\gamma P)^{-1}
	\end{array}
	\right],
\end{equation}
and $\varepsilon>0$ (we use lines to delimit block sub-matrices of the same size). 
We use this contraction metric to define the set
\begin{equation}
\label{eq:ball} 
\	\mathcal{B}(t\,;\rho) = \left\{z \,:\, \left\|z-\hat{x}(t)\right\|_{M(t)} \le \rho \right\}
\end{equation}
for arbitrary $\rho>0$. 
We now have:

\begin{thm}
\label{thm:convergence_internal} 
Let \cref{assum:true_invariant_set,assum:int_dyn_contraction,assum:known_sets,assum:persistent_excitation_int} hold, and let $\alpha>0$, $\beta \ge 0$, and $\gamma>0$. Then there exist $\Mmin,\Mmax >0$ such that 
\begin{equation}
		\label{eq:M_bounds}
		0 \prec \Mmin I \preceq M(t) \preceq \Mmax I
\end{equation}
for all $t \ge 0$, and there is a constant $r > 0$ such that if
\begin{equation}
	\label{eq:initial_condition} 
	x(0) \in \mathcal{B}(0\,;r\sqrt{\Mmin}),
\end{equation}
then 
$x(t) \in \mathcal{B}(t\,;r\sqrt{\Mmin})$ for all $ t \ge 0$. 
Furthermore, if \eqref{eq:initial_condition} holds, $\xest(t) \to x(t)$ as $t \to 0$, exponentially fast, with rate $\ratevirt < \min\{\alpha,\rateint,\gamma\}$.

\end{thm}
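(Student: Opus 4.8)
The plan is to prove \cref{thm:convergence_internal} by contraction analysis, mirroring the proof of \cref{thm:convergence_linear} but now \emph{localized}: since the internal dynamics \eqref{eq:dw_true} are nonlinearly parametrized, the matrix $\Psi$ produced by \eqref{eq:dPsi_internal} only \emph{linearly approximates} the output sensitivity to the parameters, and the exact cancellations that made \cref{thm:convergence_linear} global leave a residual here. The proof has three ingredients: the uniform metric bounds \eqref{eq:M_bounds}; a \emph{virtual system} that admits both the plant trajectory $x(t)$ of \eqref{eq:true_system} and the observer trajectory $\xest(t)$ of \eqref{eq:adaptive_observer_int} as solutions, shown contracting in the metric $M(t)$ of \eqref{eq:M}--\eqref{eq:T_int} over the tube $\bigcup_{t\ge0}\mathcal{B}(t;r\sqrt{\Mmin})$; and a forward-invariance/bootstrapping step. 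For the metric bounds, \cref{lem:P_int} gives boundedness of $\Psi(t)$ and boundedness and uniform positive definiteness of $P(t)$ for $t\ge T$; on $[0,T]$, $P$ is bounded (bounded right-hand side in \eqref{eq:dP_internal}) and positive definite (the Riccati drift $\alpha P+\beta I\succeq0$ keeps it away from $0$). Hence $T(t)$ is block upper triangular with unit diagonal and bounded corner $-\Psi(t)/\gamma$, so boundedly invertible uniformly in $t$, while $\bar M(t)=\blkdiag(\varepsilon I,\Pint,\varepsilon(\gamma P)^{-1})$ is uniformly bounded and uniformly positive definite; therefore $M=T^\transp\bar M T$ satisfies \eqref{eq:M_bounds} for suitable $\Mmin,\Mmax>0$.

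For the virtual system, I would replace $\xest$ in \eqref{eq:adaptive_observer_int} by a free state $z=\col(z_v,z_w,z_\parv,z_\parw)$ while freezing $v(t),u(t),\Psi(t),P(t)$ as the exogenous data generated along the plant--observer pair. Then $z=\xest(t)$ is a solution by construction, and $z=x(t)$ is a solution because, on the positively invariant set of \cref{assum:true_invariant_set}, $z_v=v$ annihilates every innovation term and the field reduces to \eqref{eq:true_system}. Passing to the coordinates $T(t)z$ turns the metric into $\bar M(t)$ and the generalized Jacobian into $\dot TT^{-1}+T\Jvirt T^{-1}$, with $\Jvirt(z,t)$ the Jacobian of the virtual field. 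Exactly as in \cref{thm:convergence_linear} and \cite{zhang_adaptive_2001}, $T$ is designed so that the parameter-error to $v$-error coupling cancels; what remains is, up to a residual $E(z,t)$, block triangular with diagonal blocks: a $-\gamma I$ term on the $v$-slot; the matrix $A(v,z_\parw)$ on the $w$-slot, contracting in $\Pint$ by \cref{assum:int_dyn_contraction}; and on the parameter slot a block contracting at a rate governed by $\alpha$, owing to the forgetting term $\alpha P$ in \eqref{eq:dP_internal} and the $\Psi_v^\transp\Psi_v$ term it produces, which is kept ``active'' by \cref{assum:persistent_excitation_int} via \cref{lem:P_int}. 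Taking $\varepsilon>0$ small renders the off-diagonal couplings harmless and yields $\dot M+M\Jvirt+\Jvirt^\transp M\preceq-2\ratevirt_0 M+E(z,t)$ with $\ratevirt_0$ as close as desired to $\min\{\alpha,\rateint,\gamma\}$.

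The residual $E(z,t)$ is the mismatch between the sensitivity blocks $A_\Psi(t),B_\Psi(t)$ driving \eqref{eq:dPsi_internal} --- evaluated along $\xest$ and through the saturations $\sat_\parv,\sat_\parw,\sat_w$ --- and the exact Jacobian blocks $\partial_{z_w}[\Phi(v,z_w,u)z_\parv+\bv(v,z_w,u)]$ and $\partial_{z_\parw}[A(v,z_\parw)z_w+\bw(v,z_\parw)]$ in $\Jvirt(z,t)$. On the ``good region'', where $\xest(t)$ is close enough to the invariant set for these saturations to act as identities, continuous differentiability gives $\|E(z,t)\|=O(\|z-\xest(t)\|)=O(r)$ uniformly in $t$ and in $z\in\mathcal{B}(t;r\sqrt{\Mmin})$ (since $\|z-\xest(t)\|_{M(t)}\le r\sqrt{\Mmin}$ forces $\|z-\xest(t)\|\le r$ by \eqref{eq:M_bounds}). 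Fixing $\ratevirt<\min\{\alpha,\rateint,\gamma\}$, then $\varepsilon$ so that $\ratevirt_0>\ratevirt$, and then $r$ small enough that $E(z,t)\preceq2(\ratevirt_0-\ratevirt)\Mmin I\preceq2(\ratevirt_0-\ratevirt)M(t)$, the virtual system contracts at rate $\ratevirt$ on the tube. Forward invariance of $\mathcal{B}(t;r\sqrt{\Mmin})$ then follows by a continuity/bootstrapping argument: $\xest(t)$ is the centre, the straight segment from $\xest(0)$ to $x(0)$ lies in $\mathcal{B}(0;r\sqrt{\Mmin})$ since a norm ball is convex, and contraction keeps the length of the flowed segment nonincreasing, so neither $x(t)$ nor any point of the segment leaves the tube; moreover $x(t)\in\mathcal{B}(t;r\sqrt{\Mmin})$ gives $\|\xest(t)-x(t)\|\le r$, so $\xest(t)$ stays within $r$ of the invariant set, which re-validates the ``good region'' hypothesis a posteriori (choosing $r$ also below the width of the neighbourhood on which the saturations are identities). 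Integrating the contraction inequality along the flowed segment gives $\|x(t)-\xest(t)\|_{M(t)}\le e^{-\ratevirt t}\|x(0)-\xest(0)\|_{M(0)}$, whence $\|x(t)-\xest(t)\|\le\Mmin^{-1/2}e^{-\ratevirt t}\|x(0)-\xest(0)\|_{M(0)}\to0$, i.e.\ exponential convergence at rate $\ratevirt<\min\{\alpha,\rateint,\gamma\}$.

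The main obstacle is the second step: verifying that the $T$-transformation really produces the claimed (approximately) block-triangular generalized Jacobian, correctly identifying the three diagonal contraction rates and the constants by which $\varepsilon$ is chosen to dominate the cross terms, and --- most delicately --- showing that the nonlinear-parametrization/saturation residual $E$ is genuinely $O(r)$ uniformly over the tube rather than $O(1)$, which is precisely what confines the result to a neighbourhood and pins down the admissible $r$. The metric bounds and the invariance/convergence bootstrapping, by contrast, are routine.
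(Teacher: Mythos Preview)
Your proposal is correct and follows essentially the same contraction-via-virtual-system route as the paper: metric bounds from \cref{lem:P_int}, a virtual system admitting both $x(t)$ and $\xest(t)$, the coordinate change $T(t)$ turning the generalized Jacobian into an approximately block-triangular form with diagonal rates $\gamma,\rateint,\alpha$, a residual that is $O(r)$ on the tube, and a ball-invariance bootstrap.

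The one noteworthy difference is the choice of virtual system. You propose the ``natural'' one obtained by replacing every occurrence of $\xest$ in \eqref{eq:adaptive_observer_int} by the free state $z$. The paper instead uses a \emph{hybrid} construction,
\[
\fvvirt=\Phi(v,\wvirt,u)\parv+\Phi(v,\west,u)(\parvvirt-\parv)+\bv(v,\wvirt,u),\quad
\fwvirt=A(v,\parwvirt)w+A(v,\parwest)(\wvirt-w)+\bw(v,\parwvirt),
\]
freezing some arguments at the \emph{true} values $\parv,w$ and others at the \emph{observer} values $\west,\parwest$. This is engineered so that, after the $T$-transformation, only two residual blocks $\Delta_1,\Delta_2$ survive (one in the $(1,4)$ slot, one in the $(2,4)$ slot), which can then be handled by an explicit Schur-complement reduction yielding the scalar condition \eqref{eq:sufficient_contraction}. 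Your simpler virtual system produces more residual terms (e.g.\ $\Phi(v,z_w,u)-\Phi(v,\west,u)$ in the $(1,3)$ slot and $A(v,z_\parw)-A(v,\parwest)$ interacting with $\Psi_w$), but since each of them is Lipschitz and vanishes at $z=\xest$, they are all $O(\|z-\xest\|)=O(r)$ on the tube and can be absorbed exactly as you describe. The paper's construction buys a cleaner and more explicit bound on the admissible $r$; yours buys a more transparent virtual system. Both are valid.
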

\begin{proof}
See \cref{proof:convergence_internal}.
\end{proof}

\begin{rmk}
The observer above extends the simpler observer \eqref{eq:adaptive_observer}-\eqref{eq:adaptive_matrices}. 
	Indeed, if
	$A$ and $\bw$ are independent of $\parw$, then we recover \eqref{eq:adaptive_observer}-\eqref{eq:adaptive_matrices} from \eqref{eq:adaptive_observer_int}-\eqref{eq:adaptive_matrices_int}: in this case, 
	$\partial_{\hat{\parw}}[A(v)\sat_w(\hat{w})+\bw(v)] = 0$, 
	and $\dot{\Psi}_w = A(v)\Psi_w$ in \eqref{eq:dPsi}. Thus, if $\Psi_w(0) = 0$ we have $\Psi_w(t) = 0$ for all $t \ge 0$ and \eqref{eq:adaptive_observer}-\eqref{eq:adaptive_matrices} is recovered. If $\Psi_w(0) \neq 0$, then $\Psi_w \to 0$ as $t \to \infty$ by \cref{assum:int_dyn_contraction}, and the simpler observer is also recovered.
\end{rmk}


\begin{rmk}
	\label{assum:Psi_w2} 
	Writing $\Psi_w = \begin{bmatrix}
	\Psi_{w,1} & \Psi_{w,2} \end{bmatrix}$ with $\Psi_{w,1}\in\setreal^{\nw\times\nparv}$ and $\Psi_{w,2}\in\setreal^{\nw\times\nparw}$, by \cref{assum:int_dyn_contraction} we have without loss of generality that $\Psi_{w,1} = 0$. Furthermore, \cref{assum:persistent_excitation_int} is in a sense also a condition on the excitation of $\Psi_{w,2}$. To see this, write $\Psi_v = \begin{bmatrix} \Psi_{v,1} & \Psi_{v,2} \end{bmatrix}$ analogously. Then from \eqref{eq:dPsi_internal} the dynamics of $\Psi_{v,1}$ are solely driven by $\Phi(v,\hat{w},u)$, while the dynamics of $\Psi_{v,2}$ are solely driven by $\partial_{\hat{w}} [\Phi(v,\hat{w},u)\sat_\parv(\hat{\parv})+\bv(v,\hat{w},u)]\Psi_{w,2}$. Thus part of the persistent excitation of $\Psi_v$ is directly due to $\Psi_{w,2}$.
	
	
%
\end{rmk}

\subsection{Measurement errors}
\label{sec:noise}

The adaptive observer design of \cref{sec:advanced_observer} relies on \textit{output injection}, that is, it assumes that the measurement $y = v$ has no errors, and injects the measured $v$ in the observer dynamics. This corresponds to an \textit{equation error} model structure \cite{ljung_system_1999}. Assume instead that a measurement error $e(t)$ is present, so that 
\begin{equation}
	\label{eq:measurement_errors} 
	y(t) = v(t) + e(t)
\end{equation}
In this case, \eqref{eq:adaptive_observer_int}-\eqref{eq:adaptive_matrices_int} must be redefined by replacing $v(t)$ with $y(t)$. It is clear that this introduces measurement errors in the observer dynamics, and it is well known that even if some level of stability is retained, the parameter estimates will be biased. 
If the bias is too large, one could modify \eqref{eq:adaptive_observer_int}-\eqref{eq:adaptive_matrices_int} towards an \textit{output error} model structure by replacing $v(t)$ with $\vest(t)$ in the arguments of $\Phi$, $A$, $a$, and $b$. 
The downside of the output-error approach 
is that convergence of the adaptive observer may be lost even when the measurements have no errors (the \textit{nominal} case $y=v$). More precisely, a convergence analysis analogous to that of \cref{thm:convergence_internal} shows that nominal stability of the output error--based adaptive observer only holds for sufficiently high values of $\gamma$. But a high $\gamma$ is undesirable when measurement errors do occur, as $\gamma$ contributes to perturbations in the dynamics coming from $\gamma I e$ and $\gamma P \Psi_v^\transp e$.

To leverage the advantages of both equation error and output error approaches (lower gain $\gamma$ and lower bias, respectively), we can exploit an additional property of the true system, motivated by the neuronal systems of \cref{sec:cb_models}:
\begin{assum}
	\label{assum:contraction_v} 
	Under \cref{assum:true_invariant_set}, there exist a symmetric positive definite matrix $\Pv \succ 0$
	and a contraction rate $\ratev > 0$ such that 
	\begin{equation*}
		\partial_v[\Phi \parv + \bv]^\transp \Pv 
		+ \Pv \partial_v[\Phi \parv + \bv]
		\preceq -\ratev \Pv
	\end{equation*}
	for all $\{v,w,\parv\} \in V \times W \times \{\parv(0)\}$.
\end{assum}
\begin{rmk}
For any conductance-based model from \cref{sec:cb_models}, \cref{assum:contraction_v} holds with $\Pv = I$ and $\ratev = -2\maxcond_\Leak/c$. 
\end{rmk}

\cref{assum:contraction_v} motivates the observer structure given by
\begin{equation*}	
	\begin{split}
	  	\dot{\yest} &= 
		\Phi(\vest,\hat{w},u)\hat{\parv} + \bv(\vest,\hat{w},u)
	  	+(  \gain I +\Psi_v P \Psi_v^\transp) 
	  	(y-\yest)
	  	\\
	  	\dot{\west} &= 
	  	A(y,\hat{\parw})\hat{w}+\bw(y,\hat{\parw})
	  	+ \Psi_w P \Psi_v^\transp (y-\hat{v})
	  	\\
	  	\col(\dot{\parvest},\dot{\parwest}) 
	  	&= \gamma P \Psi_v^\transp(y-\hat{v})
	\end{split}
\end{equation*}
and by \eqref{eq:adaptive_matrices_int}, where $A_\Psi$ and $B_\Psi$ are replaced by
\begin{subequations}
	\label{eq:Psi_dynamics} 
\begin{equation*}
	A_\Psi(t) = \begin{bmatrix}
			- \gamma I + \partial_{\vest}[\hat{\Phi} \sat_\parv(\hat{\parv}) + \hat{\bv}] & 
			\partial_{\hat{w}} [\hat{\Phi}\sat_\parv(\hat{\parv})+\hat{\bv}]\\
			0_{\nw\times \nv} & A(\vest,\hat{\parw})
	  	 \end{bmatrix}
\end{equation*}
and
\begin{equation*}
	B_\Psi(t) = 
		  	\begin{bmatrix}
	  		\hat{\Phi} & 0_{\nv\times \nparw} \\
	  		0_{\nw\times \nparv} & \partial_{\hat{\parw}}
		[A(y,\hat{\parw})\sat_w(\hat{w})+\bw(y,\hat{\parw})]
	  	\end{bmatrix}
\end{equation*}
\end{subequations}
where $\hat{\Phi} = \Phi(\vest,\west,u)$ and $\hat{\bv} = \bv(\vest,\west,u)$.
The following nominal convergence result is immediate:
\begin{thm}
\label{thm:output_error} 
Under \cref{assum:contraction_v}, for $y=v$, the statement of \cref{thm:convergence_internal} also applies to the adaptive observer above.
\end{thm}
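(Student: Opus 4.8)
The proof is a direct adaptation of that of \cref{thm:convergence_internal}, the only extra ingredient being \cref{assum:contraction_v}. Setting $y=v$, the output-error observer of \cref{sec:noise} and its matrices $A_\Psi$, $B_\Psi$ in \eqref{eq:Psi_dynamics} differ from \eqref{eq:adaptive_observer_int}--\eqref{eq:adaptive_matrices_int} only by the evaluation of $\Phi$ and $\bv$ at $\vest$ in place of $v$. As before, the plan is to exhibit a \emph{virtual system} admitting both the true augmented trajectory $x(t)=\col(v,w,\parv,\parw)$ and the observer trajectory $\xest(t)$ of \eqref{eq:xest} as particular solutions, and to prove that it is contracting in the metric $M(t)=T(t)^\transp\bar M(t)T(t)$ of \eqref{eq:M}--\eqref{eq:T_int} on the tube $\ball(t\,;r\sqrt{\Mmin})$. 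Concretely, I would take the virtual system obtained from the output-error observer by replacing $(\vest,\west,\parvest,\parwest)$ by virtual states $(\vvirt,\wvirt,\parvvirt,\parwvirt)$ while keeping the measured $v$ as an exogenous signal inside the innovation terms $(\gamma I+\Psi_v P\Psi_v^\transp)(v-\vvirt)$, $\Psi_w P\Psi_v^\transp(v-\vvirt)$ and $\gamma P\Psi_v^\transp(v-\vvirt)$. At $\xvirt=\xest$ this is the observer by construction; at $\xvirt=x$ the innovations vanish because $v-v=0$ and one recovers \eqref{eq:true_system} together with $\dot{\parv}=\dot{\parw}=0$. The hypothesis $y=v$ is essential precisely here: it is what keeps $x(t)$ a solution of the virtual system, exactly as in the equation-error design.

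Since $\Phi$ and $\bv$ now carry $\vvirt$ rather than the exogenous $v$, the only structural change with respect to \cref{thm:convergence_internal} is that the generalized Jacobian $G:=\dot T T^{-1}+T\,\partial_{\xvirt}F\,T^{-1}$ acquires an extra entry $\partial_v[\Phi\parvvirt+\bv]$ in its $(v,v)$ block, on top of $-(\gamma I+\Psi_v P\Psi_v^\transp)$. This is exactly the term that the modified $A_\Psi$ in \eqref{eq:Psi_dynamics} compensates through its new entry $\partial_{\vest}[\hat\Phi\sat_\parv(\parvest)+\hat\bv]$: with this choice the identity $\dot\Psi=A_\Psi\Psi+\gamma B_\Psi$ again cancels the parameter-coupling block of $T\,\partial_{\xvirt}F\,T^{-1}$, so that $T$ brings $G$ to the same form as in \cref{thm:convergence_internal} --- a $w$-block contracting at rate $\rateint$ by \cref{assum:int_dyn_contraction}, a parameter block contracting at rate $\alpha$ by \eqref{eq:dP_internal} and \cref{lem:P_int}, uniformly bounded off-diagonal coupling blocks ($\partial_{\wvirt}[\Phi\parvvirt+\bv]$ and $\Psi_w P\Psi_v^\transp$, bounded on the invariant set by \cref{rmk:saturation_w,rmk:saturation_parameters} and \cref{lem:P_int}), and a $v$-block now equal to $-\gamma I+\partial_v[\Phi\parvvirt+\bv]$. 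By \cref{assum:contraction_v} the symmetrization $\partial_v[\Phi\parvvirt+\bv]^\transp\Pv+\Pv\,\partial_v[\Phi\parvvirt+\bv]$ is $\preceq-\ratev\Pv\prec0$, so the $v$-block is contracting at rate at least $\gamma$ for \emph{any} $\gamma>0$ --- which is the point, since without \cref{assum:contraction_v} one would need $\gamma$ large enough to dominate the possibly positive eigenvalues of $\partial_v[\Phi\parv+\bv]$. From here the argument is verbatim that of \cref{thm:convergence_internal}: choosing $\varepsilon>0$ small enough to dominate the bounded off-diagonal coupling yields $\dot{\bar M}+G^\transp\bar M+\bar M G\preceq-2\ratevirt\bar M$ on $\ball(t\,;r\sqrt{\Mmin})$ for any $\ratevirt<\min\{\alpha,\rateint,\gamma\}$, hence \eqref{eq:M_bounds}, forward invariance of the tube ($\|x(t)-\xest(t)\|_{M(t)}$ is nonincreasing and so cannot escape the ball on which the bounds hold), and finally $\xest(t)\to x(t)$ exponentially with rate $\ratevirt$; note that $\ratev$ does not enter the rate bound because it only improves the already $\gamma$-contracting $v$-block.

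The main --- and essentially the only --- nontrivial step is the algebraic check that the new entry $\partial_{\vest}[\hat\Phi\sat_\parv(\parvest)+\hat\bv]$ of $A_\Psi$ exactly offsets the new $(v,v)$ contribution to $T\,\partial_{\xvirt}F\,T^{-1}$, i.e.\ that the coordinate change $T$ built from the modified $\Psi$ still block-triangularizes $G$ as in \cref{thm:convergence_internal}. Once this identity is established, \cref{assum:contraction_v} merely contributes a negative-definite term to the already-contracting $v$-block, every estimate of \cref{thm:convergence_internal} carries over unchanged, and the conclusion --- boundedness of $M(t)$ as in \eqref{eq:M_bounds}, forward invariance of $\ball(t\,;r\sqrt{\Mmin})$, and local exponential convergence at rate $\ratevirt<\min\{\alpha,\rateint,\gamma\}$ --- follows.
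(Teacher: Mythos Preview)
Your proposal is correct and takes essentially the same approach as the paper, which in fact omits the proof entirely, stating only that it ``follows the very same steps as that of \cref{thm:convergence_internal}.'' You have correctly identified the two modifications that make the argument go through: the new $(1,1)$ entry of $A_\Psi$ in \eqref{eq:Psi_dynamics} is exactly what is needed so that the coordinate change $T$ still cancels the parameter-coupling column of the Jacobian (up to locally small $\Delta$-type residuals), and \cref{assum:contraction_v} supplies the missing negative definiteness in the $(v,v)$ block so that no high-gain condition on $\gamma$ is required. Two minor remarks: your virtual system, obtained by replacing all observer states by virtual states, is slightly simpler than the construction of \cref{proof:convergence_internal} (where $\fvvirt$ and $\fwvirt$ are split so that $\partial_{\parvvirt}\fvvirt=\Phi(v,\west,u)$ and $\partial_{\wvirt}\fwvirt=A(v,\parwest)$ match $B_\Psi$ and $A_\Psi$ exactly); your choice works too but produces a few extra locally small mismatch terms. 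Also, for general $\Pv\neq I$ the $v$-block of the metric $\bar M$ should carry $\varepsilon\Pv$ rather than $\varepsilon I$ to invoke \cref{assum:contraction_v} directly; for the conductance-based application one has $\Pv=I$ and this is moot.
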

\begin{proof}
The proof follows the very same steps as that of \cref{thm:convergence_internal}, and is hence omitted.
\end{proof}

\subsubsection{Robustness to measurement errors}

The contraction results of \cref{thm:convergence_linear,thm:convergence_internal,thm:output_error}
imply a nominal exponential stability property of the adaptive observer trajectories. Given those results, it is not difficult to show that the convergence properties of the adaptive observers presented above all have some level of robustness with respect to measurement errors of the form \eqref{eq:measurement_errors}. A contraction-based robustness analysis along the lines of \cite[Section III]{bonnabel_contraction_2015} can be performed to show that for a bounded error $e(t)$ and  sufficiently small $\sup_{t\ge0} \|e(t)\|$, the trajectories of the adaptive observer estimates remain close to the trajectories of the true system states. 

\subsection{Robustness to unstructured uncertainty}
\label{sec:robustness}

It is well known that with exponential contraction comes robustness with respect to small perturbations \cite{lohmiller_contraction_1998,bonnabel_contraction_2015}.  In our context, consider a perturbed version of the true system \eqref{eq:true_system}, given by 
 \begin{equation}
 \label{eq:internal_perturbed} 
 \begin{split}
  \dot{v} &= \Phi(v,w,u) \parv + \bv(v,w,u) + d_v(t,v,w,\parv)
  \\
  \dot{w} &= A(v,\eta)w + \bw(v,\eta) + d_w(t,v,w) \\
  \dot{\parv} &= d_{\parv}(t,v,w) \\
  \dot{\parw} &= d_{\parw}(t,v,w)
  \end{split}
 \end{equation}
where $d:=\col(d_v,d_w,d_\parv,d_\parw)$ models an unstructured uncertainty. The disturbances can be interpreted as \textit{model mismatch} resulting from unmodelled dynamics, as well as time variation in the true parameters. 
We assume that the assumptions of \cref{thm:convergence_internal} hold for the perturbed system \eqref{eq:internal_perturbed}, with the set $\{\theta(0)\}\times\{\eta(0)\}$ in \cref{assum:true_invariant_set} replaced by $\Theta \times H$ from \cref{assum:known_sets}, and that $\|d\|\le\bar{d}$ for all $t \ge 0$.
We gather the state variables of \eqref{eq:internal_perturbed} and \eqref{eq:adaptive_observer_int} in 
$x=\col(v,w,\parv,\parw)$ and $\xest=\col(\hat v, \hat w, \hat \parv, \hat \parw)$, respectively. 

 In the proof of \cref{thm:convergence_internal}, we have shown that there exists an $r>0$ such that the set $\mathcal{B}(t\,; r\sqrt{\Mmin})$ given by \eqref{eq:ball} is contained at all times in a region of contraction with respect to a virtual system containing the trajectories of the \textit{nominal} true system \eqref{eq:true_system} and of the adaptive observer \eqref{eq:adaptive_observer_int}. In this situation, just as in \cite[Section III]{bonnabel_contraction_2015}, we can show that if the  state $x(0)$ of the \textit{perturbed} true system \eqref{eq:internal_perturbed} belongs to $\mathcal{B}(0\,; r\sqrt{\Mmin})$, then
\begin{equation}
	\label{eq:virtual_error_inequality} 
		\|x(t)-\xest(t)\|
		\le \sqrt{\frac{\Mmax}{\Mmin}}\left(e^{-\frac{\ratevirt}{2} t}  
		\|x(0)-\xest(0)\| 
		+ \frac{2}{\lambda} \bar{d}
		\right)
\end{equation}
for as long as the perturbed state $x(t)$ remains in the contraction region (see also \cite{del_vecchio_contraction_2013}). For small enough $\bar{d}$, this holds for all $t \ge 0$, as the state $x(t)$ will remain in $\mathcal{B}(t\,; r\sqrt{\Mmin})$. 

The maximum permissible $\bar{d}$ can hence be understood as a robustness margin. This (possibly conservative) margin allows for a simple interpretation of the effects that the hyperparameters will have on the robustness of the algorithm. 
Notice from \eqref{eq:M}-\eqref{eq:T_int}, we have
\[
\frac{\Mmax}{\Mmin}
\le \frac{\overline{\mu}}{\underline{\mu}}
\frac{\sup_{t \ge 0}\|T(t)\|^2}{\inf_{t \ge 0}\sigma_{\min}[T(t)]^2}
\]
where $\underline{\mu} I \le \bar{M}(t) \le \overline{\mu}I$.
Then, from \eqref{eq:P_bounds_beta} and \eqref{eq:T_int}, 
for $t \ge T$ we have
\begin{equation}
\label{eq:mus} 
	\frac{\overline{\mu}}{\underline{\mu}} = 
	\frac{\max\{\varepsilon,1,\varepsilon(\gamma\underline{p})^{-1}\}}{\min\{\varepsilon,\lambda_{\min}[M_w],\varepsilon(\gamma\overline{p})^{-1}
	\}}
\end{equation}
where $\underline{p}$ and $\overline{p}$ are given by \eqref{eq:pbounds_beta} and $\varepsilon$ is given by \eqref{eq:epsilon_int}. 
The maximum permissible $\bar{d}$ must decrease when $\overline{\mu}/\underline{\mu}$ increases. Hence, considering a \textit{fixed} contraction rate $\lambda<\min\{\alpha,\lambda_w,\gamma\}$, we can extract a few points about the values of the hyperparameters.

First, large values of $\alpha$ are undesirable, since $\overline{p}$ increases exponentially with $\alpha$. Thus there is a tradeoff between quickly keeping track of time-varying parameters and robust stability. Second, since the quantities $\varepsilon$ and $\varepsilon/\gamma$ are monotonically increasing in $\gamma$, $\gamma$ can be increased (up to a point) to enhance robustness. Additionally, notice that the proof of \cref{thm:convergence_internal} shows through \eqref{eq:sufficient_contraction} that increasing $\gamma$ allows for a larger $r>0$, which also promotes robustness.

Notice these points also apply to the simple adaptive observer \eqref{eq:adaptive_observer}, with the caveat that the virtual system was in that case proven to be globally contracting, and hence \eqref{eq:virtual_error_inequality} applies for any $\bar{d}$. Finally, although we have not analysed the effect of $\beta\ge 0$ above, it should be mentioned that the addition of a moderate $\beta I > 0$ term in \eqref{eq:dP_internal} can be regarded as a form of \textit{covariance inflation}, which has been noted to improve in practice the robustness of Kalman filter-based methods \cite{ghobadi_robust_2018}. This effect  was observed in the simulations that follow.

\section{Application to conductance-based models}
\label{sec:numeric} 

In this section we illustrate with numerical simulations how the system theoretic adaptive observers discussed in this paper perform when applied to problems in electrophysiology\footnote{The Julia code used to generate these results can be found on \texttt{https://github.com/thiagoburghi/online-learning}.}.

\subsection{Estimation of voltage and ion channel dynamics}
\label{sec:HH_numeric}

The primary goal of neuronal system identification is 
to estimate 
capacitances and maximal conductances  \cite{huys_efficient_2006,druckmann_novel_2007,meliza_estimating_2014,burghi_feedback_2021}.  An important (and often ignored) point in this approach is the fact that the parameters in ionic channel models are only approximate in nature, and in practice may vary from neuron to neuron. Using the adaptive observer \eqref{eq:adaptive_observer_int}-\eqref{eq:adaptive_matrices_int}, in this section we illustrate a real-time solution to the problem of estimating the unknown and uncertain parameters of the Hodgkin-Huxley model of \cref{ex:HH_description,ex:HH_parametrization}.

\subsubsection{Perfect measurements}

We begin by verifying the behaviour of the adaptive observer when no measurement errors are present. 
We use the biophysical
parameters in \cref{sec:HH_parameters}. This results in 
\[
	\begin{split}
		\parv(t) &= \parv(0) = \begin{pmatrix}
			1,&120,&36,&0.3
		\end{pmatrix}^\transp \\
		\parw(t) &= \parw(0) = \begin{pmatrix}
			-40,&-62,&-53
		\end{pmatrix}^\transp
	\end{split}
\]
By contrast, we initialize the observer with
\[
	\begin{split}
		\hat{\parv}(0) &= \begin{pmatrix}
			2,&78,&78,&10
		\end{pmatrix}^\transp \\
		\hat{\parw}(0) &= \begin{pmatrix}
			-20,&-20,&-20
		\end{pmatrix}^\transp
	\end{split}
\]
which represents a parsimonious guess over the parameters of an unknown and uncertain spiking conductance-based model.
The true HH model and the 
adaptive observer were simulated subject to the input 
$
	\Iapp_\app(t) = \sin(2\pi t / 10)
$ 
for $t \ge 0$. 
The initial conditions of the voltage and gating variables are given by $\col(v(0),w(0)) = (-30,0.5,0.5,0.5)^\transp$ and $\col(\vest(0),\west(0)) = (-30,0,0,0)^\transp$
and the remaining initial conditions of the observer
are given by $\Psi_v(0) = 0$, $\Psi_w(0) = 0$, $P(0) = I$.
For $\alpha = 0.1$ and $\beta = \gamma = 1$, the solutions of the
true system and of the 
adaptive observer can be seen in Figure \ref{fig:HH_parameters}. 
All the parameter estimates of the adaptive observer converge to
the true parameter values.

\begin{figure}
	\centering
	\if\usetikz1
	\tikzset{png export}
	\begin{tikzpicture}
		\def\plotdomain{250}
		\begin{groupplot}[
			group style={group size=1 by 7,
 			vertical sep=20pt,
			},
			height=3.5cm,width=6.5cm,
			axis y line = left,
			axis x line = bottom,	
			tick label style={font=\scriptsize},	
			label style={font=\scriptsize},
			legend style={font=\footnotesize},
			xmax=300,
			legend pos=outer north east,
			xmax = \plotdomain
			]
			\nextgroupplot[
				ymax=7.5
				]
			\addplot[dashed,blue!30!black,domain=0:\plotdomain,semithick]{1};
				\addlegendentry{$\parv=1/c$};
			\addplot[color=blue,semithick]
				table[x index=0,y index=4] 
				{./data/HH_parameters.txt};
				\addlegendentry{$\parvest_1=1/\hat{c}$}; 
			\addplot[dashed,red!30!black,domain=0:\plotdomain,semithick]{0.3};
				\addlegendentry{$\parv_4=\maxcond_{\Leak}/c$};
			\addplot[color=red,semithick]
				table[x index=0,y index=3] 
				{./data/HH_parameters.txt};
				\addlegendentry{$\parvest_4=\hat{\maxcond}_{\Leak}/\hat{c}$}; 
				
			\nextgroupplot[
				ymax = 130,
				ymin = 26
				]
			\addplot[dashed,blue!30!black,domain=0:\plotdomain,semithick]{120};
				\addlegendentry{$\parv_2=\maxcond_{\Na}/c$};
			\addplot[color=blue,semithick]
				table[x index=0,y index=1] 
				{./data/HH_parameters.txt};
				\addlegendentry{$\parvest_2=\hat{\maxcond}_{\Na}/\hat{c}$}; 
			\addplot[dashed,red!30!black,domain=0:\plotdomain,semithick]{36};
				\addlegendentry{$\parv_3=\maxcond_{\K}/c$};
			\addplot[color=red,semithick]
				table[x index=0,y index=2] 
				{./data/HH_parameters.txt};
				\addlegendentry{$\parvest_3=\hat{\maxcond}_{\K}/\hat{c}$}; 

			\nextgroupplot[
				ymax=-10,
				ymin = -65,
				xlabel={$t$ [ms]}, 
				]
			\addplot[dashed,blue!30!black,domain=0:\plotdomain,semithick]{-40};
				\addlegendentry{$\parw_1 = \vhalf_{m_\Na}$};
			\addplot[color=blue,semithick]
				table[x index=0,y index=5] 
				{./data/HH_parameters.txt};
				\addlegendentry{$\parwest_1 = \hat{\vhalf}_{m_\Na}$}; 
			\addplot[dashed,red!30!black,domain=0:\plotdomain,semithick]{-62};
				\addlegendentry{$\parw_2 = \vhalf_{h_\Na} $};
			\addplot[color=red,semithick]
				table[x index=0,y index=6] 
				{./data/HH_parameters.txt};
				\addlegendentry{$\parwest_2 = \hat{\vhalf}_{h_\Na}$}; 
			\addplot[dashed,green!30!black,domain=0:\plotdomain,semithick]{-53};
				\addlegendentry{$\parw_3 = \vhalf_{m_\K}$};
			\addplot[color=green,semithick]
				table[x index=0,y index=7] 
				{./data/HH_parameters.txt};
				\addlegendentry{$\parwest_3 = \hat{\vhalf}_{m_\K}$}; 
		\end{groupplot}
	\end{tikzpicture}
	\else
		\includegraphics[scale=0.125]{./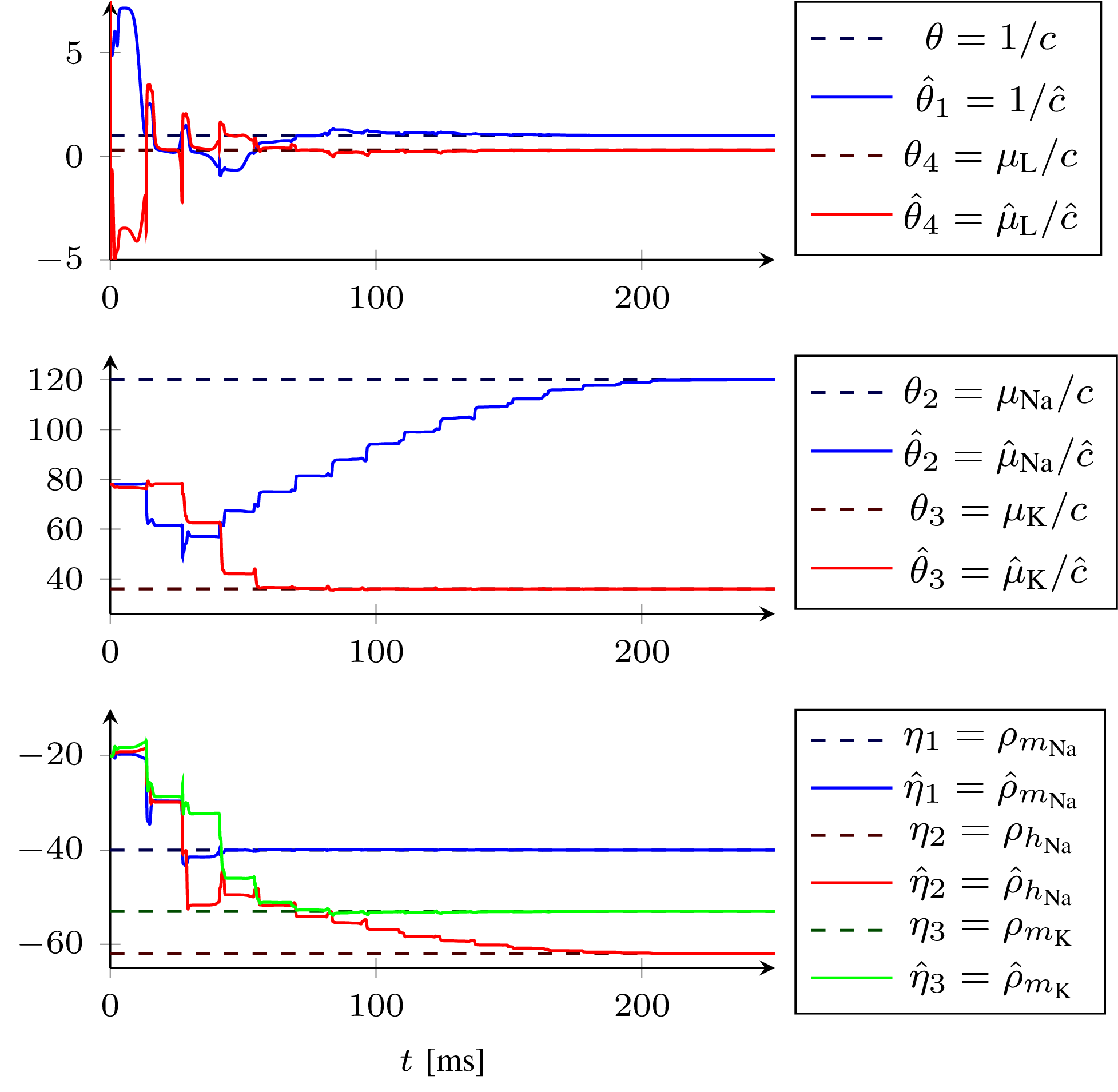}
	\fi
	\caption{Estimated parameters of the adaptive observer \eqref{eq:adaptive_observer_int}-\eqref{eq:adaptive_matrices_int} in the estimation of the HH model when no measurement errors are present. Here, $\alpha = 0.1$ and $\beta = \gamma = 1$.}
	\label{fig:HH_parameters}
\end{figure}

\subsubsection{Measurement errors}

Keeping the same input and true system parameters used in the previous section, we now simulate the behaviour of the adaptive observer when zero-mean white Gaussian noise of variance $\sigma^2_{\textit{noise}}=4 \; \mathrm{mV^2}$ is added to the measured $v$. Using the rms amplitude $v_{\textit{rms}}\approx 27$ mV of the noise-free voltage trace simulated in the previous section, this noise corresponds to a relatively poor signal-to-noise-ratio of $10\log_{10} v_{\textit{rms}}^2/\sigma_{\textit{noise}}^2 \approx 22$ dB. We first try using the same observer parameters $\alpha = 0.1$ and $\beta = \gamma = 1$ that previously led to convergence in the previous section. The result for the worst affected estimate, $\hat{\parv}_2$, is shown in Figure \ref{fig:HH_parameters_noise} (top). While the estimate remains bounded, it can be seen that the measurement noise considerably affects its convergence properties. However, tuning the observer parameters to $\alpha = 10^{-4}$, $\beta = 10$, $\gamma = 0.1$ and $P(0) = 0.1 I$ drastically improves the result, as shown in \ref{fig:HH_parameters_noise} (bottom). The oscillation in $\hat{\parv}_2(t)$ is now much less pronounced, and it converges more slowly to a region close to the true $\parv_2$. Similar behaviours hold for the less affected parameters. Comparing the two cases, there is a clear tradeoff between convergence rate and robustness to measurement noise.

\begin{figure}
	\centering
	\if\usetikz1
	\tikzset{png export}
	\begin{tikzpicture}
		\begin{groupplot}[
			group style={group size=1 by 4,
			 			vertical sep=20pt,},
			height=3.0cm,width=6.5cm, 
			axis y line = left, 
			axis x line = bottom,	
			tick label style={font=\scriptsize},	
			label style={font=\scriptsize},
			legend style={font=\footnotesize},
			legend pos=outer north east,
			x filter/.code=\pgfmathparse{\pgfmathresult/1000},
			]
			\nextgroupplot[
				ymax = 220,
				ymin = 50,
				ytick={80,120,160}
				]
				\addplot[dashed,blue!30!black,domain=0:9910,semithick]{120};
				\addlegendentry{$\parv_2=\maxcond_{\Na}/c$};
				\addplot[smooth,color=blue,semithick]
				table[x index=0,y index=1] 
				{./data/HH_parameters_a=0.1_g=1.0_s=2_oi=true.txt};
				\addlegendentry{$\parvest_2=\hat{\maxcond}_{\Na}/\hat{c}$}; 

			\nextgroupplot[
				ymax = 170,
				ymin = 50,
				ytick={80,120,160},
				xlabel={$t$ [s]}
				]
				\addplot[dashed,blue!30!black,domain=0:9910,semithick]{120};
				\addlegendentry{$\parv_2=\maxcond_{\Na}/c$};
				\addplot[smooth,color=blue,semithick]
				table[x index=0,y index=1] 
				{./data/HH_parameters_a=0.0001_g=0.1_s=2_oi=true.txt};
				\addlegendentry{$\parvest_2=\hat{\maxcond}_{\Na}/\hat{c}$};		
		\end{groupplot}
	\end{tikzpicture}
	\else
		\includegraphics[scale=0.125]{./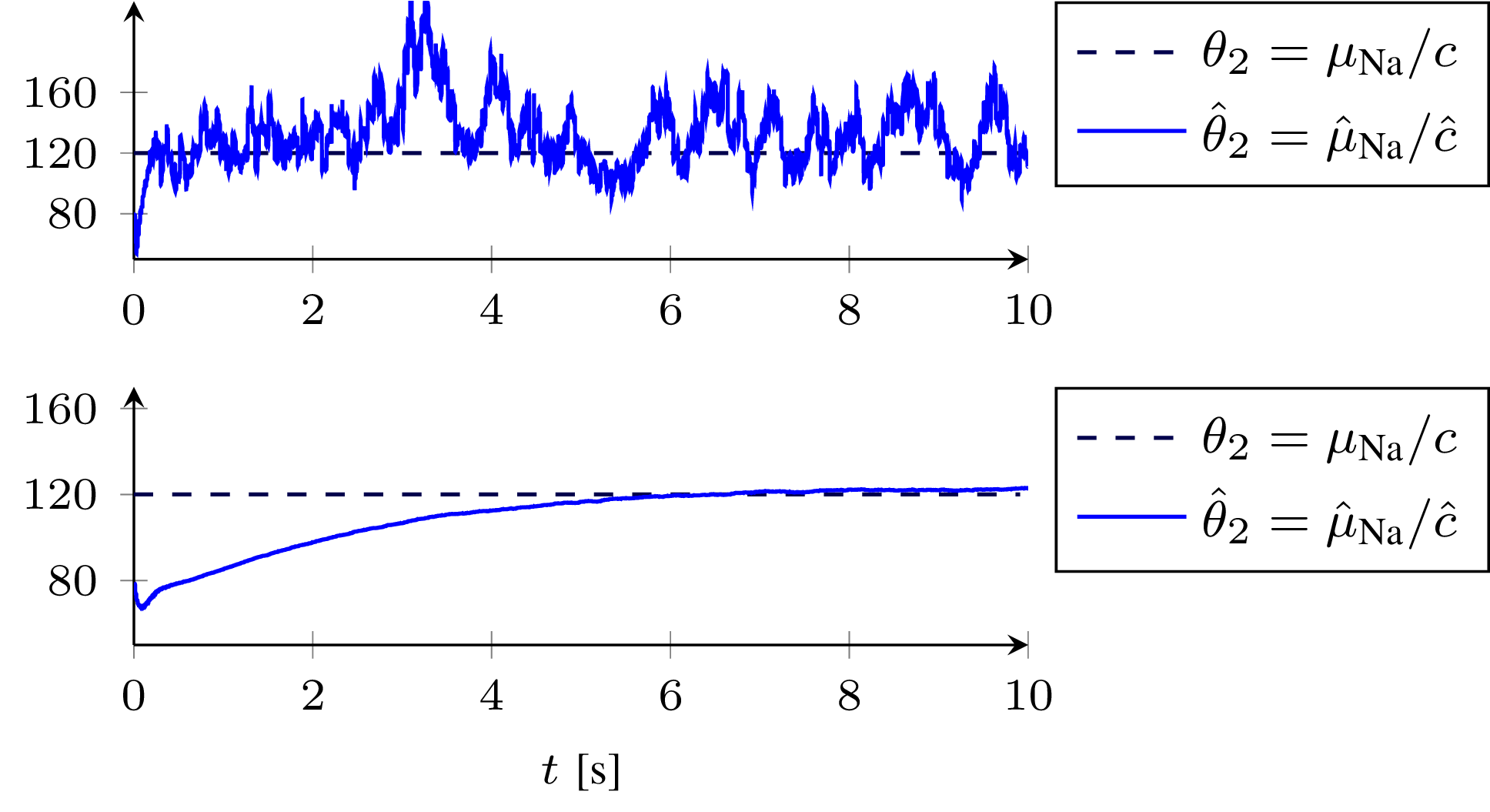}
	\fi
	\caption{Estimated parameters of the adaptive observer \eqref{eq:adaptive_observer_int}-\eqref{eq:adaptive_matrices_int} in the estimation of the HH model when a white noise measurement error is present ($\sigma_{\textit{noise}} = 2$ mV). \textbf{Top:} $\alpha = 0.1$, $\beta = 1$, and $\gamma = 1$. \textbf{Bottom:} $\alpha = 10^{-4}$, $\beta = 10$, and $\gamma = 0.1$.}
	\label{fig:HH_parameters_noise}
\end{figure}

\subsection{Estimation of a neural circuit under neuromodulation}
\label{sec:HCO_neuromodulation}

In this section, we illustrate the robustness to noise, model mismatch, and time-varying parameters by using the observer of \cref{sec:noise} to estimate the parameters $\parv$ of the HCO neuronal circuit introduced in \cref{ex:HCO_description}.  

\begin{rmk}
\label{rmk:distributed} 
When applied to a conductance-based network, the network observers decouple into $\nv$ independent single neuron observers. This is because in a conductance-based network, $\Phi$ is block-diagonal and, by stability of the dynamics of $\Psi$ in \eqref{eq:dPsi} or \eqref{eq:dPsi_internal} and of $R:=P^{-1}$ in \eqref{eq:R_sol}, we can without loss of generality ignore all off-block diagonal terms of the matrices $\Psi(t)$ and $P(t)$.
\end{rmk} 

Following \cref{rmk:distributed}, applying the observer of \cref{sec:noise} to the HCO of \cref{ex:HCO_description} yields
	\begin{equation*}
		\begin{split}
		  	\dot{\vest}_i &= 
			\varphi_i(\vest_i,\west^{(i)})\maxcondest^{(i)}
			+ c_i^{-1} u_i
		  	+ (\gain + \psi_i 
		  	P_i \psi_i^\transp) 
		  	(y_i-\vest_i) 
		  	\\
		  	\dot{\west}^{(i)} &= 
			A_i(y)\hat{w}^{(i)} + \bw(y,\hat{w}^{(i)})
		  	\\
		  	\dot{\maxcondest}^{(i)} &= \gain P_i \, 
		  	\psi_i^\transp \, (y_i-\vest_i)
		  	\\
		  	\dot{\psi}_i &= 
			(- \gain I + \partial_{\vest_i}[\varphi_i(\vest_i,\west^{(i)})\varsigma(\maxcondest^{(i)})])\psi_i + \gain\varphi_i(\vest_i,\west^{(i)}),
		  	\\
		  	\dot{P}_i &= \alpha P_i + \beta I -  
		  	P_i \, \psi_i^\transp \psi_i P_i
		\end{split}
		\end{equation*}
		where $P_i(0) \succ 0$, $i\in\mathcal{N}=\{1,2\}$, and $y_i = v_i + e_i$. As in the previous section, we define the measurement errors $e_1$ and $e_2$ as white noise with $\sigma^2_{\textit{noise}} = 4\;\mathrm{mV^2}$ . 
	
\begin{figure}
	\centering
	\if\usetikz1
	\tikzset{png export}
	\begin{tikzpicture}
		\begin{groupplot}[
			group style={group size=1 by 3,
 			vertical sep=20pt,
			},
			height=2.5cm,width=8.0cm,
			axis y line = left,
			axis x line = bottom,	
			tick label style={font=\scriptsize},	
			label style={font=\scriptsize},
			legend style={font=\footnotesize},
			each nth point=2,
			filter discard warning=false,
			x filter/.code=\pgfmathparse{\pgfmathresult/1000},
			]
			\nextgroupplot[
				ylabel={$v_1(t) \; \mathrm{[mV]}$}]
			\addplot[color=blue]
				table[x index=0,y index=1] 
				{./data/HCO_truevoltages_seed=101pcerror=0.01.txt};
			\nextgroupplot[
				ylabel={$v_2(t) \; \mathrm{[mV]}$}]
			\addplot[color=blue]
				table[x index=0,y index=2] 
				{./data/HCO_truevoltages_seed=101pcerror=0.01.txt};
			\nextgroupplot[
				ylabel={$\mathrm{[\upmu A/cm^2]}$},
				legend pos = south east,
				xlabel={$t$ [s]}]
			\addplot[color=black,semithick]
				table[x index=0,y index=11]
				{./data/HCO_parameters_seed=101pcerror=0.01.txt};
				\addlegendentry{$\maxcond_{\Ca,1}=\maxcond_{\Ca,2}$}; 
		\end{groupplot}
	\end{tikzpicture}
	\else
		\includegraphics[scale=0.125]{./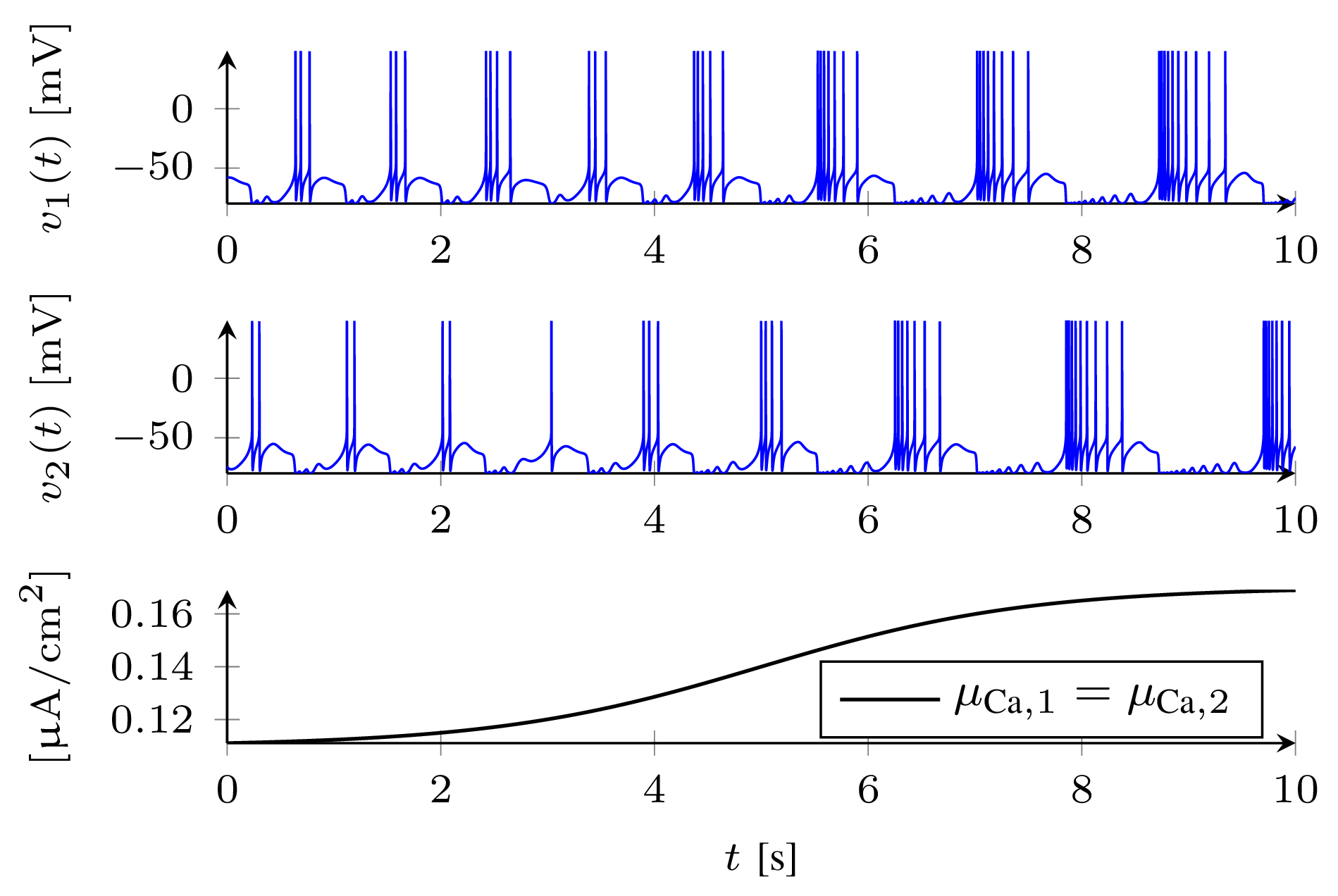}
	\fi
	\caption{True HCO voltage traces and time-varying calcium conductance used in \cref{sec:HCO_neuromodulation}.}
	\label{fig:HCO_truvoltages_neuromodulation}
\end{figure}

To illustrate the importance of tracking time-varying parameters, we consider the problem of neuromodulation \cite{marder_neuromodulation_2014}. Neuromodulators are substances that continuously modulate the opening of ion channels in a neuron's membrane. This modulatory control can be modelled as a temporal variation of the maximal conductances in a conductance-based model \cite{drion_cellular_2019}. Here, we consider the case in which the calcium maximal conductances $\maxcond_{\Ca,1}(t)$ and $\maxcond_{\Ca,2}(t)$ of the true HCO model are slowly varied in time, something that is known to change the bursting frequency of the HCO model \cite{dethier_positive_2015}.  A gradual increase in the concentration of calcium ion channels is simulated by 
\begin{equation}
	\label{eq:maxcond_Ca}
	 \maxcond_{\Ca(t),1}(t) = 
	 \maxcond_{\Ca,2}(t) = 
	 0.11 + \frac{0.07}{1+\mathrm{exp}\left(
	 	-\tfrac{t-T_f/2}{1250}\right)}
\end{equation}
where $T_f = 10$ seconds is the length of the simulation. 
For $i\in \{1,2\}$, the remaining maximal conductances of the true HCO model are given by $\maxcond_{\Na,i}=60$, $\maxcond_{\K,i}=40$, $\maxcond_{\Leak,i}=0.035$, and $\maxcond_{\GABA,2,1} = \maxcond_{\GABA,1,2} = 4$.

In the observer above, the reversal potentials,  capacitances, activation functions and time-constant functions are defined according to the nominal parameters of the true model detailed in \cref{sec:HCO_parameters} (where initial conditions are also detailed). To simulate an unknown disturbance $d_w$ in the true internal dynamics (see \cref{sec:robustness}), a random disturbance of at most $1\%$ (following the uniform distribution) is applied to every internal dynamics parameter of the true system.

Figure \ref{fig:HCO_truvoltages_neuromodulation} illustrates the resulting voltage traces of the true (perturbed) HCO model. The neuromodulatory action on the calcium conductance increases the number of spikes in each burst. For a forgetting rate of $\alpha = 0.0025$,  observer gains of $\beta = 0$ and $\gamma = 0.1$, and a constant input $\Iapp_1(t) = \Iapp_2(t) = -0.65$ $\mathrm{\upmu A/cm^2}$, Figure \ref{fig:HCO_parameters_neuromodulation} shows the trajectories of some of the true and estimated maximal conductances.  It can be seen that the estimates converge towards a region close to the true parameters, illustrating the robustness of the convergence property of the observer. The bias in the estimates after convergence is expected, as the internal dynamics of the observer and of the true model are different due to model mismatch (but simulating the estimated model with fixed $\parvest(t=10s)$ results in half-center oscillations congruent with the those of the true system). The calcium conductance estimates track the true calcium conductances by remaining in a time-varying region around the true value. A comparison with Figure \ref{fig:HCO_truvoltages_neuromodulation} shows that calcium estimates are corrected whenever a burst of spikes (a rich part of the signal) is elicited. 

\begin{figure}
	\centering
	\if\usetikz1
	\tikzset{png export}
	\begin{tikzpicture}
		\begin{groupplot}[
			group style={group size=1 by 3,
 			vertical sep=20pt,
			},
			height=3.5cm,width=6.5cm,
			axis y line = left, 
			axis x line = bottom,	
			tick label style={font=\scriptsize},	
			label style={font=\scriptsize},
			legend style={font=\footnotesize},
			filter discard warning=false,				
			legend pos=outer north east,
			yticklabel style={
			        /pgf/number format/fixed,
			        /pgf/number format/precision=2
			},
			scaled y ticks=false
			]
			\nextgroupplot[
				ylabel={$\mathrm{[mS/cm^2]}$},
				ymin=20,
				ymax=100,
				]
			\addplot[smooth,color=blue,semithick,each nth point=5]
				table[x expr=\thisrowno{0}/1000,y index=1] 
				{./data/HCO_parameters_seed=101pcerror=0.01.txt};
				\addlegendentry{$\maxcondest_{\Na,1}$}; 
			\addplot [smooth,color=red,semithick,each nth point=5]
				table[x expr=\thisrowno{0}/1000,y index=2] 
				{./data/HCO_parameters_seed=101pcerror=0.01.txt};
				\addlegendentry{$\maxcondest_{\Na,2}$};
			\addplot[dashed,black,domain=0:10,semithick]{60};
				\addlegendentry{$\maxcond_{\Na,i}$};
				
			\addplot[smooth,color=magenta,semithick,each nth point=5]
				table[x expr=\thisrowno{0}/1000,y index=3] 
				{./data/HCO_parameters_seed=101pcerror=0.01.txt};
				\addlegendentry{$\maxcondest_{\K,1}$}; 
			\addplot [smooth,color=cyan,semithick,each nth point=5]
				table[x expr=\thisrowno{0}/1000,y index=4] 
				{./data/HCO_parameters_seed=101pcerror=0.01.txt};
				\addlegendentry{$\maxcondest_{\K,2}$};
			\addplot[dashed,black,domain=0:10,semithick]{40};
				\addlegendentry{$\maxcond_{\K,i}$};
			\nextgroupplot[
				ylabel={$\mathrm{[mS/cm^2]}$},
				ymin = 0.0,
				ymax = 0.2,
				]
			\addplot [smooth,color=blue,semithick,each nth point=5]
				table[x expr=\thisrowno{0}/1000,y index=5] 
				{./data/HCO_parameters_seed=101pcerror=0.01.txt};
				\addlegendentry{$\maxcondest_{\Ca,1}$}; 
			\addplot[smooth,color=red,semithick,each nth point=5]
				table[x expr=\thisrowno{0}/1000,y index=6] 
				{./data/HCO_parameters_seed=101pcerror=0.01.txt};
				\addlegendentry{$\maxcondest_{\Ca,2}$};
			\addplot[dashed,color=black,semithick,each nth point=5]
				table[x expr=\thisrowno{0}/1000,y index=11] 
				{./data/HCO_parameters_seed=101pcerror=0.01.txt};
				\addlegendentry{$\maxcond_{\Ca,i}$};
			\nextgroupplot[
				ylabel={$\mathrm{[mS/cm^2]}$},
				ymin=0,
				xlabel={$t$ [s]}
				]
			\addplot[smooth,color=blue,semithick,each nth point=5]
				table[x expr=\thisrowno{0}/1000,y index=9] 
				{./data/HCO_parameters_seed=101pcerror=0.01.txt};
				\addlegendentry{$\maxcondest_{\GABA,2,1}$}; 
			\addplot[smooth,color=red,semithick,each nth point=5]
				table[x expr=\thisrowno{0}/1000,y index=10] 
				{./data/HCO_parameters_seed=101pcerror=0.01.txt};
				\addlegendentry{$\maxcondest_{\GABA,1,2}$}; 
			\addplot[dashed,black,domain=0:10,semithick]{4};
				\addlegendentry{$\maxcond_{\GABA,i,j}$}; 
		\end{groupplot}
	\end{tikzpicture}
	\else
		\includegraphics[scale=0.125]{./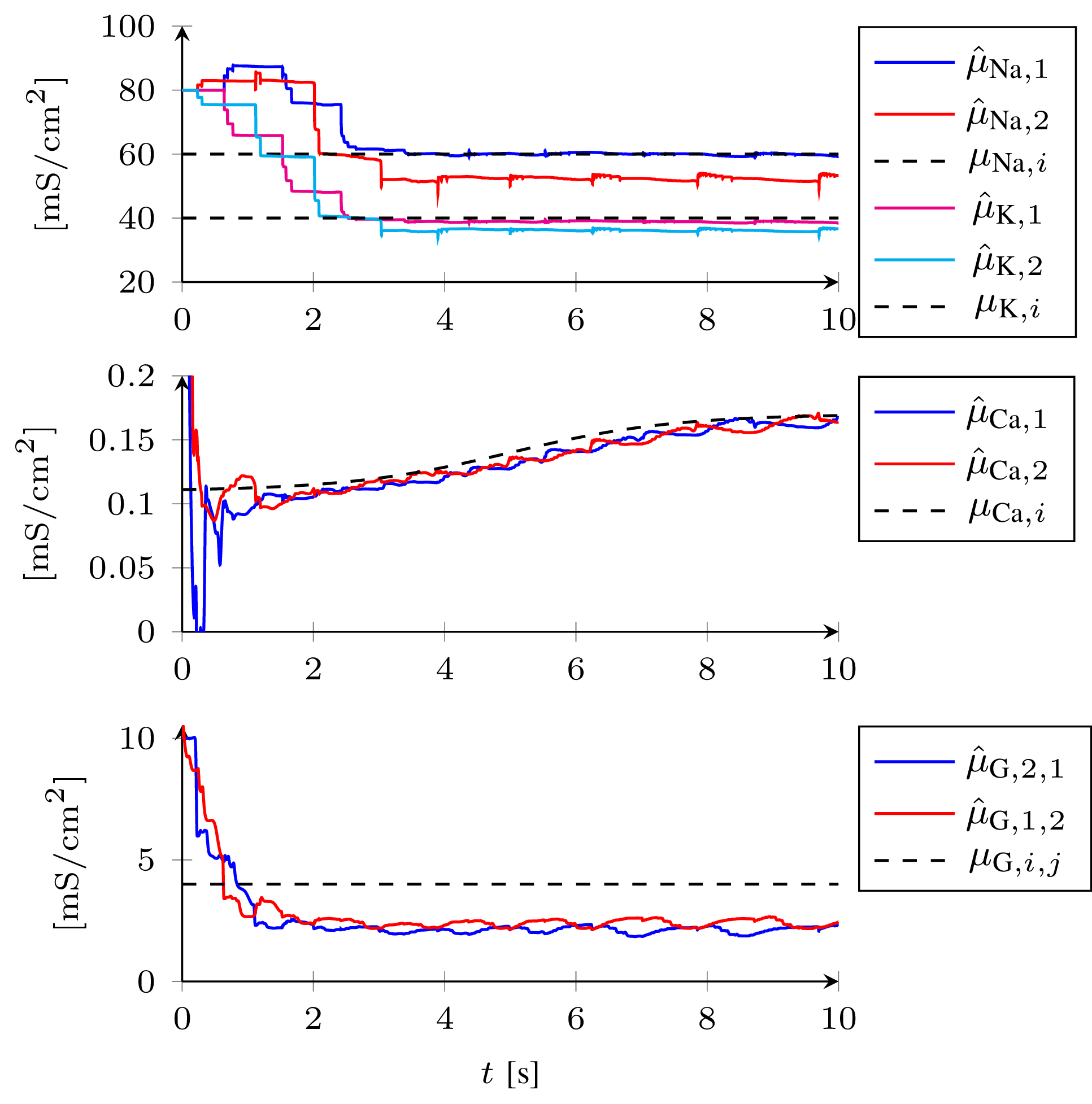}
	\fi
	\caption{Example trajectories of HCO maximal conductance estimates when noise and model mismatch are present (see \cref{sec:HCO_neuromodulation}).}
	\label{fig:HCO_parameters_neuromodulation}
\end{figure}

\section{Concluding remarks}

We have presented an adaptive observer which can be used for
real-time estimation of conductance-based models of neuronal circuits. 
The observers in this work can be used for indirect adaptive control of neuronal maximal conductances \cite{schmetterling_adaptive_2022}, opening the way for innovative neurophysiology research. Future work will explore the benefits and limitations of the the method in an experimental context.

\vspace{1em}
	
\section{Acknowledgements}

We thank Dr Fulvio Forni and Raphael Schmetterling for the feedback given during the writing of this manuscrupt. We also thank the anonymous reviewers who greatly contributed to the improvement of the results.
	
\appendix

\subsection{Contraction analysis}
\label{sec:contraction_analysis}
 
The system dynamics 
\begin{equation}
	\label{eq:nonlinear_system} 
	\dot{x} = f(x,u)
\end{equation}
is said to be \textit{exponentially contracting}
\cite{lohmiller_contraction_1998}
in $x$ on $X \subset \setreal^{n_x}$, uniformly in 
$u$ on $U \subseteq \setreal^{n_u}$, if there
exist a continuously differentiable symmetric
matrix $P(x,t)$, called the \textit{contraction
metric}, and a constant $\lambda > 0$, called the
\textit{contraction rate}, such that 
$\epsilon_1 I \preceq P(x,t)\preceq \epsilon_2 I$
for some $\epsilon_1,\epsilon_2>0$, and 
$\partial_x f^\transp P + P \partial_x f
+ \dot{P} \preceq -\lambda P$ for all $t \ge 0$,
all $x \in X$, and all $u \in U$. 
The set $X\subset \setreal^{n_x}$ is said to be
\textit{positively invariant} with respect to the
dynamics \eqref{eq:nonlinear_system}, uniformly in
$u$ on $U \subseteq \setreal^{n_u}$, if $x(0) \in X$
and $u(t) \in U$ for all $t \ge 0$ imply 
$x(t) \in X$ for all $t \ge 0$. It is a well-known
fact that if the dynamics 
\eqref{eq:nonlinear_system} are exponentially
contracting on a convex positively invariant set
$X$, then all solutions of that system starting in 
$X$ converge towards each other exponentially
fast, with rate $\lambda$ (for a proof of this 
statement, see for instance \cite[Lemma 1]{jouffroy_tutorial_2010}).


\subsection{Proofs}

\subsubsection{Proof of \cref{lem:invariant_set}}
\label{proof:invariant_set} 
	We begin by noticing that $[0,1]$ is a positively invariant set for \eqref{eq:activation} and for \eqref{eq:inactivation}, uniformly in $v$ on $\setreal$. This is because the image of the sigmoid \eqref{eq:sigmoid} is $(0,1)$, which implies none of the gating variables $m_\ion$ and $h_\ion$ can leave the set $[0,1]$: for instance, $\dot{m}_\ion \ge 0$ for $m_\ion = 0$ and all $v \in \setreal$, and $\dot{m}_\ion \le 0$ for $m_\ion = 1$ and all $v \in \setreal$. Now, assuming $m_\ion(0) \in [0,1]$ and $h_\ion(0) \in [0,1]$, we have $\maxcond_\ion m_\ion^{p_\ion}h_\ion^{q_\ion} > 0$ for all $\ion\in\mathcal{I}$ and all $t \ge 0$. This in turn implies $v$ cannot leave the interval $[\vmin,\vmax]$, which can be verified by inspection of  \eqref{eq:single_neuron_cb}-\eqref{eq:ion_currents}:  if $v = \vmax$, then $\dot{v} \le 0$, whereas if $v = \vmin$, then $\dot{v} \ge 0$.
	
\subsubsection{Proof of \cref{prop:rls}}
\label{proof:rls}

The normal equation of the LS problem \eqref{eq:batch_problem}-\eqref{eq:filtered} with $R_0(T) = e^{-\alpha T} P^{-1}(0)/T$ is 
\begin{equation*}
	\label{eq:normal_equation} 
	R(T)
	\hat{\parv}(T) 
	= 
	\textstyle\int_0^T e^{-\alpha(T-\tau)}\Psi(\tau)^\transp(H\dot{v}(\tau)-H\hat{a}(\tau))d\tau 
\end{equation*}
where
\begin{equation}
	\label{eq:R_sol} 
	R(t) = e^{-\alpha t} P^{-1}(0) +
	\textstyle\int_0^t e^{-\alpha(t-\tau)} \Psi(\tau)^\transp\Psi(\tau)d\tau
\end{equation}
Differentiating the normal equation by $T$ and evaluating at $t$ we obtain the RLS solution	
	\begin{equation*}
		\dot{\hat{\parv}}(t) =  
		P(t)\Psi(t)^\transp 
		\big(H\dot{v}(t) - \Psi(t) \hat{\parv}(t) - H\hat{\bv}(t)\big)
	\end{equation*}
	Thus \eqref{eq:adaptive_observer}-\eqref{eq:adaptive_matrices} implements the RLS solution if and only if
	\begin{equation}
		\label{eq:rls_equivalence} 
		H\dot{v}(t) - \Psi(t)\hat{\parv}(t) - H \hat{\bv}(t)
		= \gain(v(t)-\vest(t))	
	\end{equation}
	To verify the above identity, we 
	first notice that 
	\[
		\begin{split}
		\frac{d}{dt}(\Psi \hat{\parv})
		&= -\gain \Psi\hat{\parv} + \gain \Phi\hat{\parv}
		+\gain \Psi P \Psi^\transp(v-\vest)\\
		&= -\gain \Psi\hat{\parv} 
		+ \gain (\dot{\vest} - \gamma(v-\vest)- \hat{\bv})
		\end{split}
	\]
	Solving the previous equation for 
	$\Psi \hat{\parv}$, we obtain
	\[
		\Psi(t) \hat{\parv}(t)
		= -\gain Hv(t) - H\hat{\bv}(t)
		  + \gain \vest(t)
	\]
	We can now recover \eqref{eq:rls_equivalence}
	by adding $\gain(v(t)-\vest(t))$ to both
	sides of the previous equation and applying
	the identity 
	\[
		\gain(v(t) - Hv(t)) = H\dot{v}(t)
	\]
	which can be easily verified from
	\eqref{eq:filter}. 
	
\subsubsection{Proof of \cref{lem:P}}
\label{proof:P}
	Consider the system
	\begin{equation}
		\label{eq:dR} 
		\dot{R} = 
		-\alpha R + \Psi^\transp\Psi
	\end{equation}
	with $R(0) = P(0)^{-1} \succ 0$, whose solution is given by \eqref{eq:R_sol}. We claim that $R(t)$ is uniformly positive definite and bounded for all $t\ge 0$, and that
		$\Pmax^{-1} I \preceq R(t)
		\preceq \Pmin^{-1} I$,
for $t \ge T$ with the bounds given by \eqref{eq:kappa1}. In this case, \eqref{eq:P_bounds} follows from setting $P(t) = R(t)^{-1}$ and checking that the identity $\dot{P}=\dot{R}^{-1}= -R^{-1}\dot{R}R^{-1}$ leads to \eqref{eq:dP}. To prove the claim, we first notice $R(t) \ge e^{-\alpha T} R(0)$ for $0 \le t \le T$. For $t \ge T$, we can show that $R(t) \ge \Pmax^{-1}I$ by following the same steps as in the proof of \cite[Lemma 1]{zhang_adaptive_2001}. The upper bound $\Pmin^{-1}I$ of $R(t)$ can be obtained as follows: first, we notice that \eqref{eq:dPsi} yields 
$\|\Psi(t)\| \le \overline{\phi}$ for all $t\ge 0$. Then, since \eqref{eq:R_sol} is the solution to \eqref{eq:dR}, we have
\begin{align*}
\|R(t)\| 
&\le \|R(0)\|+ \alpha^{-1} \sup_{\tau \ge 0}\|\Psi(\tau)\|^2 \le \alpha^{-1}\overline{\phi}^2
\end{align*}
proving the claim.

\subsubsection{Proof of \cref{thm:convergence_linear}}
\label{proof:convergence_linear}
We prove this result using the \textit{virtual system} idea of contraction analysis \cite{lohmiller_contraction_1998,jouffroy_tutorial_2010}: we construct a so-called virtual system whose solutions contain the solutions of both \eqref{eq:true_system_simple} and \eqref{eq:adaptive_observer}; then we show that the virtual system is globally exponentially contracting; this will imply that any solutions of \eqref{eq:true_system_simple} and \eqref{eq:adaptive_observer} converge exponentially fast towards each other.
We consider the virtual state vector 
\[\xvirt = \col(\vvirt,\wvirt,\parvvirt)\] and the virtual system given by
\begin{equation}
	\label{eq:virtual_system}
	\begin{split}
		\dot{\vvirt} &= 
		\fvvirt(t,\wvirt,\parvvirt) + (\gain I+\Psi P \Psi^\transp)(v-\vvirt)  
		\\		
		\dot{\wvirt} &= 
		A(v)\wvirt + \bw(v)
		\\		
		\dot{\parvvirt} &= \gamma P \Psi^\transp (v - \vvirt)
	\end{split}
\end{equation}
where
\[
	\fvvirt(t,\wvirt,\parvvirt) = 
	\Phi(v,\wvirt,u)\theta +
	\Phi(v,\west,u)
	(\parvvirt - \theta)
	+ \bv(v,\wvirt,u)
\]
By construction of $\fvvirt(t,\wvirt,\parvvirt)$, any solutions $x=\col(v,w,\theta)$ of \eqref{eq:true_system_simple} and $\xest=\col(\vest,\west,\thetaest)$ of \eqref{eq:adaptive_observer} are particular solutions of the virtual system \eqref{eq:virtual_system}; notice that $v(t)$, $\west(t)$, and $u(t)$ are not states of the virtual system. 

To show that the virtual system is globally exponentially contracting, we use the differential Lyapunov function 
\begin{equation}
	\label{eq:partialV} 
	\delta V(t,\delta \xvirt) = \delta \xvirt^\transp T(t)^\transp \bar{M}(t) T(t) \delta \xvirt
\end{equation}
where
\begin{equation}
	\label{eq:T} 
	T =
	\begin{bmatrix}
	I & 0 & -\frac{\Psi}{\gain} \\
	0 & I & 0 \\
	0 & 0 & I
	\end{bmatrix},\quad
	\bar{M} = 
	\begin{bmatrix}
	\varepsilon I & 0 & 0 \\
	0 & \Pint & 0 \\
	0 & 0 & \varepsilon (\gamma P)^{-1}
	\end{bmatrix}
\end{equation}
and $\delta \xvirt$ is the state vector of the differential system $\dot{\delta \xvirt} =  J \delta \xvirt$, with
\begin{equation}
	\label{eq:jacobian}
	J   = 
	\begin{bmatrix}
		- (\gamma I+\Psi P \Psi^\transp) & \partial_{\wvirt}\fvvirt(t,\wvirt,\parvvirt) & \Phi(v,\hat{w},u) \\
		0 & 
		A(v)
		& 0 \\
		-\gamma P \Psi^\transp & 0 & 0
	\end{bmatrix}
\end{equation}
the Jacobian of the vector field of \eqref{eq:virtual_system}. It can easily be verified that
\begin{equation}
	\label{eq:dpartialV} 
	\dot{\delta V}(t,\delta \xvirt,\xvirt)
	= \delta \xvirt^\transp T(t)^\transp
	\left( \bar{J}^\transp \bar{M} + \bar{M} \bar{J} + \dot{\bar{M}} \right) T(t) \delta \xvirt
\end{equation}
where
\begin{equation}
	\label{eq:J_bar} 
	\bar{J} = (T J + \dot{T})T^{-1}
\end{equation}
Hence to show global contraction we must show that the metric
	\begin{equation}
		\label{eq:Pvirt} 
		\Pvirt(t) := T^\transp(t) \bar{M}(t) T(t) 
	\end{equation}	
	in \eqref{eq:partialV} is uniformly positive definite
	and bounded, and, in view of \eqref{eq:dpartialV}, that
	\begin{equation}
	\label{eq:contraction_ineq} 
	\bar{J}^\transp \bar{M} + \bar{M} \bar{J} + 	\dot{\bar{M}} \preceq -\lambda \bar{M}
\end{equation}
for all for all $\xvirt \in \setreal^{\nv+\nw+\nparv}$ and $t \ge 0$. Boundedness of $M(t)$ follows from \cref{lem:P}, which also ensures that $\bar{M}(t)$ is uniformly positive definite. Since $T(t)$ is uniformly full column rank, this implies $M(t)$ is uniformly positive definite. 


It remains to show \eqref{eq:contraction_ineq}.
Computing the left-hand side of \eqref{eq:contraction_ineq} from \eqref{eq:T}, \eqref{eq:jacobian}, and \eqref{eq:J_bar}, we obtain
\begin{align*}
	&\bar{J}^\transp \bar{M} + \bar{M} \bar{J} + \dot{\bar{M}} = 
	\\
	 &\begin{bmatrix}
 	-2\varepsilon\gamma I & 
 	\varepsilon \partial_{\wvirt}\fvvirt(t,\wvirt) & -\varepsilon\Psi \\
 	*
 	& {\scriptstyle A(v)^\transp\Pint+\Pint A(v)} & 0 \\
 	* & * & 
 	- \gain^{-1}\varepsilon(\Psi^\transp\Psi + \alpha P^{-1})
 \end{bmatrix}
 \preceq Q
\end{align*}
Where the upper bound matrix $Q$ is given by
\begin{equation*}
	\label{eq:Q} 
 Q =
 \begin{bmatrix}
 	-\varepsilon\gamma I & \varepsilon \partial_{\wvirt}\fvvirt(t,\wvirt) & 0 \\
 	*
 	& -\rateint \Pint & 0 \\
 	* & * & - \varepsilon\alpha (\gain P)^{-1}
 \end{bmatrix}
\end{equation*}
Finally, notice that 
\[
	\partial_{\wvirt}\fvvirt(t,\wvirt)
	= \partial_{\wvirt}\left(
	\Phi(v,\wvirt,u)\theta
	+ \bv(v,\wvirt,u)\right)
\]
is bounded (\cref{rmk:saturation_w}). Thus for any $\lambda < \min\{\alpha,\rateint,\gamma\}$ choosing
\begin{equation}
	\label{eq:epsilon} 
\varepsilon = (\rateint-\ratevirt) (\gamma-\ratevirt) \lambda_{\min}[\Pint] \sup \|\partial_{\wvirt} \tilde{f}(t,\wvirt)\|^{-2}
\end{equation}
ensures $Q \preceq -\lambda \bar{M}$ and hence \eqref{eq:contraction_ineq} holds globally, and the virtual system is globally exponentially contracting.

\subsubsection{Proof of \cref{lem:P_int}}
\label{proof:P_int} 

From  \cref{assum:true_invariant_set,assum:known_sets,rmk:saturation_parameters,rmk:saturation_w},
we have that the off-diagonal term in $A_\Psi(t)$ and the nonzero terms in $B_\Psi(t)$ are bounded for all $t \ge 0$. Let us define
$\overline{\phi} := \sup \|\Phi(v,w,u)\|$, 
$\overline{a} := \sup \|\partial_{\hat{w}}[\Phi(v,\hat{w},u)\varsigma_\theta(\hat{\theta}) + a(v,\hat{w},u)]\|$ and 
$\overline{b} = \max_i \sup \|\partial_{\hat{\eta_i}}[A(v,\hat{\eta})\varsigma_w(\hat{w})+b(v,\hat{\eta})]\|$ 
where the $\sup$s extend over $v \in V$, $u\in U$, $\hat{w}\in W$, $\hat{\theta} \in \Theta$ and $\hat{\eta} \in H$. 
To show $\Psi(t)$ is bounded, first denote each column of $\Psi_w(t)$ by $\psi_w^j(t)$, $j = 1,\dotsc,n_\parv+n_\parw$. Recalling that $\Psi_w(0)=0$, it follows from \eqref{eq:dPsi_internal} that $\psi^j_w(t) = 0$ for $j=1,\dotsc,n_\parv$ and all $t \ge 0$. Furthermore,  \cref{assum:int_dyn_contraction} implies that the dynamics $\dot{\psi}^j_w = A(v,\hat{\parw}) \psi_w^j$ is contracting, and hence it follows from \eqref{eq:dPsi_internal} that
\[
	\|\psi^j_w(t)\| \le \gain  \tfrac{2}{\lambda_w} \sqrt{\tfrac{\lambda_{\max}(M_w)}{\lambda_{\min}(M_w)}}\overline{b}
\]
for $j=n_\parv+1,\dotsc,n_\parv+n_\parw$ and all $t \ge 0$ \cite[Lemma 1]{del_vecchio_contraction_2013}. 
Since $\|\Psi_w(t)\| \le \nparw \max_j \|\psi_w^j(t)\|$ for each $t$, from \eqref{eq:dPsi_internal} we have
\[
	\|\Psi_v(t)\| \le \bar{c} := \overline{\phi} + \tfrac{2}{\lambda_w} \sqrt{\tfrac{\lambda_{\max}(M_w)}{\lambda_{\min}(M_w)}}\,\overline{a}\,\overline{b} \, \nparw
\]
for all $t \ge 0$, where $\bar{c}$ is independent of the hyperparameters $\alpha$, $\beta$, and $\gamma$. Hence $\Psi(t)$ is bounded.

To show that \eqref{eq:P_bounds_beta} holds, we first define the two systems
\begin{equation}
	\begin{split}
		\dot{\underline{P}} &= \alpha \underline{P} - \underline{P} \Psi_v^\transp\Psi_v \underline{P} \\
		\dot{\overline{P}} &= \alpha \overline{P} + \beta I
	\end{split}
\end{equation}
with $\overline{P}(0) = \underline{P}(0) = P(0)$. Then, by the the Comparison Theorem for the differential Riccati equations \cite[Theorem 4.1.4]{abou-kandil_matrix_2003}, it follows that
\[
	\underline{P}(t) \preceq P(t) \preceq \overline{P}(t)
\]
for all $t \ge 0$. Just as in \cref{lem:P}, we have 
$\underline{P}(t) \succeq \underline{p} I$ for all $t \ge 0$, with $\Pmin$ now given by \eqref{eq:pbounds_beta}. On the other hand, we have
\[
	\overline{P}(t) = e^{\alpha t}\overline{P}(0) + \tfrac{\beta}{\alpha}(e^{\alpha t}-1)I 
\]
for all $t \ge 0$ and hence $P(t)$ is upper bounded for $t \in [0,T)$. To find an upper bound for $P(t)$ for all $t \ge T$, we can use \cite[Lemma 2]{delyon_note_2001}, which specialized to our system states that the solutions of the system of equations
\begin{align*}
	\dot{R} &= -\alpha R + \Psi_v^\transp\Psi_v ,\quad & R(0) = 0 \\
	\dot{D} &= -\alpha D + \beta R^2, \quad & D(0) = 0
\end{align*}
satisfy
\begin{equation}
	\label{eq:upper_bound_P}
	P(t) \preceq R^{-1}(t) + R^{-1}(t)D(t)R^{-1}(t)
\end{equation}
as soon as $R^{-1}(t)$ exists. This is guaranteed from $t \ge T$, since, following the steps in the proof of \cite[Lemma 1]{zhang_adaptive_2001}, we have
\begin{equation}
	\label{eq:lower_bound_R}
	R(t) \succeq \delta e^{-2\alpha T} I
\end{equation}
for all $t \ge T$. Furthermore,
\[\|D(t)\| \le \frac{\beta}{\alpha} \sup_{t\ge 0}\|R(t)\|^2
\le \frac{\beta}{\alpha^3} \sup_{t\ge0}\|\Psi_v(t)\|^4 \le  \frac{\beta}{\alpha^3} \bar{c}^4 \]
for all $t \ge 0$. Hence it follows from \eqref{eq:upper_bound_P} and \eqref{eq:lower_bound_R} that
$P(t) \preceq \Pmax I$, for all $ t \ge T$, with $\Pmax$  given by \eqref{eq:pbounds_beta}. 

\subsubsection{Proof of \cref{thm:convergence_internal}}
\label{proof:convergence_internal} 

To begin, notice by \cref{lem:P_int}, the metric $M(t) = T(t)^\transp \bar{M}(t) T(t)$ given by \eqref{eq:T_int} is bounded, and $\bar{M}(t)$ is uniformly positive definite. Furthermore, since $T(t)$ is uniformly full column rank, $M(t)$ is also uniformly positive definite. This proves \eqref{eq:M_bounds}.

In the rest of the proof, in similar fashion to the proof of \cref{thm:convergence_linear}, we consider the virtual state vector 
\[\xvirt:=\col(\vvirt,\wvirt,\parvvirt,\parwvirt)\] 
and the virtual system
\begin{equation}
	\label{eq:virtual_system_int}
	\begin{split}
		\dot{\vvirt} &= 
		\fvvirt(t,\wvirt,\parvvirt) + (\gain I+\Psi_v P \Psi_v^\transp)(v-\vvirt)  
		\\		
		\dot{\wvirt} &= 
		\fwvirt(t,\wvirt,\parvvirt)	+ \Psi_w P \Psi_v^\transp (v-\vvirt)
		\\		
		\col(\dot{\parvvirt},\dot{\parwvirt}) 
		 &= \gamma P \Psi_v^\transp (v - \vvirt)
	\end{split}
\end{equation}
where
\[
	\begin{split}
	\fvvirt(t,\wvirt,\parvvirt) &= 
	\Phi(v,\wvirt,u)\parv + \Phi(v,\hat{w},u)(\parvvirt-\parv) + \bv(v,\wvirt,u)
	\\
	\fwvirt(t,\wvirt,\parwvirt) &=
	A(v,\parwvirt)w + A(v,\hat{\parw})(\wvirt-w) + \bw(v,\parwvirt)	
	\end{split}
\]
By construction of $\fvvirt$ and $\fwvirt$, any solutions $x=\col(v,w,\parv,\parw)$ of \eqref{eq:true_system} and $\xest=\col(\vest,\west,\hat{\parv},\hat{\parw})$ of \eqref{eq:adaptive_observer_int} are particular solutions of the virtual system \eqref{eq:virtual_system_int}. 

We use the differential Lyapunov equation 
$\partial V = \partial \xvirt^\transp T^\transp \bar{M} T \partial \xvirt := \partial \xvirt^\transp \Pvirt \partial \xvirt$, with $T$ and $\bar{M}$  given by \eqref{eq:T_int} and $\delta \xvirt$ the state of the differential system 
$\dot{\delta \xvirt} =  J \delta\xvirt$. The Jacobian $J$ of the vector field of \eqref{eq:virtual_system_int} is given by
\begin{equation}
\label{eq:jacobian_int} 
	J   = 
		\left[
			\begin{array}{cc|c}
			- \Psi P \Psi_v^\transp & 0 & 0  \\ \hline
			-\gamma P \Psi_v^\transp & 0 & 0 
			\end{array}
		\right]
		+ 
		\left[
			\begin{array}{c|c}
			\tilde{A}_\Psi(t) & \tilde{B}_\Psi(t) \\\hline 
			0 & 0 
			\end{array}
		\right]
\end{equation}
where
\begin{align*}
	\tilde{A}_\Psi(t) &= 
	\begin{bmatrix}
			- \gamma I & 
			\partial_{\wvirt} [\Phi(v,\wvirt,u)\parv+\bv(v,\wvirt,u)]\\
			0_{\nw\times\nv} & A(v,\hat{\parw})
	\end{bmatrix}
	\\
	\tilde{B}_\Psi(t) &=
	\begin{bmatrix}
		\Phi(v,\hat{w},u) & 0_{\nv\times\nparw} \\
		0_{\nw\times\nparv} & \partial_{\parwvirt}
		[A(v,\parwvirt)w+\bw(v,\parwvirt)]
	\end{bmatrix}
\end{align*}
(recall the lines delimit block sub-matrices of the same size).

As in the previous proof, $\partial\dot{V}(t,\partial\xvirt,\xvirt)$ satisfies \eqref{eq:dpartialV}, with $\bar{J}$ given by \eqref{eq:J_bar} but $J$ now given by \eqref{eq:jacobian_int} and $T$ now given by \eqref{eq:T_int}. Computing $\bar{J}$ while replacing $\dot{\Psi}$ by \eqref{eq:dPsi_internal}, we obtain 
\begin{equation*}
	\bar{J} =  \left[
		\begin{array}{cc|c}
		& &  		  
		\\
		\multicolumn{2}{c|}{\smash{\raisebox{.5\normalbaselineskip}{$\tilde{A}_\Psi(t)$}}}
		& 
		\smash{\raisebox{.5\normalbaselineskip}{$\left(\tilde{A}_\Psi(t)-A_\Psi(t)\right) \tfrac{\Psi}{\gain} + \tilde{B}_\Psi(t)-B_\Psi(t)$}}
		 \\ \hline
		-\gamma P \Psi_v^\transp & 0 & -P\Psi_v^\transp \Psi_v
		\end{array}
	\right]
\end{equation*}

Since the metric $M(t)$ is bounded and uniformly positive definite and since $\partial\dot{V}(t,\partial\xvirt,\xvirt)$ satisfies \eqref{eq:dpartialV}, to prove our result, we will find a contraction region in the state space where \eqref{eq:contraction_ineq} holds for all $t \ge 0$. Computing the left-hand side of \eqref{eq:contraction_ineq} from \eqref{eq:T_int} and $\bar{J}$ above, we obtain
\begin{equation*}
\bar{J}^\transp \bar{M} + \bar{M} \bar{J} + \dot{\bar{M}}
\preceq Q	
\end{equation*}
where the upper bound matrix $Q=Q^\transp$ is given by
\begin{equation*}
	Q =
	\left[
	\begin{array}{cc|cc}
	-\gamma\varepsilon I & 
	\varepsilon\partial_{\wvirt} \fvvirt(t,\wvirt) &
	 0 &
	 \varepsilon \gamma^{-1} \Delta_1 \Psi_{w,2}
	\\
	* & 
	-\rateint \Pint &
	0 & 
	\Pint \Delta_2 \\ \hline
	\multicolumn{2}{c|}{*} & 
	\multicolumn{2}{c}{
	-\varepsilon \gamma^{-1}\left( \alpha P^{-1} + \beta P^{-2} \right)
	} 
	\end{array}
	\right]
\end{equation*}
where we used $\Psi_w = \begin{bmatrix} 0 & \Psi_{w,2} \end{bmatrix}$ (see \cref{assum:Psi_w2}). Here, 
{\small
\begin{align*}
	\Delta_1 
	&= \partial_{\wvirt} [\Phi(v,\wvirt,u)\parv+\bv(v,\wvirt,u)]
		-\partial_{\hat{w}} [\Phi(v,\hat{w},u)\parv+\bv(v,\hat{w},u)]\\
		&\quad +\partial_{\hat{w}}[\Phi(v,\hat{w},u)(\parv-\sat_\parv(\hat{\parv}))]
\end{align*}
}
and
\begin{align*}
	\Delta_2
		&= \partial_{\parwvirt}[A(v,\parwvirt)w+\bw(v,\parwvirt)]
		-\partial_{\hat{\parw}}[A(v,\hat{\parw})w+\bw(v,\hat{\parw})] \\
		&\quad + \partial_{\hat{\parw}}[A(v,\hat{\parw})(w-\sat_w(\hat{w}))]
\end{align*}

We now wish to find a region of the state space where 
$Q \preceq -\ratevirt \bar{M}$ for all $t \ge 0$. For that purpose, let $\lambda < \min\{\alpha,\rateint,\gamma\}$, and consider an arbitrary number $\zeta \in (0,1)$. Then using Schur's complement we can show that the choice
\begin{equation}
	\label{eq:epsilon_int} 
\varepsilon = (1-\zeta)^2(\rateint-\ratevirt) (\gamma-\ratevirt) \lambda_{\min}[\Pint] \sup \|\partial_{\wvirt} \tilde{f}(t,\wvirt)\|^{-2}
\end{equation}
leads to
{\small
\begin{align*}
	&-Q -\ratevirt \bar{M} \succeq \\
	&\left[
	\begin{array}{cc|cc}
	\varepsilon\zeta(\gain-\ratevirt) I & 0 & 0 & -\varepsilon \gain^{-1} \Delta_1 \Psi_{w,2}
	\\
	* & \zeta(\rateint-\ratevirt) \Pint & 0 & -\Pint \Delta_2 
	\\ \hline
	\multicolumn{2}{c|}{*} & 
	\multicolumn{2}{c}{
	\varepsilon \gain^{-1} \left( (\alpha-\ratevirt) P^{-1} + \beta P^{-2} \right)
	} 
	\end{array}
	\right]
\end{align*}}

Hence it follows from Schur's complement that the right-hand side of the inequality above is positive semidefinite if and only if
\begin{equation*}
	\frac{\varepsilon}{\gamma} \left( (\alpha-\ratevirt) P^{-1} + \beta P^{-2} \right) \succeq 
	\begin{bmatrix}
	0 & 0 \\ 0 & \frac{\varepsilon}{\gain^2}\frac{\|\Delta_1\Psi_{w,2}\|^2}{\zeta(\gamma-\ratevirt)} + 
	\frac{\|\Delta_2\|_M^2}{\zeta(\rateint-\ratevirt)}
	\end{bmatrix}
\end{equation*}
By \cref{lem:P_int}, the above will hold if
\begin{equation}
	\label{eq:sufficient_contraction} 
(\alpha-\ratevirt)\Pmax^{-1}	+ \beta \Pmax^{-2}
	\ge 
	\frac{1}{\gamma}\frac{\|\Delta_1\Psi_{w,2}\|^2}{\zeta(\gamma-\ratevirt)}
	+ 
\frac{\gain}	{\varepsilon}
	\frac{\|\Delta_2\|_M^2}{\zeta(\rateint-\ratevirt)}
\end{equation}
where we recall $\zeta\in(0,1)$ is arbitrary.

By the continuity and global Lipschitz properties of $\Delta_1$ and $\Delta_2$, as well as by boundedness of $\Psi_{w,2}$, there exists a sufficiently small $r>0$ such that for each $t \ge 0$,  whenever $\|x(t)-\hat{x}(t)\| \le r$, the inequality \eqref{eq:sufficient_contraction} holds for all $\tilde{x}\in\setreal^{\nv+\nw+\nparv+\nparw}$ such that $\|\tilde{x} - \hat{x}(t)\| \le r$. Thus, as long as $\|x(t)-\hat{x}(t)\| \le r$ for all $t \ge 0$, the ball $\mathcal{B}(t;r \sqrt{\Mmin})$ given by 
\eqref{eq:ball} is contained in a region of contraction at all times, which implies that any trajectory of \eqref{eq:virtual_system_int} starting in $\mathcal{B}_r(0)$ remains in $\mathcal{B}_r(t)$ for $t \ge 0$ and converges exponentially fast to $\hat{x}(t)$ with rate $\lambda$ (see Theorem 2 of \cite{lohmiller_contraction_1998}).
But since $x(t)$ is a valid trajectory of \eqref{eq:virtual_system_int}, it follows that if $x(0) \in \mathcal{B}_r(0)$, then $x(t)$ remains in $\mathcal{B}_r(t)$, and exponential convergence of $x(t)$ to $\hat{x}(t)$ is guaranteed.

\subsection{Model parameters}
\label{sec:model_parameters} 

\subsubsection{Hodgkin-Huxley model}
\label{sec:HH_parameters} 
HH model parameters were adapted from \cite[pp. 46-47]{izhikevich_dynamical_2007}. Voltage dynamics parameters are given as follows:
	\begin{table}[h]
		\normalsize
		\centering
		\begin{tabular}{|p{.8cm}|p{.8cm}|p{.8cm}|p{.8cm}|
			p{.8cm}|p{1.0cm}|p{.6cm}|}
		\hline
		$\maxcond_\Na$ &	$\maxcond_\K$ & $\maxcond_{\rm{L}}$ &
		$\nernst_\Na$ & $\nernst_\K$ & $\nernst_{\rm{L}}$ &
		$c$ 
		\\
		\hline
		$120$ & $36$ & $0.3$ & $55$ & $-77$ & $-54.4$ & $1$ 
		\\
		\hline
		\end{tabular}
	\end{table}

All activation functions 
are of the form \eqref{eq:sigmoid}, and all time-constant 
functions  
are of the form \eqref{eq:gaussian_function},
with parameters given in the table below:

\begin{table}[h]
	\normalsize
	\centering
	\begin{tabular}{|p{.8cm}|p{.8cm}|p{.8cm}|p{.8cm}|
		p{.8cm}|p{.8cm}|p{.8cm}|}
	\hline
	& $\vhalf$ & $\slope$ & 
	$\taumin$ &
	$\taumax$ & 
	$\mean$ & 
	$\std$
	\\
	\hline
	$m_\Na$ & $-40$ & $9$ & $0.04$ & $0.50$ & $-38$ & $30$
	\\
	\hline			
	$h_\Na$ & $-62$ & $-7$ & $1.2$ & $8.6$ & $-67$ & $20$
	\\
	\hline		
	$m_\K$ & $-53$ & $15$ & $1.1$ & $5.8$ & $-79$ & $50$
	\\
	\hline
	\end{tabular}
	
	\label{tab:HH_internal_parameters} 
\end{table}

\subsubsection{Half-center oscillator}
\label{sec:HCO_parameters}

Both neurons in the HCO of \cref{sec:HCO_neuromodulation} have identical nominal capacitances, reversal potentials and internal dynamics. The reversal potentials are given by $\nernst_\Na=50$, $\nernst_\K=\nernst_\GABA=-80$, $\nernst_\Ca=120$, and $\nernst_\Leak=-49$ mV; the capacitances are given by $c_1 = c_2 = 1$.
The internal dynamics were adapted from \cite[p.2474]{dethier_positive_2015}. All activation functions are of the form \eqref{eq:sigmoid}, and all intrinsic time-constant functions are of the form \eqref{eq:gaussian_function},
with parameters given in the table below. The synaptic time-constant \eqref{eq:synaptic_time-constant} has $a_\GABA = 2$ and $b_\GABA = 0.1$.

\begin{table}[H]
	\normalsize
	\centering
	\begin{tabular}{|p{.6cm}|p{.8cm}|p{.8cm}|p{.8cm}|
		p{.8cm}|p{1.1cm}|p{1.0cm}|}
	\hline
	& $\vhalf$ & $\slope$ & 
	$\taumin$ &
	$\taumax$ & 
	$\mean$ & 
	$\std$
	\\
	\hline
	$m_\Na$ & $-35.5$ & $5.29$ & $0.06$ & $42.37$ & $-387.92$ & $133.78$
	\\
	\hline			
	$h_\Na$ & $-48.9$ & $-5.18$ & $1.50$ & $2.50$ & $-62.90$ & $10.00$
	\\
	\hline		
	$m_\K$ & $-12.3$ & $11.8$ & $0.80$ & $6.65$ & $-76.62$ & $61.42$
	\\
	\hline		
	$m_\Ca$ & $-67.1$ & $7.20$ & $1.01$ & $40.03$ & $-117.58$ & $62.87$
	\\
	\hline		
	$h_\Ca$ & $-82.1$ & $-5.5$ & $40.49$ & $126.51$ & $-92.48$ & $-50.24$
	\\
	\hline		
	$s_\GABA$ & $-45$ & $2$ & $-$ & $-$ & $-$ & $-$
	\\
	\hline
	\end{tabular}
	\label{tab:HCO_internal_parameters} 
\end{table}

We have chosen the HCO initial conditions $v(0)$ and $w(0)$ from the trajectory observed at steady-state oscillations with $\maxcond_{\Ca,1} = \maxcond_{\Ca,2} = 0.11$.
The adaptive observer initial conditions were arbitrarily set to
$\vest(0) = (-50,-50)^\transp$,
$\west^{(1)}(0) = \west^{(2)}(0) = 0$,
$\maxcondest_{\Na,1}(0) =
\maxcondest_{\Na,2}(0) = 80$,
$\maxcondest_{\K,1}(0) = 
\maxcondest_{\K,2}(0) = 80$,
$\maxcondest_{\Ca,1}(0) = 
\maxcondest_{\Ca,2}(0) = 1$,
$\maxcondest_{\Leak,1}(0) =
\maxcondest_{\Leak,2}(0) = 1$,
$\maxcondest_{\GABA,2,1}(0) =
\maxcondest_{\GABA,1,2}(0) = 10$,
$\psi^{(1)}(0) = \psi^{(2)}(0) = 0$,
and
$P^{(1)}(0) = P^{(2)}(0) = 0.1 I$.

\bibliographystyle{plain}
\bibliography{references}

\begin{IEEEbiography}[{\includegraphics[width=1in,height=1.25in,clip,keepaspectratio]
{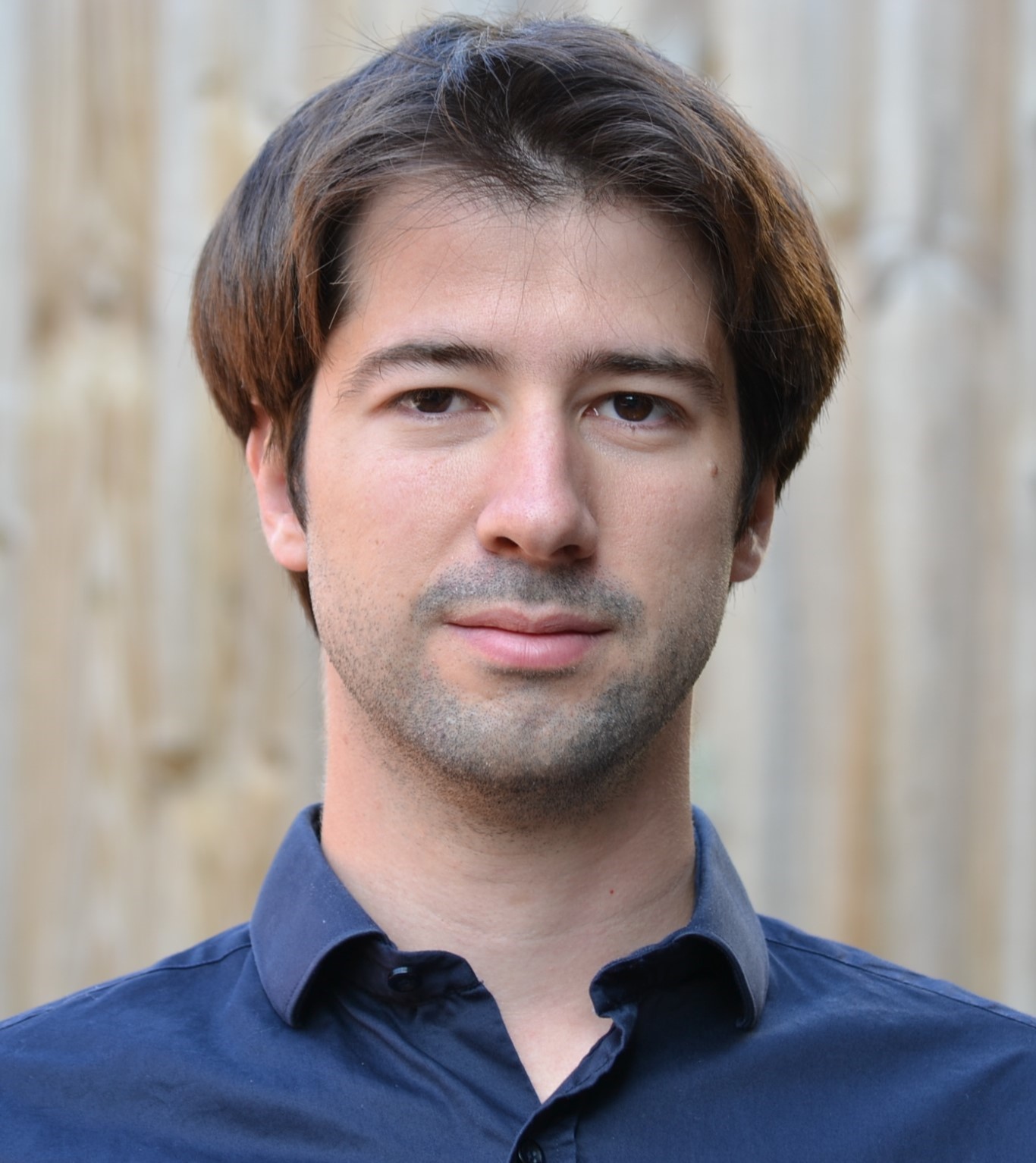}}]
{Thiago B. Burghi} 
Is a post-doctoral researcher in the Control Group of the Department of
Engineering at the University of Cambridge. He received the Diplôme d'Ingénieur
from ENSTA ParisTech, France, in 2012, and the B.Sc and M.Sc in Control
Engineering and Mechanical Engineering, respectively, from the University of
Campinas, Brazil, in 2015. He also holds a M.Sc in Robotics from the Pierre and
Marie Curie University (Paris VI), France. Thiago was awarded the Capes-Cambridge
Trust Scholarship to study at the University of Cambridge, UK, where he completed
his Ph.D in 2020. His research interests lie at the interface between nonlinear
control theory, system identification, and biophysical neuronal systems.
\end{IEEEbiography}

\begin{IEEEbiography}[{\includegraphics[width=1in,height=1.25in,clip,keepaspectratio]
{./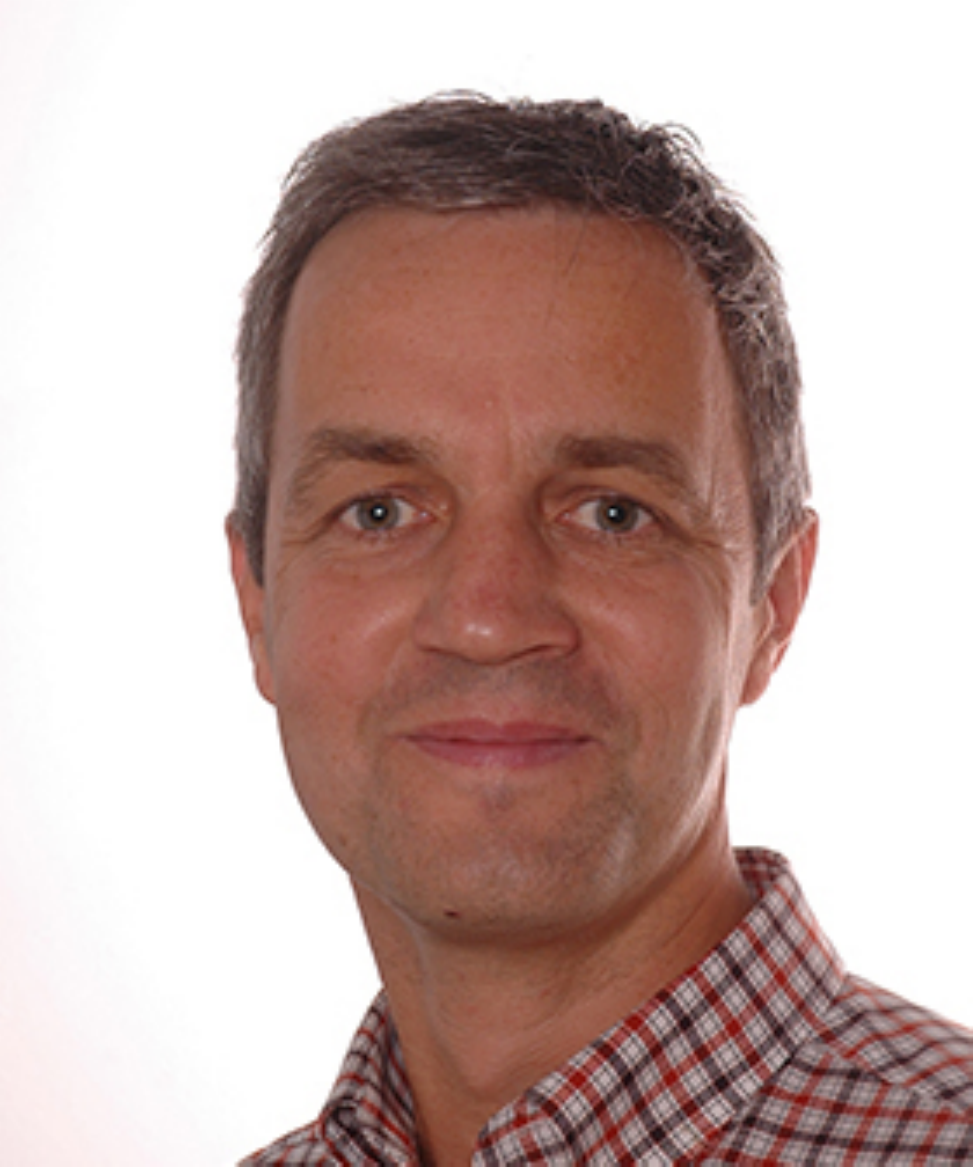}}]
{Rodolphe Sepulchre} 
(M96,SM08,F10) received the engineering degree and the Ph.D. degree from the
Université Catholique de Louvain in 1990 and in 1994, respectively.  He is
Professor of Engineering at the University of Cambridge since 2013.  His research
interests are in nonlinear control and optimization, and more recently
neuromorphic control.  He co-authored the monographs ``Constructive Nonlinear
Control'' (Springer-Verlag, 1997) and ``Optimization on Matrix Manifolds''
(Princeton University Press, 2008). He is Editor-in-Chief of IEEE Control Systems.
He is a recipient of  the IEEE CSS Antonio Ruberti Young Researcher Prize (2008)
and of the IEEE CSS George S. Axelby Outstanding Paper Award (2020). He is a
fellow of IEEE, IFAC,  and SIAM. He has been IEEE CSS distinguished lecturer 
between 2010 and 2015. In 2013, he was elected at the Royal Academy of Belgium.
\end{IEEEbiography}

\end{document}